\Crefname{lemma}{Lemma}{Lemmas}
\Crefname{fact}{Fact}{Facts}
\Crefname{theorem}{Theorem}{Theorems}
\Crefname{corollary}{Corollary}{Corollaries}
\Crefname{claim}{Claim}{Claims}
\Crefname{example}{Example}{Examples}
\Crefname{problem}{Problem}{Problems}
\Crefname{definition}{Definition}{Definitions}
\Crefname{assumption}{Assumption}{Assumptions}
\Crefname{subsection}{Subsection}{Subsections}
\Crefname{section}{Section}{Sections}
\newtheorem{theorem}{Theorem}[section]
\newtheorem*{theorem*}{Theorem}
\newtheorem{proposition}[theorem]{Proposition}
\newtheorem*{proposition*}{Proposition}
\newtheorem{lemma}[theorem]{Lemma}
\newtheorem*{lemma*}{Lemma}
\newtheorem{corollary}[theorem]{Corollary}
\newtheorem*{conjecture*}{Conjecture}
\newtheorem{fact}[theorem]{Fact}
\newtheorem*{fact*}{Fact}
\newtheorem*{exercise*}{Exercise}
\newtheorem*{hypothesis*}{Hypothesis}
\theoremstyle{definition}
\newtheorem{definition}[theorem]{Definition}
\newtheorem{construction}[theorem]{Construction}
\newtheorem{exercise-easy}[theorem]{Exercise}
\newtheorem{exercise-med}[theorem]{Exercise}
\newtheorem{exercise-hard}[theorem]{Exercise$^\star$}
\newtheorem{claim}[theorem]{Claim}
\newtheorem*{claim*}{Claim}
\newtheorem{remark}[theorem]{Remark}
\newtheorem*{remark*}{Remark}
\newtheorem*{observation*}{Observation}
\newcommand{\savehyperref}[2]{\texorpdfstring{\hyperref[#1]{#2}}{#2}}
\DeclareSymbolFont{largesymbolsyhmath}{OMX}{yhex}{m}{n}
\DeclareMathAccent{\widehat}{\mathord}{largesymbolsyhmath}{"62}
\DeclareMathAccent{\widetilde}{\mathord}{largesymbolsyhmath}{"65}
\DeclareMathOperator*{\E}{\mathbb E}
\renewcommand{\Pr}{\operatorname*{\mathbf{Pr}}}
\newcommand{\abs}[1]{\left| #1 \right|}
\newcommand{\pbra}[1]{\left( #1 \right)}
\newcommand{\sbra}[1]{\left[ #1 \right]}
\newcommand{\cbra}[1]{\left\{ #1 \right\}}
\newcommand{\bin}{\{0,1\}}
\newcommand{\poly}{\mathrm{poly}}
\newcommand{\ed}{\mathsf{ed}}
\newcommand{\Ham}{\mathsf{Ham}}
\newcommand{\LCS}{\mathsf{LCS}}
\newcommand{\sing}{\mathsf{sing}}
\newcommand{\pred}{\mathsf{pred}}
\newcommand{\sketchx}{{sx}}
\newcommand{\sketchy}{{sy}}
\newcommand{\Qscr}{\mathscr{Q}}
\newcommand{\Nbb}{\mathbb{N}}
\newcommand{\Zbb}{\mathbb{Z}}
\newcommand{\Acal}{\mathcal{A}}
\newcommand{\Ecal}{\mathcal{E}}
\newcommand{\Mcal}{\mathcal{M}}
\newcommand{\Pcal}{\mathcal{P}}
\newcommand{\Scal}{\mathcal{S}}
\newcommand{\Tcal}{\mathcal{T}}
\newcommand{\Zcal}{\mathcal{Z}}
\title{An Improved Sketching Algorithm for Edit Distance}
\author{
Ce Jin\thanks{MIT. \texttt{cejin@mit.edu}. Supported by an Akamai Presidential Fellowship.}
\and
Jelani Nelson\thanks{UC Berkeley. \texttt{minilek@berkeley.edu}. Supported by NSF
    award CCF-1951384, ONR grant N00014-18-1-2562, ONR DORECG award N00014-17-1-2127, and a Google Faculty Research Award.}
\and
Kewen Wu\thanks{UC Berkeley. \texttt{shlw\_kevin@hotmail.com}.}
}
\date{}
\begin{document}
\maketitle

\begin{abstract}
  We provide improved upper bounds for the simultaneous sketching complexity of edit distance. 
Consider two parties, Alice with input $x\in\Sigma^n$ and Bob with input $y\in\Sigma^n$, that share public randomness and are given a promise that the edit distance $\ed(x,y)$ between their two strings is at most some given value $k$. 
  Alice must send a message $\sketchx$ and Bob must send $\sketchy$ to a third party Charlie, who does not know the inputs but shares the same public randomness and also knows $k$. Charlie must output $\ed(x,y)$ precisely as well as a sequence of $\ed(x,y)$ edits required to transform $x$ into $y$. 
  The goal is to minimize the lengths $|\sketchx|, |\sketchy|$ of the messages sent. 
  
  The protocol of Belazzougui and Zhang (FOCS 2016), building upon the random walk method of Chakraborty, Goldenberg, and Kouck\'y (STOC 2016), achieves a maximum message length of $\tilde O(k^8)$ bits, where $\tilde O(\cdot)$ hides $\poly(\log n)$ factors. In this work we build upon Belazzougui and Zhang's protocol and provide an improved analysis demonstrating that a slight modification of their construction achieves a bound of $\tilde O(k^3)$.
\end{abstract}

\section{Introduction}\label{sec:introduction}
The edit distance $\ed(x,y)$ between two strings is defined to be the minimum number of character insertions, deletions, or substitutions required to transform $x$ into $y$. It is one of the most well-studied distance measures on strings, with applications in information retrieval, natural language processing, and bioinformatics. If $x,y$ are each at most length $n$, the textbook Wagner-Fischer algorithm computes $\ed(x,y)$ exactly in $O(n^2)$ time, with the only improvement since being  by a $\log n$ factor due to Masek and Paterson \cite{MasekP80}. It has since been shown that an $O(n^{2-\epsilon})$ time algorithm does not exist for any constant $\epsilon>0$ unless  the \emph{Strong Exponential Time Hypothesis} fails \cite{BackursI18}.
Since the work of \cite{MasekP80}, several subsequent works have considered setups beyond offline exact algorithms for edit distance, such as faster approximation algorithms \cite{AndoniO12,AndoniKO10,ChakrabortyDGKS18,BrakensiekR20,KouckyS20,AndoniN20}, metric embeddings \cite{OstrovskyR07,CharikarK06,KhotN06,KrauthgamerR09}, smoothed complexity \cite{AndoniK12, BoroujeniSS20}, quantum algorithms \cite{BoroujeniEGHS18}, sublinear time algorithms for gap versions \cite{Bar-YossefJKK04,GoldenbergKS19,BrakensiekCR20,KociumakaS20}, and communication complexity and sketching/streaming \cite{ChakrabortyGK16,BelazzouguiZ16,ChengJLW18,Haeupler19,ChengL20}. In this work we focus on communication complexity, and specifically {\it simultaneous communication complexity}.

In the communication model, Alice has input string $x$ and Bob has $y$. They, or a third party, would like to compute $\ed(x,y)$ as well as a minimum length sequence of edits for transforming $x$ into $y$. We consider the setting of shared public randomness amongst all parties. The one-way setting in which Alice sends a single message to Bob, who must then output $\ed(x,y)$, is known as the {\it document exchange problem} and has a long history. In the promise version of the problem for which we are promised $\ed(x,y) \le k$, Orlitsky \cite{Orlitsky91} gave a deterministic protocol in which Alice only sends $O(k\log(n/k))$ bits in the case of binary strings, which is optimal, with the downside that Bob's running time to process her message is exponential. Haeupler recently used public randomness to improve Bob's running time to polynomial with the same asymptotic message length, and it is now known that a polynomial-time recovery algorithm is achievable deterministically if one increases the message length to $O(k\log^2(n/k))$ \cite{ChengJLW18,Haeupler19}. Belazzougui and Zhang \cite{BelazzouguiZ16} studied the harder {\it simultaneous communication} model in which Alice and Bob each send messages to a third party Charlie, who knows neither string but shares knowledge of the public randomness, and Charlie must output $\ed(x,y)$ as well as the edits required to transform $x$ into $y$. In this model they gave a protocol in which each player sends $O(k^8 \log^5 n) = \tilde O(k^8)$ bits.\footnote{We use $\tilde O(f)$ throughout this paper to denote $f\cdot \mathop{\mathrm{polylog}}(n)$.}

\begin{definition}[Problem $\Qscr_{n,k,\delta}$]\label{def:problem}
Alice and Bob and a referee share public randomness. 
Alice (resp., Bob) gets a length-$n$ input string $x$ (resp., $y$) over alphabet $\Sigma$, and then sends a ``sketch'' $\sketchx\in\{0,1\}^*$ (resp., $\sketchy$) to the referee. We say the {\it size} of the sketch is maximum length of strings $\sketchx$ and $\sketchy$.
After receiving the sketches $sx$ and $sy$, 
\begin{itemize}
\item if $\ed(x,y)\le k$, the referee needs to compute $\ed(x,y)$ as well as an optimal edit sequence from $x$ to $y$, with success probability at least $1-\delta$;
\item if $\ed(x,y)>k$, the referee needs to report ``error'', with success probability at least $1-\delta$.
\end{itemize}
\end{definition}

\paragraph{Main contribution.} We build upon and improve techniques developed in \cite{BelazzouguiZ16} to show that a very slight modification of their protocol needs a sketch size of only $\tilde O(k^3)$ bits to solve problem $\Qscr_{n,k,\delta}$. More precisely, the bound is $O(k^3\log^2(n/\delta)\log n)$ bits.\footnote{We remark that both the algorithm of \cite{BelazzouguiZ16} and our improved algorithm are time-efficient, and work in the more restrictive setting where Alice and Bob have only $\poly(k\log (n/\delta))$ memory and receive the input strings in a \emph{streaming} fashion. }


\subsection{Proof Overview}\label{sec:proof_overview}

We provide a high-level description of the previous results \cite{ChakrabortyGK16,BelazzouguiZ16} that we build on, and then briefly describe our new ideas.

\paragraph{CGK random walk.}
The previous sketching result \cite{BelazzouguiZ16} uses a random walk technique developed in \cite{ChakrabortyGK16}. Given two input strings $x,y$ of length $n$, we append them with infinitely many zeros and initialize two pointers $i=1,j=1$. 
In each step $t$, we first append $x[i]$ to Alice's output tape (and append $y[j]$ to Bob's output tape), and then increment $i$ by $r_t(x[i]) \in \{0,1\}$, and increment $j$ by $r_t(y[j])\in \{0,1\}$, where $r_t\colon \Sigma \to \{0,1\}$ is a random function. 
The process continues for $3n$ steps and we consider the evolution of $i-j$, i.e., the distance between the two pointers during this random process. 
Observe that when $x[i]\neq y[j]$, the change of $i-j$ is a mean-zero random variable in $\{-1,0,+1\}$ (and we call this a \emph{progress step}); while when $x[i]=y[j]$, the difference $i-j$ will not change. 

The main result of \cite{ChakrabortyGK16} shows that the number of progress steps in this random process is at least $\ed(x,y)/2$, and at most $O(\ed(x,y)^2)$ with constant probability. 
This property was used to design a sketching protocol (with public randomness for generating $r_t$) for estimating $\ed(x,y)$ up to a quadratic factor error by applying an approximate Hamming distance sketching protocol to the two strings generated by the random walk (where a progress step corresponds to a Hamming mismatch between Alice's and Bob's output strings).

\paragraph{\cite{BelazzouguiZ16} algorithm.}
The key idea of \cite{BelazzouguiZ16} is the following. A CGK random walk naturally induces a non-intersecting matching between the input strings: we view $x$ and $y$ as a bipartite graph, where if $(i,j)$ is an edge then $x[i]=y[j]$ and $i,j$ are the pointers in some step of the walk.
In particular, this matching can be viewed as an edit sequence where a character is unchanged if it is matched.

Using an exact Hamming sketch protocol (with sketch size near-linear in the number of Hamming errors), the referee can recover this matching, as well as all the unmatched characters. Although this matching may not correspond to an edit sequence of optimal length, \cite{BelazzouguiZ16} shows: suppose we obtain \emph{multiple} matchings by running i.i.d.~CGK random walks. Then, 
\begin{enumerate}[label=(\alph*)]
    \item if the \emph{intersection} of their matched edges is contained in an optimal matching, then one can extract enough information from the matchings and unmatched characters to recover an optimal edit sequence using dynamic programming;
    \item if we generate $\poly(\ed(x,y),\log n)$ many i.i.d.~CGK random walks, then the precondition of Item (a) is satisfied with constant probability.
\end{enumerate} 

\paragraph{Our improvements.} 
We obtain our result by improving the dependence on $\ed(x,y)$ in Item (b) described above. In particular, we reduce the number of required random walks. Our improvements come from two parts.

To obtain the first improvement, we observe that \cite{BelazzouguiZ16}'s algorithm relies on the following two events happening. The first is that, for every edge that does appear in  a (fixed) optimal matching, there should be one of the sampled CGK random walks that misses this edge. The second is that the CGK random walks should have few progress steps. 
In \cite{BelazzouguiZ16}, they pay a union bound over the two events to make sure \emph{all CGK random walks} are good for the decoder. This introduces a large dependence on $\ed(x,y)$, mainly due to the fact that the number of  Hamming errors in a CGK random walk has a heavy-tailed distribution.
We manage to avoid this by arguing that these two events happen simultaneously (see \Cref{lem:key_lemma}) with decent probability, and then modifying the decoding algorithm to only consider \emph{those good CGK random walks}.

The second improvement comes from improved analysis for Item (b), which depends on the following property of the CGK random walk \cite[Lemma 16]{BelazzouguiZ16} (see \Cref{sec:proof_of_lem:key_lemma_small_gap} for how this property can be used): informally, if a string $X[1..L]$ has a certain kind of self-similarity (for example, it is periodic), then with some nontrivial probability, a CGK random walk on $X$ itself starting with two pointers $i=2,j=1$ will not pass through the state $(i=L,j=L)$. To be more precise, if there is a non-intersecting matching between $X[1..L]$ and itself, where every matched edge $(I,J)$ satisfies $I>J$, and the number of singletons (unmatched characters) is at most $K$, then the CGK random walk will miss $(i=L,j=L)$ with $\Omega(1/K^2)$ probability.\footnote{There is a subtle gap in the proof of \cite[Lemma 16]{BelazzouguiZ16}. On page 18 of their full version, they bounded the number of progress steps in two cases: (1) at least one of the pointers is not in any cluster, and (2) both of the two pointers are in the same cluster. (Their terminology \emph{cluster} refers to a contiguous sequence of matched edges with no singletons in-between.) However, they did not analyze the case where the two pointers are separated in different clusters, and it was not clear to us how to repair that gap using the techniques developed in \cite{BelazzouguiZ16}.}

We use a more technical analysis to improve the bound to $\Omega(1/K)$ (see \Cref{prop:bounds_on_rho}). Now we informally describe our main idea. Starting from the state $(i=2,j=1)$, with at least $\Omega(1/K)$ probability it will first reach a state $(i,j)$ with $i-j>d_0=\Theta(K)$ before reaching $i-j=0$ (note that $i-j$ can never become negative). Then we will show that with good probability $i-j$ will remain in the range $[d_0/2, 3d_0/2]$. To do this, we show an $O(d_0^2)$ upper bound on the expected total number of progress steps, and use the fact that the expected deviation produced by a $P$-step one-dimensional random walk is $O(\sqrt{P})$. 

To bound the expected total number of progress steps, we divide the evolution of the state $(i,j)$ into several phases, where in each phase the pointers move from a \emph{stable state} $(i,j)$ to another \emph{stable state} $(i',j')$, satisfying $j'\ge i$ and $i'\ge 2j-i$. Here, a \emph{stable state} $(i,j)$ informally means that we have a good upper bound of $\ed(X[j..i-1],X[i..2i-j-1])$ in terms of the number of singletons in the range $[j..2i-j-1]$ (for example, if $X$ is ``close'' to a string with period $p$, and $i-j$ is approximately a multiple of $p$, then $(i,j)$ is a stable state).
We will bound the expected number of progress steps in one phase by $O\pbra{(i-j)\cdot S+ S^2}$, where $S$ denotes the number of singletons in the range $[j..i'-1]$. We can see the sum of $S$ over all phases is at most $2K$ since each singleton is counted at most twice. Hence, summing up over all phases would give the desired $O(K^2)$ upper bound, if we assume $i-j = \Theta(d_0)$.
Although this assumption may lead to circular reasoning, we can get around this issue by a more careful argument.

\paragraph{Organization.} We give several needed definitions in \Cref{sec:preliminary}. In \Cref{sec:sketches_for_edit_distance} we state and analyze our sketching algorithm, which as mentioned, is mostly similar to \cite{BelazzouguiZ16} but with small modifications. \Cref{sec:bounds_on_rho} is devoted to our main technical lemma. In comparison with the proof overview, \Cref{sec:sketches_for_edit_distance} is for the first improvement and \Cref{sec:bounds_on_rho} is for the second improvement. Then we discuss limits on our approach and further problems in \Cref{sec:discussions}. The lower bounds and some of the technical proofs are deferred to the appendix.
\section{Preliminaries}\label{sec:preliminary}

In this section we introduce formal definitions.

\subsection{Notations}
Let $[n]$ denote $\{1,2,\dots,n\}$, and let $[l..r]$ denote $\{l,l+1,\dots,r\}$. Let $\circ$ denote string concatenation.
Let $\Nbb$ denote the set of natural numbers $\cbra{0,1,\ldots}$.
We consider sketching protocols for strings in $\Sigma^n$ in this work, where $\Sigma$ denotes the alphabet. We assume $|\Sigma| \le \poly(n)$ and $0\in\Sigma$.\footnote{For larger alphabet the algorithm still works but some $\log n$ terms in the bounds become $\log|\Sigma|$. For example, the sketch size will be $O\pbra{k^3\log(n|\Sigma|/\delta)\log(n/\delta)\log n}$. Alternatively, the parties can hash $\Sigma$ into a new alphabet of size $O(n^2/\gamma)$ and have no hash collisions on the characters appearing in $x,y$ with probability at least $1-\gamma$.}

For a string $s \in \Sigma^{n}$ and index $1\le i \le n$, $s[i]$ (or sometimes $s_i$) denotes the $i$-th character of $s$. For $1\le i \le j\le n$, $s[i..j]$ denotes the substring $s[i]\circ s[i+1]\circ\cdots\circ s[j]$. If $i>j$ then $s[i..j]$ is the empty string.

\subsection{Edit Distance}
\begin{definition}[Edit distance $\ed(\cdot,\cdot)$]
The \emph{edit distance} between two strings $x$ and $y$, denoted by $\ed(x,y)$, is the minimum number of edits (insertions, deletions, and substitutions\footnote{There is another definition of edit distance, denoted by $\ed'(x,y)$, where only insertions and deletions are allowed. 
We have $\ed(x,y) \le \ed'(x,y)\le 2\cdot \ed(x,y)$, and $\ed'(x,y) = |x|+|y|-\LCS(x,y)$, where $\LCS$ stands for \emph{longest common subsequence}.
The algorithm in \cite{BelazzouguiZ16}, as well as our modification of it, can be easily adapted to work for this variant of edit distance as well.
}) required to transform $x$ to $y$.
\end{definition}

We note the following simple facts about edit distance. 
\begin{fact}\label{fct:length_to_edit}
Let $x$ and $y$ be two strings of length $n$ and $m$ respectively. Then $\ed(x,y)\ge|n-m|$.
\end{fact}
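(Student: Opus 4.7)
The plan is to prove this by tracking how each edit operation affects the length of the intermediate string. First I would set up the argument by considering an optimal edit sequence transforming $x$ into $y$, consisting of $\ed(x,y)$ operations in total. Let $a$ denote the number of insertions, $b$ the number of deletions, and $c$ the number of substitutions used in this sequence, so that $a+b+c = \ed(x,y)$.

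Next I would observe the effect of each operation on string length: an insertion increases the length by $1$, a deletion decreases the length by $1$, and a substitution leaves the length unchanged. Since the sequence starts at a string of length $n$ and ends at a string of length $m$, the net length change must equal $m-n$. Therefore $a - b = m - n$, which gives $|n-m| = |a-b| \le a+b \le a+b+c = \ed(x,y)$.

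The main step is really just the triangle-inequality-style bound $|a-b|\le a+b$; there is no substantial obstacle here. If desired, one could formalize ``effect on length'' by induction on the number of operations applied so far, showing that after $t$ operations the length lies in $[n-t, n+t]$, and then specializing to $t = \ed(x,y)$ yields the bound. Either phrasing gives the fact in a couple of lines.
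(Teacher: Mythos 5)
Your proof is correct. The paper states this as a \texttt{Fact} without proof (it is treated as elementary), and your argument is the standard one: tracking the net effect of insertions, deletions, and substitutions on string length and applying the triangle inequality $|a-b|\le a+b$. Nothing to add.
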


\begin{proposition}\label{prop:edit_dist}
Let $x,y$ be two length-$n$ strings. Let $x'$ be any (not necessarily contiguous) subsequence of $x$. Then $\ed(x,y) \le 2\cdot \ed(x',y)$.
\end{proposition}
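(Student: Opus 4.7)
The plan is to use the triangle inequality for edit distance together with \Cref{fct:length_to_edit}. Let $n' = |x'|$, so $n' \le n$ since $x'$ is a subsequence of $x$.

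First I would observe that since $x'$ is obtained from $x$ by deleting some $n - n'$ characters, we have $\ed(x, x') \le n - n'$, as deletions alone suffice to transform $x$ into $x'$.

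Next, by the triangle inequality (which holds since edit distance is a metric),
\[
\ed(x, y) \le \ed(x, x') + \ed(x', y) \le (n - n') + \ed(x', y).
\]
Finally, since $|x'| = n'$ and $|y| = n$, \Cref{fct:length_to_edit} yields $\ed(x', y) \ge n - n'$. Plugging this into the above inequality gives $\ed(x, y) \le \ed(x', y) + \ed(x', y) = 2 \cdot \ed(x', y)$, as desired.

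There is no real obstacle here; the only thing to be careful about is confirming the triangle inequality holds for this version of edit distance (which it does, as insertions, deletions, and substitutions form a symmetric edit set, making $\ed$ a metric on $\Sigma^*$).
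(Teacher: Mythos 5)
Your proof is correct. It differs from the paper's: the paper deduces the inequality from the longest-common-subsequence characterization, writing $\ed(x',y)\ge n-\LCS(x',y)\ge n-\LCS(x,y)$ and $\ed(x,y)\le 2\cdot(n-\LCS(x,y))$ (the latter because $\ed(x,y)\le\ed'(x,y)=2n-2\LCS(x,y)$ when $|x|=|y|=n$). You instead decompose via the triangle inequality through the intermediate string $x'$, using $\ed(x,x')\le n-n'$ (achievable by deletions alone) and then invoke \Cref{fct:length_to_edit} to absorb that $n-n'$ term into $\ed(x',y)$. Both arguments are short and correct; yours is slightly more elementary in that it only uses the metric property of $\ed$ and the length-difference lower bound already recorded as \Cref{fct:length_to_edit}, whereas the paper's proof leans on the $\LCS$ identity for insertion/deletion edit distance. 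Your phrasing does implicitly assume the triangle inequality for $\ed$, which the paper does not formally state anywhere, though it is a standard and easily verified fact; you correctly flag this as the one point to be careful about.
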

\begin{proof}
Since $\ed(x',y)\ge n-\LCS(x',y)\ge n-\LCS(x,y)$, we have $\ed(x,y)\le 2\cdot(n-\LCS(x,y))\le2\cdot\ed(x',y)$.
\end{proof}

\begin{definition}[Matching induced by edit sequence $\Mcal(S)$]
Given strings $x,y$ and an edit sequence $S$, we can construct a bipartite graph between $x$ and $y$, where every character in $x$ that is not substituted nor deleted is connected by an edge to its counterpart in $y$.
These edges form a non-intersecting matching, which we denote by $\Mcal(S)$.
Moreover, when $S$ achieves optimal edit distance, we say $\Mcal(S)$ is an \emph{optimal matching}.
\end{definition}

We show the following properties of an optimal matching, the proof of which is deferred to \Cref{app:optimal_matching}.

\begin{lemma}\label{lem:optimal_matching}
Let $x,y$ be two strings.
Let $S$ be an optimal edit sequence and $\Mcal(S)$ be its corresponding optimal matching.
\begin{enumerate}[label=(\arabic*)]
\item If $(i,j)\in\Mcal(S)$, then $|i-j|\le\ed(x,y)$.
\item If $u'\le u$ and $v'\le v$ and $u-u'+1=v-v'+1=:L$, then the number of matched edges with both endpoints in $x[u'..u]$ and $y[v'..v]$ is at least $L-3\cdot\ed(x,y)-|u-v|$, i.e.,
$$
\abs{\Mcal(S)\cap\big([u'..u]\times[v'..v]\big)}\ge L-3\cdot\ed(x,y)-|u-v|.
$$
\end{enumerate}
\end{lemma}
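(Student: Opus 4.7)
For part (1), I would analyze $S$ restricted to the prefixes $x[1..i-1]$ and $y[1..j-1]$. Since $\Mcal(S)$ is non-intersecting and $(i,j)\in\Mcal(S)$, every matched edge with $x$-endpoint in $[1..i-1]$ must have its $y$-endpoint in $[1..j-1]$, so the portion of $S$ that converts $x[1..i-1]$ into $y[1..j-1]$ is well-defined. Writing $a$, $b$, $c$ for the numbers of deletions, insertions, and substitutions in this portion, the number of matched positions is $(i-1) - a - c$ on the $x$-side and $(j-1) - b - c$ on the $y$-side, and these two counts coincide because matched positions are in bijection. Rearranging gives $i - j = a - b$, and hence $|i - j| \le a + b \le |S| = \ed(x,y)$.

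For part (2), I would combine part (1) with a simple counting argument. Set $k = \ed(x,y)$ and $d = v - u = v' - u'$. Because at most $k$ positions in $x[u'..u]$ can be deleted or substituted by $S$, at least $L - k$ positions of $x[u'..u]$ are matched by $\Mcal(S)$. To control how many of these matches leave the rectangle, I apply part (1): if $(i,j)\in\Mcal(S)$ with $i\in[u'..u]$ and $j < v'$, then $j \ge i - k \ge u' - k$, so $j$ can take at most $\max(0,\, v' - u' + k) = \max(0,\, d+k)$ values; symmetrically, if $j > v$, then $j \le u + k$, giving at most $\max(0,\, k - d)$ possibilities. Since $\Mcal(S)$ is a matching, each such value of $j$ corresponds to at most one matched edge escaping $[v'..v]$, so
\[
\abs{\Mcal(S)\cap\big([u'..u]\times[v'..v]\big)} \;\ge\; L - k - \max(0,\,d+k) - \max(0,\,k-d).
\]
A quick case split on whether $|d| \le k$ shows that the right-hand side equals either $L - 3k$ or $L - 2k - |d|$, both of which are at least $L - 3k - |u-v|$, finishing the argument.

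The only delicate point is the claim that the restricted edit sequence in part (1) is genuinely well-defined; this is exactly where the non-intersecting structure of $\Mcal(S)$ enters, via the observation that all edges strictly to the left of the matched pair $(i,j)$ must also lie strictly above it in the $y$-coordinate. After that step, both statements reduce to elementary arithmetic, and I foresee no further obstacle.
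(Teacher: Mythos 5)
Your proof is correct, and part~(1) is argued by a genuinely different (and arguably cleaner) mechanism than the paper's. The paper proves $|i-j|\le\ed(x,y)$ dynamically: it tracks the indices $(i_t,j_t)$ of the matched pair as the $k$ edits of $S$ are applied one by one, observes that each edit shifts $i_t-j_t$ by at most $1$, and concludes from $i_k=j_k$ that $|i_0-j_0|\le k$. You instead give a static counting argument on prefixes: using the non-crossing property to split the alignment at $(i,j)$, you equate the number of matched characters in $x[1..i-1]$ and $y[1..j-1]$ and read off $i-j=a-b$ directly. Both are short and correct; your version avoids introducing the auxiliary time-indexed process and makes the role of the non-crossing structure more explicit. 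For part~(2) the two arguments are closely parallel. The paper (after a WLOG $u\ge v$) counts the $y$-indices $j\in[u'+k..v-k]$, uses part~(1) to force the corresponding $x$-endpoint into $[u'..u]$, and subtracts $k$ for unmatched $y$-characters. You count the matched $x$-positions in $[u'..u]$ (at least $L-k$ of them) and then use part~(1) to bound from above how many of their $y$-endpoints can escape $[v'..v]$ on either side; your bound $L-k-\max(0,d+k)-\max(0,k-d)$ is in fact slightly sharper than $L-3k-|u-v|$ when $|d|>k$, and the case analysis correctly collapses to the stated inequality. No gaps.
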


Though there may be multiple optimal matchings, the following definition specifies a canonical one.

\begin{definition}[Greedy optimal matching $\Mcal$, \cite{BelazzouguiZ16}]
Let $x,y$ be two strings. For each edit sequence $S$ achieving optimal edit distance, let $\Mcal(S)$ be the matching induced by $S$.
Then the \emph{greedy optimal matching $\Mcal$} is defined to be the smallest $\Mcal(S)$ in lexicographical order. Specifically, we represent $\Mcal(S)$ as a sequence of $(i,j)$ pairs then sort the sequence lexicographically, and the greedy optimal matching is such that this sorted sequence is as lexicographically small as possible.
\end{definition}

This greedy optimal matching enjoys some extra properties, which can be easily proved.

\begin{lemma}[\cite{BelazzouguiZ16}]\label{lem:greedy_optimal_matching}
Let $x,y$ be two strings and $\Mcal$ be their greedy optimal matching. 
\begin{enumerate}[label=(\arabic*)]
\item If $(i,j)\in\Mcal$ and $x[i+1]=y[j+1]$, then $(i+1,j+1)\in\Mcal$.
\item If $x[u'..u]=y[v'..v]$ and $(i,j),(i',j')\in\Mcal\cap \big([u'..u]\times[v'..v]\big)$ are two matched edges in $x[u'..u],y[v'..v]$, then $((i-j)-(u-v))\cdot((i'-j')-(u-v))\ge0$. Moreover, when the equality holds we have $(u,v)\in\Mcal$.
\end{enumerate}
\end{lemma}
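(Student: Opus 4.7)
The plan is to prove both parts by exchange arguments on the matching $\Mcal$. Assuming the conclusion fails, I construct an alternative matching $\Mcal'$ that is either strictly cheaper (contradicting optimality of $\Mcal$) or of the same cost but with a lex-smaller sorted sequence (contradicting greediness). The main tool is the gap-cost formula
\[
\mathrm{cost}(\Mcal)=\sum_{s=0}^{M}\max(i_{s+1}-i_s-1,\,j_{s+1}-j_s-1)
\]
over consecutive matched pairs (with boundary conventions $(i_0,j_0)=(0,0)$ and $(i_{M+1},j_{M+1})=(|x|+1,|y|+1)$), combined with the elementary inequality $\max(a,b)+\max(c,d)\ge\max(a+c,b+d)$ for $c,d\ge 0$.

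For Part~(1), suppose $(i,j)\in\Mcal$ and $x[i+1]=y[j+1]$ yet $(i+1,j+1)\notin\Mcal$. Case-split on how $x[i+1]$ and $y[j+1]$ appear in $\Mcal$. If both are unmatched, adjoining $(i+1,j+1)$ preserves non-intersection (the new edge sits adjacent to $(i,j)$) and strictly reduces cost by collapsing a gap of length at least one into two zero-length gaps, contradicting optimality. If exactly one is matched, say $(i+1,j_0)\in\Mcal$ with $j_0>j+1$ (forced by non-intersection), then swapping $(i+1,j_0)\mapsto(i+1,j+1)$ keeps the size; the $\max$-inequality implies the total cost weakly decreases, so $\Mcal'$ stays optimal, yet its sorted sequence is lex-smaller (since $(i+1,j+1)<(i+1,j_0)$), contradicting greediness. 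The symmetric case is identical, while the remaining case where both $x[i+1]$ and $y[j+1]$ are matched to distinct partners is ruled out by a direct non-intersection violation between those partner edges.

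For Part~(2), set $d=u-v$ and suppose for contradiction $(i,j),(i',j')\in\Mcal\cap([u'..u]\times[v'..v])$ satisfy $i<i'$, $j<j'$, and $(i-j-d)(i'-j'-d)<0$; WLOG $i-j>d>i'-j'$, so $j<i-d$ and $j'>i'-d$. Inside the sub-block $[i..i']\times[j..j']$, the diagonal $D=\{(a,a-d):a\in[i..i']\}$ is a valid matching by block equality $x[a]=y[a-d]$. Replace all edges of $\Mcal$ with $x$-coordinate in $[i..i']$ by $D$: this is valid because any external edge $(a,b)\in\Mcal$ with $a<i$ has $b<j<i-d$ and any with $a>i'$ has $b>j'>i'-d$, so the $y$-range $[i-d..i'-d]$ used by $D$ is unoccupied. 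A direct gap-formula computation shows the internal cost falls by at least $(j'-j)-(i'-i)-1\ge 1$, while the boundary changes at the entry and exit of the sub-block are dominated by this saving except in one degenerate sub-case ($i'=i+1$ with tightly aligned boundary edges); in that sub-case, I additionally adjoin the diagonal edges $(j+d,j)$ and $(j'+d,j')$, valid by block equality and possibly preceded by a finite chain of local swaps freeing $x[j+d]$ and $x[j'+d]$, yielding a strict matching-size increase and hence a cheaper matching. Either way, $\Mcal$ fails to be optimal, a contradiction. For the equality assertion, if some edge of $\Mcal\cap([u'..u]\times[v'..v])$ satisfies $i-j=d$, write it as $(u'+t,v'+t)$ and iterate Part~(1) (using $x[u'+t+s]=y[v'+t+s]$ from block equality) to conclude $(u'+t+s,v'+t+s)\in\Mcal$ for every $0\le s\le L-1-t$; in particular $(u,v)\in\Mcal$.

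The main technical obstacle is the cost comparison in Part~(2) at the sub-block's boundary: the internal savings from diagonalization must dominate the potential increases at the entry and exit gaps, which reduces to careful $\max$-algebra on the widths and heights of those gaps, and in the degenerate sub-case the additional adjoinment requires tracking a finite chain of local swaps to free the external $x$-positions $j+d$ and $j'+d$. Ensuring termination of this chain within the finite strings, and verifying non-intersection at each step, is where the bookkeeping is heaviest; the remaining steps follow routinely from the exchange principle combined with the block-equality identity $x[a]=y[a-d]$.
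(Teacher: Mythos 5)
Your argument for Part~(1), and for the ``equality implies $(u,v)\in\Mcal$'' clause of Part~(2) (iterating Part~(1) along the block), are essentially fine. The gap is in the main exchange of Part~(2). Under your WLOG $i-j>d>i'-j'$, the diagonal replacement $\Mcal'$ changes the edge at $x$-coordinate $i$ from $(i,j)$ to $(i,i-d)$ with $i-d>j$, so the sorted sequence of $\Mcal'$ is lexicographically \emph{larger} than that of $\Mcal$, not smaller; greediness therefore cannot dispatch the equal-cost case, which is exactly the case you leave to it. And the equal-cost case genuinely arises: the internal cost drops by exactly $(j'-j)-(i'-i)$ when every $x$-position in $[i..i']$ is matched and all internal gaps are $y$-dominated, which coincides with the maximum possible total boundary increase $\bigl((i-d)-j\bigr)+\bigl(j'-(i'-d)\bigr)$. (Your stated internal-saving lower bound of $(j'-j)-(i'-i)-1$ is also off by one and, as written, is dominated by that boundary increase.) The ``degenerate sub-case'' patch---adjoining $(j+d,j)$, $(j'+d,j')$ after ``a finite chain of local swaps''---is not actually carried out, and it rests on the false principle that a strict increase in matching size forces a cheaper matching.

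A repair that keeps your exchange philosophy but avoids all of this: find a \emph{consecutive} pair $(a_l,b_l),(a_{l+1},b_{l+1})\in\Mcal$ inside the block with $a_l-b_l\ge d>a_{l+1}-b_{l+1}$ (such a pair exists along the chain of $\Mcal$-edges between $(i,j)$ and $(i',j')$), and slide \emph{only the second edge} onto the block diagonal, replacing it by $(a_{l+1},a_{l+1}-d)$; in the mirror case $a_l-b_l\le d<a_{l+1}-b_{l+1}$ replace it by $(b_{l+1}+d,b_{l+1})$. The target coordinate lies strictly inside the open gap to the neighbouring edges, so the new matching is valid and non-intersecting; a short gap-cost calculation shows the cost does not increase; and the sorted sequence is now strictly lexicographically smaller, so greediness cleanly handles the equal-cost case.
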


\subsection{The CGK Random Walk}
We review a useful random process called the \emph{CGK random walk}, which was first introduced by Chakraborty, Goldenberg, and Kouck{\'{y}} \cite{ChakrabortyGK16}, and played a central role in the sketching algorithm of \cite{BelazzouguiZ16}.

\begin{definition}[CGK random walk $\lambda_r(s)$, \cite{ChakrabortyGK16}]
\label{def:cgk}
Given a string $s\in \Sigma^n$, an integer $m\ge 0$, and a sequence of $m\cdot |\Sigma|$ random coins interpreted as a random function $r\colon [m] \times \Sigma \to \{0,1\}$, the $m$-step CGK random walk is a length-$m$ string $\lambda_r(s) \in \Sigma^m$ defined by the following process:
\begin{itemize}
\item Append $s$ with infinitely many zeros.
\item Initialize the pointer $p\gets1$ and the output string $s'\gets\emptyset$.
\item For each step $i=1,\ldots,m$:
\begin{itemize}
\item Append $s[p]$ to $s'$.
\item Update $p\gets p+r(i,s[p])$.
\end{itemize}
\item Output $s'=: \lambda_r(s)$.
\end{itemize}
For a contiguous segment of the output string $\lambda_r(s)$, the \emph{pre-image} of this segment refers to the corresponding substring in the original input string $s$ (which may also include the appended trailing zeros if the walk extends beyond $s$).

Due to its usefulness in the two-party setting with public randomness, we also frequently use the term \emph{CGK random walk} to refer to a \emph{pair} of random walks (as defined in \Cref{def:cgk}) performed on two input strings $x,y$ using the \emph{shared} random string $r$.

Consider a CGK random walk $\lambda$ on two input strings $x,y$. 
We use $p_i$ (resp., $q_i$) to denote the pointer on string $x$  (resp., $y$) at the beginning of step $i$. We refer to the pair $(p_i,q_i)$ as the \emph{state} of $\lambda$ at the $i$-th step, and  we write $(p,q)\in \lambda$ if $\lambda$ passes through the state $(p,q)$, i.e., there exists some $i$ for which $p_i=p$ and $q_i=q$.
We say the $i$-th step of $\lambda$ is a \emph{progress step} if the $i$-th characters of the output strings $\lambda_r(x)$ and $\lambda_r(y)$ differ, or equivalently, $x[p_i]\neq y[q_i]$.\footnote{Our definition of ``progress step'' is different from that of \cite{BelazzouguiZ16}, which additionally requires at least one of the two pointers moves forward in that step.}
We say $\lambda$ \emph{walks through} $x,y$, if in the end the  two pointers satisfy $p_m\ge |x|$ and $q_m\ge |y|$.
\end{definition}

The following theorem established the connection between CGK random walks and edit distance. Informally, when $\ed(x,y)$ is small, with good probability the number of progress steps in $\lambda$ is also small (or equivalently, the Hamming distance between the output strings $\lambda_r(x), \lambda_r(y)$ is small).
\begin{theorem}[{\cite[Theorem 4.1]{ChakrabortyGK16}}]\label{thm:CGK}
Let $\lambda$ be an $m$-step  CGK random walk on $x,y$. Then
\begin{enumerate}[label=(\arabic*)]
\item if $m\ge3\cdot\max\cbra{|x|,|y|}$, then $\lambda$ walks through $x,y$ with probability at least $1-e^{\Omega(m)}$;
\item given $\lambda_r(x)$ and $r$, we can reconstruct the pre-image
of $\lambda_r(x)$ ;
\item $\Pr\sbra{\#\text{progress steps in $\lambda$}\ge \pbra{T\cdot\ed(x,y)}^2}\le O(1/T)$.
\end{enumerate}
\end{theorem}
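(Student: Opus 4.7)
The plan is to handle the three parts separately, with parts (1) and (2) being essentially routine and part (3) being the substantive content. Throughout, let $p_i, q_i$ denote the pointers on $x$ and $y$ at step $i$, and let $D_i := p_i - q_i$.

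For part (1), I would apply a Chernoff/Hoeffding bound. Fix the pointer on $x$: its position at step $m$ is $p_m = 1 + \sum_{i=1}^{m-1} r(i, x[p_i])$. Conditioned on the trajectory, each contributing bit is an independent fair coin (since each pair $(i, c)$ is queried at most once), so $p_m - 1$ stochastically dominates a $\text{Binomial}(m, 1/2)$ variable. Since $m \ge 3 \max\{|x|,|y|\}$, we have $m/2 \ge 1.5 |x|$, and Chernoff gives $\Pr[p_m < |x|] \le e^{-\Omega(m)}$. A union bound over the two pointers completes part (1). Part (2) is immediate by forward simulation: the decoder reads $\lambda_r(x)$ sequentially, knowing the random function $r$, and at step $i$ it already knows $p_i$ (with $p_1 = 1$) and the output character $x[p_i]$, so it can compute $r(i, x[p_i])$ and hence $p_{i+1}$; the entire sequence $(p_1, p_2, \ldots)$ determines the pre-image of any contiguous segment.

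For part (3), the key observation is the behavior of $D_i$. At a non-progress step we have $x[p_i] = y[q_i]$, so $r(i, x[p_i]) = r(i, y[q_i])$ and $D_{i+1} = D_i$. At a progress step $x[p_i] \ne y[q_i]$, the bits $r(i, x[p_i])$ and $r(i, y[q_i])$ are independent fair coins, so $D_{i+1} - D_i$ takes values in $\{-1, 0, +1\}$ with probabilities $\{1/4, 1/2, 1/4\}$ respectively: a lazy symmetric random walk. Fix an optimal edit sequence with $k = \ed(x,y)$ edits, inducing a matching $\Mcal^*$ between $x$ and $y$ and a reference ``shift function'' that differs by at most $k$ between consecutive matched positions. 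The plan is to argue that the walk can be partitioned into phases separated by the events of entering matched segments at the ``correct'' relative shift; within each phase the walk is in a mismatched (progress) region, and the walk exits as soon as $D$ hits one of a small number of target shifts prescribed by $\Mcal^*$. Summing the expected first-passage times over all phases and using that the targets differ by a total of at most $k$ shift units yields $\mathbb{E}[P] = O(k^2)$, where $P$ denotes the total number of progress steps.

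The main obstacle will be deriving the stated $O(1/T)$ tail bound rather than the $O(1/T^2)$ that a naive Markov's inequality would give from $\mathbb{E}[P] = O(k^2)$; the $O(1/T)$ form suggests that the sharper argument bounds $\mathbb{E}[\sqrt{P}] = O(k)$ (and then applies Markov to $\sqrt{P}$), which corresponds to first-passage intuition that the random walk $D$ needs to drift a total of $O(k)$ units and a lazy symmetric walk of length $P$ has typical range $\Theta(\sqrt{P})$. Concretely, I would define a potential $\Phi_i$ capturing the cumulative unabsorbed shift between the walk and the reference alignment, show that $\sqrt{\Phi_i}$ is a supermartingale on progress steps, and apply the optional stopping theorem at the stopping time when all $k$ edits have been ``consumed''; the bound $\mathbb{E}[\sqrt{P}] \le O(k)$ then follows, and Markov's inequality on $\sqrt{P}$ delivers the desired $\Pr[P \ge (Tk)^2] \le O(1/T)$.
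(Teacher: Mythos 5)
Your treatment of parts (1) and (2) is fine and matches the standard arguments. The substantive issue is in part (3), where your proposal has a genuine gap.

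First, a logical inversion: you write that Markov applied to $\E[P]=O(k^2)$ ``would give'' only $O(1/T^2)$, as if that were weaker than the target $O(1/T)$. In fact Markov gives $\Pr[P\ge (Tk)^2]\le \E[P]/(Tk)^2 = O(1/T^2)$, which is \emph{stronger} than $O(1/T)$; so if $\E[P]=O(k^2)$ held, there would be no obstacle at all. The actual obstacle, which your proposal does not surmount, is that $\E[P]=O(k^2)$ is \emph{false}: the distribution of $P$ is heavy-tailed (the paper explicitly remarks on this in the proof overview), because progress steps correspond to a one-sided first-passage problem for a symmetric random walk, and such hitting times have infinite expectation. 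With the $m=3n$-step truncation, $\E[P]$ can be as large as $\Theta(k\sqrt{n})$. Consequently your plan of ``summing expected first-passage times over all phases'' cannot work: the individual one-sided first-passage expectations you want to sum are not finite, and the bound $\E\!\sbra{\#\text{steps}} = 2ab$ in \Cref{thm:steps_in_random_walk} is only for a two-sided exit, which is not what the walk needs to do to re-align with a fixed offset. Your fallback $\E\sbra{\sqrt{P}}=O(k)$ is also false: if the tail $\Pr[P>t]\asymp k/\sqrt{t}$ is tight (and it is up to constants), then $\E\sbra{\sqrt{P}}=\Theta(k\log n)$, so a Markov bound on $\sqrt{P}$ only yields $O(\log n/T)$. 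No single moment of $P$ gives the clean $O(1/T)$ via Markov; one must bound the tail of the first-passage time directly.

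The paper's route (\Cref{app:simpler_analysis_of_CGK}) is quite different and avoids expectations entirely. It introduces the longest common subsequence $z$ of $x,y$ as an intermediary and bounds the progress steps of $\lambda$ by those of $\lambda_{x,z}$ plus those of $\lambda_{y,z}$ (triangle inequality on Hamming distance of the walk outputs). In the one-sided problem $\lambda_{x,z}$, because $z$ is a subsequence of $x$, the pointer difference $p-w$ is bounded above by the number of insertions passed so far, and once it equals $k_{x,z}$ no further progress steps occur; hence the number of progress steps in $\lambda_{x,z}$ is dominated by the first-passage time of a lazy symmetric walk to $k_{x,z}$. The tail bound $\Pr[T_0>(Mk)^2]\le O(1/M)$ (\Cref{thm:random_walk_app}) then gives the result directly. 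The key idea you are missing is precisely this reduction to a \emph{single} one-sided hitting time with a known tail, rather than an expectation computed over many phases.
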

We provide a simpler proof for Item (3) of this theorem in \Cref{app:simpler_analysis_of_CGK}.

\subsection{Random Walks}

We frequently relate the CGK random walk to the following one-dimensional random walk.

\begin{definition}[One-dimensional unbiased and self-looped random walk]\label{def:1D_random_walk}
A stochastic process $X=(X_t)_{t\in\Nbb}$ on integers is a one-dimensional unbiased and self-looped random walk if its transition satisfies
$$
X_i=\begin{cases}
X_{i-1}-1 & \text{w.p., $1/4$},\\
X_{i-1} & \text{w.p., $1/2$},\\
X_{i-1}+1 & \text{w.p., $1/4$}.
\end{cases}
$$
\end{definition}

\begin{remark}\label{rmk:progress_step_and_random_walk}
Let $\lambda$ be a CGK random walk on two strings and $(p,q)$ be its state. Define $\Delta=p-q$. Then $\Delta$ can be viewed as a one-dimensional unbiased and self-looped random walk, which makes a transition when and only when $\lambda$ makes a progress step. 
\end{remark}

\begin{fact}[e.g.\ {\cite[Proposition 2.1]{levinmarkov}}]\label{thm:steps_in_random_walk}
Let $a,b$ be two non-negative integers and $X$ be a one-dimensional unbiased and self-looped random walk.
Suppose the walk starts at $X_0=0$ and stops when $(X_i=-a)\lor(X_i=b)$. Then 
\begin{enumerate}[label=(\arabic*)]
\item if $(a,b)\neq(0,0)$, then $\Pr\sbra{X\text{ stops at }b}=a/(a+b)$ and $\Pr\sbra{X\text{ stops at }a}=b/(a+b)$;
\item $\E\sbra{\#\text{steps until }X\text{ stops}}=2\cdot ab$.
\end{enumerate}
\end{fact}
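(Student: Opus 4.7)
The plan is to prove this classical gambler's ruin fact via optional stopping on two standard martingales associated with the lazy symmetric walk $X$.

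First I would verify that the stopping time $\tau := \min\cbra{t : X_t\in\cbra{-a,b}}$ has finite expectation (assuming $(a,b)\neq(0,0)$; otherwise $\tau=0$ and both parts are trivial). From any interior state, the walk reaches the boundary within $a+b$ steps with probability at least $(1/4)^{a+b}$, so $\tau$ has an exponential tail and in particular $\E\sbra{\tau}<\infty$. This is enough to legitimize optional stopping in what follows.

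For Part (1), since each increment $X_{t+1}-X_t$ has mean zero, $(X_t)$ is itself a martingale. Because $X_{t\wedge\tau}$ is bounded in $[-a,b]$, the optional stopping theorem gives $\E\sbra{X_\tau}=X_0=0$. Writing $p:=\Pr\sbra{X_\tau=b}$, this reads $-a(1-p)+bp=0$, so $p=a/(a+b)$ and the complementary event satisfies $\Pr\sbra{X_\tau=-a}=b/(a+b)$. (I read the second equality in the statement as a typo for ``$\Pr\sbra{X\text{ stops at }-a}=b/(a+b)$''.)

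For Part (2), I would use the compensated-variance martingale. A direct calculation from the transitions of \Cref{def:1D_random_walk} gives $\E\sbra{(X_{t+1}-X_t)^2\mid \mathcal{F}_t}=1/2$, so $M_t:=X_t^2-t/2$ is a martingale. Optional stopping (again justified by bounded $X_{t\wedge\tau}$ and $\E\sbra{\tau}<\infty$) then yields $\E\sbra{X_\tau^2}=\E\sbra{\tau}/2$. Plugging in the distribution from Part (1),
\[
\E\sbra{X_\tau^2}=a^2\cdot\frac{b}{a+b}+b^2\cdot\frac{a}{a+b}=ab,
\]
whence $\E\sbra{\tau}=2ab$, as required. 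The only genuine subtlety in the whole argument is the integrability check that licenses optional stopping; once the exponential tail bound on $\tau$ is in hand, the rest is routine algebra.
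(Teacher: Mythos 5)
The paper does not give its own proof of this fact; it is stated with a citation to \cite{levinmarkov} and used as a black box, so there is no in-text argument to compare against. Your martingale proof is correct and is the standard one: the key calculation is that the lazy walk has per-step variance $\E\sbra{(X_{t+1}-X_t)^2}=1/2$, which is exactly why the constant comes out as $2ab$ rather than the $ab$ one would get for the non-lazy simple random walk of \cite[Proposition~2.1]{levinmarkov}. You also correctly flag the typo in the statement (the second probability should be for stopping at $-a$, not $a$). One small point to tighten: when you invoke optional stopping for $M_t=X_t^2-t/2$, ``bounded $X_{t\wedge\tau}$'' alone does not bound $M_{t\wedge\tau}$ because of the $-t/2$ term; the clean justification is that $t\wedge\tau$ is a bounded stopping time so $\E\sbra{M_{t\wedge\tau}}=0$ for every $t$, and then $|M_{t\wedge\tau}|\le\max(a,b)^2+\tau/2$ is dominated by an integrable variable, so dominated convergence gives $\E\sbra{M_\tau}=0$ (equivalently, one may cite the bounded-increments form of the optional stopping theorem together with $\E\sbra{\tau}<\infty$). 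This is a phrasing issue, not a gap; the argument as a whole is sound.
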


By \Cref{rmk:progress_step_and_random_walk} and the martingale property, we have the following lemma, the proof of which is deferred to \Cref{app:CGK_random_walk}.

\begin{lemma}\label{lem:CGK_random_walk}
Consider an $\infty$-step CGK random walk $\lambda$ on $x,y$, where $p,q$ are the pointers on $x,y$ respectively. Let $u$ be an index and let $U,V\ge u-1$ be any integers. Then the following hold.
\begin{enumerate}[label=(\arabic*)]
\item Let $T_0$ be the first time that $p_{T_0}\ge u$. Then $\E\sbra{\abs{p_{T_0}-q_{T_0}}}\le 4\cdot\ed(x[1..U],y[1..V])$.
\item Let $T_1$ be the first time that $(p_{T_1}\ge u)\land(q_{T_1}\ge u)$. Then $\E\sbra{\abs{p_{T_1}-q_{T_1}}}\le 4\cdot\ed(x[1..U],y[1..V])$.
\end{enumerate}
\end{lemma}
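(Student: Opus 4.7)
The plan is to exploit the martingale structure of $D_t := p_t - q_t$ noted in \Cref{rmk:progress_step_and_random_walk}: $D_t$ evolves as a one-dimensional unbiased self-looped random walk whose $\pm 1$ transitions occur only at progress steps of $\lambda$, making $D_t$ a martingale with increments bounded by $1$ and $\abs{D_t}$ a submartingale.

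First I would reduce to a truncated setting by coupling $\lambda$ with the CGK walk $\bar\lambda$ on $\bar x := x[1..U]\circ 0^\infty$ and $\bar y := y[1..V]\circ 0^\infty$ driven by the same random coins $r$. The two walks coincide step-by-step as long as both pointers read from positions where $x,\bar x$ and $y,\bar y$ agree; since $p_t\le u\le U+1$ for all $t\le T_0$ (and likewise for $T_1$), the $x$-pointer always reads valid positions, and an explicit coupling argument handles the auxiliary event that $q_t$ exceeds $V$ before the stopping time. Hence it suffices to prove the analogous bounds for $\bar\lambda$; write $\bar D,\bar p,\bar q$ for the corresponding quantities and reuse the names $T_0,T_1$ for the analogous stopping times on $\bar\lambda$.

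On $\bar\lambda$, once both $\bar p\ge U+1$ and $\bar q\ge V+1$ the characters read on each side are $0$, so both pointers advance together and $\bar D$ is frozen. Let $T_f$ denote this freezing time, which has finite expectation by Item~(1) of \Cref{thm:CGK}. Since $T_0,T_1\le T_f$ a.s., Doob's optional stopping theorem applied to the nonnegative submartingale $\abs{\bar D_t}$ (with bounded increments and $\E[T_f]<\infty$) yields
\[
\E\sbra{\abs{\bar D_{T_0}}},\ \E\sbra{\abs{\bar D_{T_1}}}\ \le\ \E\sbra{\abs{\bar D_{T_f}}},
\]
reducing the task to proving $\E\sbra{\abs{\bar D_{T_f}}}\le 4\cdot\ed(x[1..U],y[1..V])$.

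The main obstacle is this last inequality. My plan is to use an optimal edit sequence from $x[1..U]$ to $y[1..V]$ of length $k:=\ed(x[1..U],y[1..V])$ to partition $\bar\lambda$ into phases: matched characters force the two pointers to traverse in lockstep and therefore contribute nothing to $\bar D$, whereas each of the $O(k)$ unmatched positions (deletions, insertions, substitutions) induces a short bounded $\pm 1$ excursion of $\bar D$ whose expected absolute contribution can be controlled via \Cref{thm:steps_in_random_walk}. Summing the per-edit contributions together with the terminal overshoot $\abs{U-V}\le k$ guaranteed by \Cref{fct:length_to_edit} yields the claimed $4k$ bound, with the constant $4$ absorbing the excursion variance and the boundary overshoot at $T_f$.
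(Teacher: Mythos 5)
Your proposal has two genuine gaps, and the paper's proof follows a structurally different route that avoids both.

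The first gap is in the truncation/coupling step. You reduce to the walk $\bar\lambda$ on $\bar x = x[1..U]\circ 0^\infty$, $\bar y = y[1..V]\circ 0^\infty$, and invoke without detail ``an explicit coupling argument'' to handle the event that $q_t$ exceeds $V$ before $T_0$. But this event is precisely where the reduction breaks: nothing prevents $q$ from racing past $V$ and reading characters of $y$ that differ from $\bar y$'s trailing zeros while $p$ is still below $u$, after which the $y$-pointers of $\lambda$ and $\bar\lambda$ decouple and $|p_{T_0}-q_{T_0}|$ need not equal $|\bar p_{T_0}-\bar q_{T_0}|$. The paper sidesteps this by first stopping at $T$, the first time \emph{either} pointer reaches $u$: since pointers are monotone and advance by at most one per step, all reads strictly before $T$ stay within $x[1..U]$ and $y[1..V]$. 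It then transfers from $T$ to $T_0,T_1$ not through a submartingale inequality but through the conditional identity $\E\sbra{|p_{T_0}-q_{T_0}|\mid p_T,q_T}=|p_T-q_T|$ (and the analogue for $T_1$), exploiting that after $T$ the pointer that reached $u$ first remains weakly ahead.

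The second, more serious gap is the claimed bound $\E\sbra{|\bar D_{T_f}|}\le 4k$ via a phase decomposition keyed to an optimal edit sequence. Matched edges of the alignment do \emph{not} force the CGK pointers into lockstep: a progress step is determined by whether $x[p_t]=y[q_t]$ at the \emph{current} state $(p_t,q_t)$, which need not be a matched edge, and a matched edge helps only when the walk happens to land on both of its endpoints simultaneously. Mismatches can persist across long stretches of matched characters (the number of progress steps is $\Theta(k^2)$ with constant probability and is heavy-tailed), so one cannot decompose the walk into $O(k)$ bounded excursions, one per edit, and \Cref{thm:steps_in_random_walk} alone does not control $\E\sbra{|\bar D_{T_f}|}$. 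The paper instead introduces a \emph{third} CGK walk, on the LCS $z$ of $x[1..U]$ and $y[1..V]$ with pointer $w$, driven by the same randomness. Then $p_t-w_t$ is a martingale, and the CGK nesting property gives $p_T-w_T\le k$ \emph{because} $p_T\le u\le U+1$; hence $\E\sbra{|p_T-w_T|}=2\E\sbra{\max\{p_T-w_T,0\}}\le 2k$, likewise for $q$, and the triangle inequality yields $\E\sbra{|p_T-q_T|}\le 4k$. This is also why the paper stops at $T$ rather than your $T_f$: at $T_f$ one pointer may lie arbitrarily far beyond $U+1$ or $V+1$, which destroys the nesting bound for that pointer.
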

\section{Sketches for Edit Distance}\label{sec:sketches_for_edit_distance}

For the rest of the paper, we use the following notational conventions: 
\begin{itemize}
\item $n$ is the length of the input strings; $m:=3n$ is the number of steps in a CGK random walk.
\item $x,y$ are the input strings of length $n$, which is appended with infinitely many zeros; we are promised $\ed(x,y)\le k$.\footnote{We will also analyze the behaviour of our algorithms when $\ed(x,y)>k$.}
\item when we use $(\cdot,\cdot)$ to denote a CGK state or an edge between $x,y$, the first coordinate is a pointer on $x$ and the second is on $y$.
\item $\Mcal$ is the greedy optimal matching of $x,y$.
\end{itemize}
Our goal is to prove the following theorem.
\begin{theorem}\label{thm:main_theorem}
There exists a sketching algorithm for $\Qscr_{n,k,\delta}$ with sketch size $O\pbra{k^3\log^2(n/\delta)\log n}$ bits. Moreover, the algorithm has the following properties.
\begin{itemize}
\item The encoding algorithm used by Alice (resp., Bob) only assumes one-pass streaming access to the input string $x$ (resp., $y$). The time complexity per character is $\poly(k\log(n/\delta))$, and the space complexity is $O\pbra{k^3\log^2(n/\delta)\log n}$ bits. \footnote{ 
The algorithm may use a large number of shared random bits, which can be reduced using Nisan's generator \cite{Nisan92}. The main cost, as we can see from the proof, comes from the CGK random walk. Hence we refer readers to \cite{ChakrabortyGK16} for more details on reducing randomness for the CGK random walk.
}
\item The decoding algorithm used by the referee has time complexity $\poly(k\log(n/\delta))$.
\end{itemize}
\end{theorem}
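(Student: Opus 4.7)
The plan is to follow the Belazzougui--Zhang framework with two modifications that improve the bound from $\tilde O(k^8)$ down to $\tilde O(k^3)$. Alice and Bob use the shared randomness to generate $N=\Theta(k\log(n/\delta))$ independent CGK walks of length $m=3n$ on their respective inputs, and for each walk $\ell$ send an exact Hamming sketch of their walk's output string. Each such sketch succeeds in recovering the set of mismatched positions (and their character values) whenever the Hamming distance is at most a threshold $H=Ck^2$, and otherwise declares failure; budgeted to fail with probability $\delta/\poly(N,n)$, each sketch uses $O(H\log n\log(n/\delta))$ bits, for a total of $O(N\cdot k^2\log n\log(n/\delta))=O(k^3\log^2(n/\delta)\log n)$ bits as claimed. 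For each walk whose two sketches both succeed---a \emph{good} walk---the referee uses \Cref{thm:CGK}(2) to invert the walk and recover both the walk's induced matching $\Mcal(\lambda^{(\ell)})$ and the positions of all singletons in $x$ and $y$.

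Given the collection of good-walk matchings and singletons, the referee runs the Belazzougui--Zhang dynamic-programming reconstruction. The DP returns an optimal edit sequence whenever the intersection of good-walk matchings is contained in the greedy optimal matching $\Mcal$; otherwise, or when $\ed(x,y)>k$, the DP fails to produce an edit sequence of cost $\le k$ and the referee outputs \emph{error}. It therefore suffices to show that, with probability at least $1-\delta$, for every edge $e\in\bigcup_\ell \Mcal(\lambda^{(\ell)})\setminus\Mcal$ at least one good walk excludes $e$. This is the content of the key lemma (\Cref{lem:key_lemma}), which is the first improvement over \cite{BelazzouguiZ16}: rather than separately union-bounding the events ``walk has $\le H$ progress steps'' and ``walk excludes $e$,'' it directly lower-bounds the joint probability by $\Omega(1/k)$ per walk, using the improved estimate of \Cref{prop:bounds_on_rho}. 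A Chernoff bound over the $N=\Theta(k\log(n/\delta))$ walks then drives the per-edge failure probability down to $\delta/\poly(n)$, and a union bound over the polynomially many candidate edges $e$ closes the argument.

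The main obstacle I anticipate is establishing the combined per-walk bound in \Cref{lem:key_lemma} itself, since the progress count of a single CGK walk has only an inverse-linear tail (\Cref{thm:CGK}(3)) and so cannot tolerate a naive union bound against a separate $\Omega(1/k)$-probability exclusion event. Feeding into this is \Cref{prop:bounds_on_rho}, whose proof requires the phase decomposition into \emph{stable states} sketched in \Cref{sec:proof_overview}: one partitions the self-walk on a near-periodic string into phases, bounds the expected progress contribution of each phase by $O((i-j)\cdot S + S^2)$, and sums over phases using the global singleton budget $\sum S \le 2K$ to obtain $O(K^2)$ total expected progress---at which point the expected $O(\sqrt{P})$ deviation of a $P$-step one-dimensional walk controls how often the pointer gap escapes $[d_0/2,3d_0/2]$. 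The streaming time and space bounds for Alice and Bob then follow from standard implementations of the CGK walk (with Nisan-style derandomization for the shared coin tape, as noted in the theorem's footnote) and of sparse Hamming sketches.
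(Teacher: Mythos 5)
Your proposal follows the same route as the paper: $\Theta(k\log(n/\delta))$ CGK walks, a per-walk exact-Hamming-style sketch supporting recovery up to $\Theta(k^2)$ mismatches (the paper's \Cref{constr:sketch_for_each_random_walk}), the Belazzougui--Zhang DP (\Cref{lem:algorithm_of_referee}), and the joint-probability key lemma (\Cref{lem:key_lemma}) to show that for every non-$\Mcal$ edge some \emph{good} walk excludes it, with the $\Omega(1/k)$ bound coming from \Cref{prop:bounds_on_rho}. Two small imprecisions worth noting: the referee cannot literally invert \Cref{thm:CGK}(2) since it never holds the full output strings---the tree of rolling hashes plus segment-length/boundary indicators in \Cref{constr:sketch_for_each_random_walk} is what actually supplies the pointer information, and what the paper sketches is not a raw Hamming sketch of characters but a sparse-recovery sketch over those annotated tree nodes; and the per-edge amplification is a plain independence calculation $(1-\Omega(1/k))^\tau \le \delta/\poly(n)$ rather than a Chernoff bound, since the target event need only happen for one walk, not a constant fraction.
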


In \Cref{sec:general_framework}, we review the general framework of \cite{BelazzouguiZ16}'s sketching protocol, and highlight our key improvement in \Cref{lem:key_lemma}. We will prove this key lemma in \Cref{sec:proof_of_lem:key_lemma_large_gap} and \Cref{sec:proof_of_lem:key_lemma_small_gap}. In \Cref{sec:sketch-construction} we present the detailed construction of sketches.

\subsection{General Framework}\label{sec:general_framework}

We adopt the definition of \emph{effective alignments} from \cite{BelazzouguiZ16}. Intuitively, an effective alignment between two strings $x,y$ contains the information of an edit sequence from $x$ to $y$, but does not contain the information of unchanged characters.

\begin{definition}[Effective alignment $\Acal$, \cite{BelazzouguiZ16}]\label{def:effective_alignment}
For two strings $x,y\in\Sigma^n$, an \emph{effective alignment} $\Acal$ between $x$ and $y$ is a triplet $(G,g_x,g_y)$, where 
\begin{itemize}
\item $G=(V_x,V_y,E)$ is a bipartite matching where nodes $V_x=[n],V_y=[n]$ correspond to indices of characters in $x$ and $y$ respectively, and every matched edge $(i,j)\in E$ satisfies $x[i]=y[j]$.
Moreover,  the matched edges  are non-intersecting, i.e., for every pair of distinct edges $(i,j),(i',j')\in E$, we have $i<i'$ iff $j<j'$.
\item $g_x$ (resp., $g_y$) is a partial function defined on the set of unmatched nodes $U_x\subseteq V_x$ (resp., $U_y\subseteq V_y$). For each $i\in U_x$ (resp., $j\in U_y$), define $g_x(i)=x[i]$ (resp., $g_y(j)=y[j]$). 
\end{itemize}
\end{definition}

\begin{definition}[Effective alignments consistent with a CGK random walk, \cite{BelazzouguiZ16}]
Let $\lambda$ be a  CGK random walk on $x,y$, where $p,q$ are the pointers on $x$ and $y$ respectively. If $\lambda$ walks through $x,y$, then we say an effective alignment $\Acal=(G,g_x,g_y)$ is \emph{consistent with} $\lambda$ if for every matched edge $(p,q)\in G$, we have  $(p,q)\in\lambda$.
\end{definition}

As mentioned in \Cref{sec:proof_overview}, Alice and Bob use public randomness to instantiate $\tau=O(k\log(n/\delta))$ independent CGK random walks $\lambda_1,\ldots,\lambda_\tau$ on $x,y$. 
Then, for each CGK random walk $\lambda_i$, Alice constructs a sketch $\sketchx_i$ based on her part of the random walk $\lambda_i(x)$, and Bob similarly constructs $\sketchy_i$ based on his part of the random walk $\lambda_i(y)$.
The referee receives $\sketchx_i,\sketchy_i$, and tries to extract an effective alignment $\Acal_i$ from the sketches.
Each $\sketchx_i$ (and $\sketchy_i$) has length $O(k^2\log(n/\delta)\log n)$.
The properties of this protocol are summarized as follows.
\begin{construction}[Sketch for each random walk, adapting \cite{BelazzouguiZ16}]\label{constr:sketch_for_each_random_walk}
Let $C\ge1$ be some large constant and $\eta\in(0,1)$.
There exists an efficient sketching algorithm such that the following holds. Let $\lambda$ be an $m$-step CGK random walk on $x$ (and $y$). Then,
\begin{itemize}
\item the sketch size and encoding space are $O\pbra{k^2\log(n/\eta)\log n}$ bits;
\item the encoding time per character and decoding time are both $\poly(k\log(n/\eta))$;
\item for fixed $\lambda,x,y$ the following hold with success probability at least $1-\eta$:
\begin{itemize}
\item the decoder either (a) reports ``error'', or (b) outputs an effective alignment $\Acal$ consistent with $\lambda$;
\item when $\lambda$ walks through $x,y$ and contains at most $C\cdot k^2$ progress steps, (b) occurs.
\end{itemize}
\end{itemize}
\end{construction}
We present a formal proof of \Cref{constr:sketch_for_each_random_walk} in \Cref{sec:sketch-construction}. 

The final sketches are simply $\sketchx=\sketchx_1\circ\cdots\circ\sketchx_\tau$ and $\sketchy=\sketchy_1\circ\cdots\circ\sketchy_\tau$. The referee tries to obtain an effective alignment from every $(\sketchx_i,\sketchy_i)$, and then uses  the following lemma  to compute $\ed(x,y)$ and recover an optimal edit sequence. 

\begin{lemma}[{\cite[Lemma 14 and Lemma 19]{BelazzouguiZ16}}]\label{lem:algorithm_of_referee}
There exists a deterministic algorithm taking $(\sketchx,\sketchy)$ as input such that the following holds. 
\begin{itemize}
\item The running time of the algorithm is $\poly(|\sketchx|+|\sketchy|)=\poly(k\log(n/\delta))$. 
\item Let $\Acal_{i_1},\ldots,\Acal_{i_w}$ be the effective alignments\footnote{Although we can check if $\Acal_{i_j}$ is an effective alignment, we cannot verify (without knowing $\lambda_{i_j}$) if $\Acal_{i_j}$ is an effective alignment \emph{consistent with $\lambda_{i_j}$}. This subtle difference comes from that in \Cref{constr:sketch_for_each_random_walk} we do not give any guarantee outside the $1-\eta$ success probability, where the decoder might provide some effective alignment that is \emph{not} consistent with $\lambda_{i_j}$.} decoded from $(\sketchx_1,\sketchy_1),\ldots,(\sketchx_\tau,\sketchy_\tau)$.

If $w\ge1$ and each $\Acal_{i_j}$ is consistent with $\lambda_{i_j}$, then the algorithm outputs a valid edit sequence. 
If, additionally, $\Mcal$ goes through all edges that are common to $\Acal_{i_1},\ldots,\Acal_{i_w}$, then the edit sequence is optimal.
\end{itemize}
\end{lemma}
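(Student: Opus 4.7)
The plan is to build a deterministic decoder in three stages. In stage one, for each $i\in[\tau]$ run the decoder of \Cref{constr:sketch_for_each_random_walk} on $(\sketchx_i,\sketchy_i)$ and discard any outputs that report ``error''; this yields the surviving effective alignments $\Acal_{i_1},\ldots,\Acal_{i_w}$ with $w\ge 1$. In stage two, compute $E^*:=\bigcap_{j=1}^w E(\Acal_{i_j})$, the set of matched edges common to every surviving alignment. Because each $E(\Acal_{i_j})$ is non-crossing by \Cref{def:effective_alignment}, so is $E^*$; list its edges in order as $(a_1,b_1)<\cdots<(a_s,b_s)$ and adjoin sentinel anchors $(a_0,b_0):=(0,0)$ and $(a_{s+1},b_{s+1}):=(n+1,n+1)$. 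These anchors partition the task into $s+1$ ``gap subproblems'': optimally align $x[a_t+1..a_{t+1}-1]$ with $y[b_t+1..b_{t+1}-1]$ for each $t\in\{0,\ldots,s\}$.

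In stage three, the decoder solves each gap. Although matched character values inside $\Acal_{i_j}$ are not stored explicitly, pairwise character equalities within a gap can be resolved by combining the information across all $w$ surviving alignments: unmatched indices carry their characters via $g_x^{(j)},g_y^{(j)}$, and every matched edge $(i,j)\in E(\Acal_{i_j})$ provides an equality constraint $x[i]=y[j]$ that can be propagated by union-find on the $2n$ position labels. This gives an oracle for testing $x[i]=y[j]$ among gap positions sufficient to drive a standard edit-distance DP. The decoder runs such a DP in each gap (banded to width $O(k)$, valid because $\ed(x,y)\le k$) and concatenates the results with the anchor edges, which incur zero cost, to produce the final edit sequence.

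Correctness of the first claim is immediate: each gap DP returns a valid edit sequence between its two substrings, so the concatenation is a valid edit sequence from $x$ to $y$. For optimality, suppose $\Mcal$ contains every edge of $E^*$, and for each $t$ set $M_t:=\Mcal\cap\bigl((a_t,a_{t+1})\times(b_t,b_{t+1})\bigr)$. If some $M_t$ were strictly suboptimal for the gap substrings, replacing it in $\Mcal$ by a strictly better sub-matching (while retaining all anchor edges) would produce a matching with strictly smaller induced cost, contradicting the optimality of $\Mcal$. Thus each $M_t$ is gap-optimal, their costs sum to $\ed(x,y)$, and the DP attains this total.

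The main obstacle is runtime: the aggregate gap length may be close to $n$, whereas a straightforward banded DP would cost $\Omega(nk)$. To achieve $\poly(k\log(n/\delta))$ time one must exploit the fact that each surviving $\Acal_{i_j}$ has only $O(k^2)$ unmatched indices per side, so overall only $\poly(k\log(n/\delta))$ positions are ever the endpoint of a non-common matched edge or unmatched in some alignment. The difficulty is that a gap may nevertheless contain long runs of positions that are matched in every $\Acal_{i_j}$ but to inconsistent partners across alignments; careful sparse-DP bookkeeping that collapses such runs into a few transitions, along the lines of \cite{BelazzouguiZ16}, is what drives the DP table size in each gap down to $\poly(k\log(n/\delta))$ active cells and makes the running-time guarantee go through.
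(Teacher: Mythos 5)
The paper does not actually prove this lemma; it is cited directly from Lemmas~14 and~19 of \cite{BelazzouguiZ16}, so there is no internal proof to compare against and the proposal must be judged on its own merits. Your high-level architecture---anchor on the common edge set $E^*$, then solve a DP within each gap between consecutive anchors, with the optimality reduction that $\Mcal$ restricted to a gap must be gap-optimal whenever $\Mcal\supseteq E^*$---is reasonable and matches the flavor of the cited construction. But there are genuine gaps at exactly the two hard points of the lemma.

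First, the DP's correctness rests on an unproven claim about your union-find equality oracle. You assert it is ``sufficient to drive a standard edit-distance DP,'' but you never show that every edge $(i,j)\in\Mcal$ inside a gap has $x[i]=y[j]$ \emph{derivable} from the surviving alignments. A priori, a position may be matched (to inconsistent partners) in every alignment, and the resulting chain of equalities might never reach an index that is unmatched in some alignment, in which case the decoder cannot certify $x[i]=y[j]$ and the DP may refuse an edge that $\Mcal$ uses, undershooting $|\Mcal|$ and producing a strictly longer edit sequence. Establishing that this cannot happen (or restructuring the DP so it doesn't need this) is precisely the nontrivial content of the cited lemma, and it is the part your sketch omits. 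Second, your runtime paragraph is internally inconsistent: you first claim that only $\poly(k\log(n/\delta))$ positions are ever endpoints of non-common matched edges, and then concede that a gap may contain long runs of positions matched in every alignment but to inconsistent partners. The concession is correct and the earlier claim is false---two alignments whose shifts differ over a sub-interval of the gap create $\Theta(n)$ such endpoints. The ``sparse-DP bookkeeping'' that collapses these runs is exactly the mechanism needed to get from $\Omega(nk)$ to $\poly(k\log(n/\delta))$ time, and you defer it entirely to \cite{BelazzouguiZ16} rather than arguing it. (A smaller point: you justify an $O(k)$-wide band by ``$\ed(x,y)\le k$,'' but the first bullet of the lemma must hold even when $\ed(x,y)>k$; the band width should come from the alignment costs themselves, not from the promise.)
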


Now we state our key lemma.
\begin{lemma}[Key Lemma]\label{lem:key_lemma}
There exist some large constants $C_1,C_2\ge1$ such that the following holds. 
Let $\lambda$ be an $\infty$-step CGK random walk on $x,y$. Then for any fixed $(u,v)\notin\Mcal, x[u]=y[v]$, we have
$$
\Pr\sbra{(u,v)\notin\lambda \bigwedge \#\text{progress steps in $\lambda$}\le C_1\cdot k^2}\ge\frac1{C_2\cdot k}.
$$
\end{lemma}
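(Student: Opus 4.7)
I would split the proof into two cases based on the magnitude of $|u-v|$, matching the section structure (\Cref{sec:proof_of_lem:key_lemma_large_gap} and \Cref{sec:proof_of_lem:key_lemma_small_gap}). As a preliminary reduction, \Cref{thm:CGK} Item~(3) applied with a large constant $T$ (and setting $C_1 \ge T^2$) gives $\Pr[\#\text{progress steps in }\lambda > C_1 k^2] \le O(1/T)$, which can be made smaller than any desired constant $\eta<1$. Hence, to conclude the Key Lemma it is enough to produce the avoidance event $(u,v)\notin\lambda$ with probability $\Omega(1/k)$ and then conjoin with the progress-step bound, using the slack of $1-\eta$ to absorb a union bound on the complement.

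\textbf{Large-gap case} ($|u-v|$ comparable to or exceeding $k$). Here I would invoke \Cref{lem:CGK_random_walk} Item~(1) at the first time $T_0$ with $p_{T_0}\ge u$, choosing $U=u-1$ and $V=v-1$, to obtain $\E[|p_{T_0}-q_{T_0}|] \le 4\cdot\ed(x[1..u-1],y[1..v-1]) = O(k)$; the prefix edit distance can be bounded in terms of $k$ via \Cref{lem:optimal_matching}. A Markov-type argument then shows that with constant probability the realized gap differs from $|u-v|$, and since both pointers only move forward, this implies $\lambda$ cannot ever visit $(u,v)$. Avoidance in this regime thus holds with constant probability, far exceeding the required $\Omega(1/k)$.

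\textbf{Small-gap case} ($|u-v|=O(k)$). This is the technical heart and where \Cref{prop:bounds_on_rho} is used. The plan is to exploit the combinatorial constraint: because $(u,v)\notin\Mcal$ while $x[u]=y[v]$, \Cref{lem:greedy_optimal_matching} combined with the lexicographic-minimality of $\Mcal$ forces a crossing edge of $\Mcal$ near $(u,v)$, and hence a self-similar block structure in $x,y$ around $(u,v)$, where a substring of $x$ aligns with a substring of $y$ shifted by $u-v$ up to $O(k)$ singletons. One then couples the CGK walk on $(x,y)$ inside this region with a CGK walk of the self-comparison type appearing in \Cref{prop:bounds_on_rho}, whose initial offset is essentially $|u-v|$ and whose singleton parameter is $K=O(k)$. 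Applying \Cref{prop:bounds_on_rho} yields the avoidance of $(u,v)$ (together with a progress-step bound) with probability $\Omega(1/K)=\Omega(1/k)$, realizing the ``conjoined'' estimate advertised in \Cref{sec:proof_overview}.

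\textbf{Main obstacle.} The crux will be the small-gap case: translating ``$(u,v)\notin\Mcal$ with $x[u]=y[v]$'' into the precise self-similar substring structure required by \Cref{prop:bounds_on_rho}, and constructing the coupling so that the initial state, the singleton count, and the ``non-negative offset'' convention of \Cref{prop:bounds_on_rho} are all honored. This is exactly the place where \cite{BelazzouguiZ16} lost a large factor by paying a separate union bound over the heavy-tailed progress-step count for each of the $\poly(k)$ candidate edges $(u,v)$; the improvement described in the proof overview comes precisely from packaging avoidance and the progress-step bound into a single $\Omega(1/k)$ estimate here, so that only $\tilde O(k)$ walks suffice instead of $\tilde O(k^6)$.
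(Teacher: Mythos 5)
Your preliminary reduction is where the plan breaks down, and the failure is exactly the one the paper identifies as the source of its improvement. You propose to establish $\Pr[(u,v)\notin\lambda]\ge\Omega(1/k)$ and then intersect with $\Pr[\#\text{progress steps}\le C_1 k^2]\ge 1-\eta$ for a small constant $\eta$ via a union bound. But a union bound gives $\Pr[A\cap B]\ge\Omega(1/k)-\eta$, which is negative for $k$ large whenever $\eta$ is a fixed constant; to rescue it one would need $\eta=O(1/k)$, which by \Cref{thm:CGK}~(3) forces $T=\Omega(k)$ and hence $C_1=\Omega(k^2)$, blowing the progress-step bound up to $\Theta(k^4)$ — precisely the loss in \cite{BelazzouguiZ16} that \Cref{lem:key_lemma} is designed to eliminate. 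The paper's small-gap proof avoids this by decomposing the walk into three phases $\Ecal_1,\Ecal_2,\Ecal_3$ and bounding the progress steps \emph{within each phase conditionally}: $\Pr[\Ecal_1]\ge 0.5$ by a union bound over constant-probability events (including \Cref{clm:non_vertical}), $\Pr[\Ecal_2\mid\Ecal_1]\ge 1/(2\rho)=\Omega(1/k)$ is the sole source of the $1/k$ factor, and $\Pr[\Ecal_3\mid\Ecal_1\land\Ecal_2]\ge 1/4$ conjoins avoidance with a tail bound on the \emph{remaining} progress steps, each at constant probability.

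Your small-gap sketch also misattributes the $\Omega(1/k)$ factor to \Cref{prop:bounds_on_rho}. That proposition gives probability $\ge 0.5$, but only under the hypothesis that the starting offset $\hat I-\hat J$ is already $\ge\rho=\Theta(k)$. When $\lambda$ enters the stable zone, \Cref{clm:non_vertical} only guarantees a nonzero offset $a\ge 1$ relative to $u-v$. The crucial intermediate step you are missing is \Cref{clm:key_lemma_E_2}: a one-dimensional random walk starting at $a\ge 1$ reaches $\rho$ before $0$ with probability $\ge a/\rho\ge 1/\rho=\Omega(1/k)$, and by \Cref{thm:steps_in_random_walk} does so within $O(\rho^2)$ steps with constant conditional probability. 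Only after this escape phase does \Cref{prop:bounds_on_rho} apply. Finally, a minor point: in the large-gap case the paper applies \Cref{lem:CGK_random_walk}~(1) with $U=V=n$ so that $\ed(x[1..U],y[1..V])=\ed(x,y)\le k$; your choice $V=v-1$ can violate the constraint $V\ge u-1$ when $u>v$, and $\ed(x[1..u-1],y[1..v-1])$ is not in general bounded by $k$.
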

Here we reiterate that \Cref{lem:key_lemma} summarizes our improvement over the previous work of \cite{BelazzouguiZ16} in two aspects (as mentioned in \Cref{sec:proof_overview}): (1) The previous work only gave a lower bound on $\Pr[(u,v)\notin \lambda]$, while we bound the probability of two events happening simultaneously; (2) The previous work only gave a bound of $\Omega(1/k^2)$, while we give an $\Omega(1/k)$ bound.
The proof of this \Cref{lem:key_lemma} is divided into two parts in \Cref{sec:proof_of_lem:key_lemma_large_gap} and \Cref{sec:proof_of_lem:key_lemma_small_gap}, in which a technical proposition that leads to the improvement in Item (2) will be proved in \Cref{sec:bounds_on_rho}.

Assuming \Cref{lem:key_lemma},  we can prove \Cref{thm:main_theorem}. 
\begin{proof}[Proof of \Cref{thm:main_theorem}]
Let $C_3$ be a large constant.

For the encoding part, we instantiate $\tau=C_2k\cdot C_3\log(n/\delta)=O(k\cdot\log(n/\delta))$
independent $m$-step CGK random walks $\lambda_i,i\in[\tau]$; and construct each $\sketchx_i,\sketchy_i$ using \Cref{constr:sketch_for_each_random_walk} with parameter $C=C_1,\eta=\delta/(2\tau)$.

For the decoding part, we run the decoding procedure in \Cref{constr:sketch_for_each_random_walk} to obtain $\Acal_{i_1},\ldots,\Acal_{i_w}$ for \Cref{lem:algorithm_of_referee}. If $w=0$ or the edit sequence from \Cref{lem:algorithm_of_referee} has more than $k$ edits, we report ``error''; otherwise we output the edit sequence together with the corresponding edit distance.

\paragraph*{\fbox{Bounds on the parameters.}} By constructing each $\sketchx_i$ (and $\sketchy_i$) in parallel, the final sketch size and encoding space\footnote{We omit the space for storing auxiliary information (e.g., pointers) in the calculation, since these are minor terms.} are
$$
\tau\cdot O\pbra{k^2\log(n/\eta)\log n}=O\pbra{k^3\log^2(n/\delta)\log n}.
$$
The encoding time per character is then
$$
\tau\cdot\poly\pbra{k\log(n/\eta)}=\poly(k\log(n/\delta)).
$$
The decoding time follows immediately from \Cref{lem:algorithm_of_referee}.

\paragraph*{\fbox{Analysis of the algorithm when $\ed(x,y)>k$.}}
Since $\eta=\delta/(2\tau)$ in \Cref{constr:sketch_for_each_random_walk} and by union bound, the decoder, with probability at least $1-\delta/2$, for each $(\sketchx_i,\sketchy_i)$ either reports ``error'', or outputs an effective alignment $\Acal_i$ consistent with $\lambda_i$. 
Conditioning on this, when we apply \Cref{lem:algorithm_of_referee}, either $w=0$ or it outputs a valid edit sequence which has more than $k$ edits. In either case, we report ``error''.

\paragraph*{\fbox{Analysis of the algorithm when $\ed(x,y)\le k$.}}
Let $S=\cbra{(u,v)\in[n]^2\mid(u,v)\notin\Mcal,x[u]=y[v]}$ and define events 
\begin{itemize}
\item $\Ecal_i$: $\lambda_i$ walks through $x,y$.
\item $\Ecal_i'(u,v)$ for $(u,v)\in S$: $(u,v)\notin\lambda_i \bigwedge \#\text{progress steps in $\lambda_i$}\le C_1\cdot k^2$.
\end{itemize}
Then
\begin{align}
&\phantom{\ge}\Pr\sbra{\forall(u,v)\in S,~\exists i\in[\tau],~\Ecal_i\land\Ecal_i'(u,v)}\notag\\
&\ge1-\sum_{(u,v)\in S}\Pr\sbra{\Ecal_1\land\Ecal_1'(u,v)}^\tau\notag\\
&\ge1-\sum_{(u,v)\in S}\pbra{1-\Pr\sbra{\neg\Ecal_1}-\Pr\sbra{\neg\Ecal_1'(u,v)}}^\tau\notag\\
&\ge1-n^2\cdot\pbra{1-e^{\Omega(n)}-\frac1{C_2\cdot k}}^\tau
\tag{due to \Cref{thm:CGK} and \Cref{lem:key_lemma}}
\\
&\ge1-\frac\delta2.\label{eq:main_theorem_1} 
\end{align}

Let $\lambda_{i_1},\ldots,\lambda_{i_w}$ be the random walks walking through $x,y$ and containing at most $C_1\cdot k^2$ progress steps. 
Since $\eta=\delta/(2\tau)$ in \Cref{constr:sketch_for_each_random_walk} and by union bound, the decoder, with probability at least $1-\delta/2$, for each $(\sketchx_i,\sketchy_i)$ either reports ``error'', or outputs an effective alignment $\Acal_i$ consistent with $\lambda_i$.
Conditioning on this, \Cref{constr:sketch_for_each_random_walk} must at least obtain effective alignments $\Acal_{i_j},\ldots,\Acal_{i_w}$ that are consistent with the corresponding random walks.
Combined with \Cref{eq:main_theorem_1}, with probability at least $1-\delta$, for any $(u,v)\in S$ there exists some $\lambda_{i_j}$ missing it.
Then the edit sequence from \Cref{lem:algorithm_of_referee} is optimal.
\end{proof}

\subsection[Proof of Lemma 3.6: Case |u-v|>100k]{Proof of \Cref{lem:key_lemma}: Case $|u-v|>100\cdot k$}\label{sec:proof_of_lem:key_lemma_large_gap}

\begin{proof}[Proof of \Cref{lem:key_lemma}: Case $|u-v|>100\cdot k$]
Assume without loss of generality $u>v$. We stop $\lambda$ when it meets $u$. Then by Item (1) in \Cref{lem:CGK_random_walk}, at this time the state $(p,q)$ satisfies $\E[|p-q|]\le 4\cdot k$. Hence by Markov's inequality, 
\begin{equation}\label{eq:key_lemma_1}
\Pr\sbra{(u,v)\notin\lambda}\ge\Pr\sbra{p-q\le100\cdot k}=1-\Pr\sbra{p-q>100\cdot k}\ge1-\frac{4\cdot k}{100\cdot k}=0.96.
\end{equation}
On the other hand, by setting $C_1$ large enough we know from \Cref{thm:CGK} 
$$
\Pr\sbra{\#\text{progress steps in $\lambda$}\le C_1\cdot k^2}\ge 0.99.
$$
Hence, by setting $C_2$ large enough, we have
\begin{equation*}
\Pr\sbra{(u,v)\notin\lambda \bigwedge \#\text{progress steps in $\lambda$}\le C_1\cdot k^2}\ge0.96+0.99-1\ge\frac{1}{C_2\cdot k}.\tag*{\qedhere}
\end{equation*}
\end{proof}

\subsection[Proof of Lemma 3.6: Case |u-v|<=100k]{Proof of \Cref{lem:key_lemma}: Case $|u-v|\le 100\cdot k$}\label{sec:proof_of_lem:key_lemma_small_gap}

First we need the following definition.

\begin{definition}[Stable zone $\Zcal$, \cite{BelazzouguiZ16}]
The \emph{stable zone} $\Zcal$ of $(u,v)$ consists of substrings $x[u'..u],y[v'..v]$ of equal length $L=u-u'+1=v-v'+1$, where $L \le \min\{u,v\}$ is the maximum possible length satisfying $x[u'..u]=y[v'..v]$. In particular, $u-v=u'-v'$; and $(u',v')\neq(1,1)$ as $(u,v)\notin\Mcal$.

Moreover, we say a state $(p,q)$ \emph{enters} $\Zcal$ if $p\ge u'$ and $q\ge v'$.
\end{definition}

We will find the following claim useful. It is proved in \cite{BelazzouguiZ16}, and we give a slightly different proof here for completeness.

\begin{claim}[{\cite[Claim 21]{BelazzouguiZ16}}]\label{clm:non_vertical}
Consider an $\infty$-step CGK random walk $\lambda$ on $x,y$, where $p,q$ are the pointers on $x,y$ respectively. 
Let $T$ be the first time that $\lambda$ enters $\Zcal$, i.e., $(p_T\ge u')\land(q_T\ge v')$. 
Then 
$$
\Pr\sbra{p_T-q_T\neq u-v}=\Pr\sbra{p_T-q_T\neq u'-v'}\ge2/3.
$$
\end{claim}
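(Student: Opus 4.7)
The plan is to first reduce the claim to bounding $\Pr[(p_T, q_T) = (u', v')] \le 1/3$, and then analyze the walk's entry into the zone using the monotonicity of the CGK walk together with zone maximality.

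For the reduction, I will show that $p_T - q_T = u' - v'$ is equivalent to $(p_T, q_T) = (u', v')$ via a case analysis on $(p_{T-1}, q_{T-1})$, the state immediately before entering the zone. Since each pointer advances by at most $1$ per step and at least one of $p_{T-1} < u'$ or $q_{T-1} < v'$ must hold (otherwise $T$ would be smaller), each sub-case---``both pointers just crossed'', ``only $p$ crossed with $q_{T-1} = v'$ stable'', ``only $q$ crossed with $p_{T-1} = u'$ stable''---forces the entering state to be exactly $(u', v')$.

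For the main bound I exploit two structural facts. First, the CGK walk is \emph{monotone}: both $p$ and $q$ are non-decreasing, so each state is visited only in a geometric burst before being permanently abandoned. Second, in the generic case $u' > 1$ and $v' > 1$, zone maximality gives $x[u'-1] \ne y[v'-1]$, which makes the two coins $r(\cdot, x[u'-1])$ and $r(\cdot, y[v'-1])$ at $\sigma_0 := (u'-1, v'-1)$ independent. Consequently the four transitions out of $\sigma_0$ are equally likely, and conditional on leaving $\sigma_0$ the three non-stay outcomes---enter the corner $(u',v')$, move to $\sigma_B := (u'-1, v')$, or move to $\sigma_C := (u', v'-1)$---each occur with probability exactly $1/3$. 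A direct check then shows that entering the zone from any $\sigma_B(s) := (u'-1, v'+s)$ or $\sigma_C(s) := (u'+s, v'-1)$ with $s \ge 1$ forces the entry point to be $(u', v'+s)$, $(u', v'+s+1)$ or their $\sigma_C$-counterparts, never the corner; so only the three corner-adjacent states can contribute. The edge cases $u' = 1$ or $v' = 1$ (noting $(u', v') \ne (1, 1)$) are handled by a parallel argument in which the walk starts on one side of the zone boundary.

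To finish, I set up hitting-probability equations over the corner-adjacent states $\sigma_0, \sigma_B, \sigma_C$ and combine them with analogous three-way symmetries at $\sigma_B$ (when $x[u'-1] \ne x[u']$, with the three leaving outcomes being ``enter corner'', ``move to $\sigma_B(1)$'', and ``enter zone non-corner'') and symmetrically at $\sigma_C$. The monotonicity argument immediately forces $\Pr[\text{corner-entry} \mid \sigma_B(s)] = 0$ for $s \ge 1$, which collapses the recursion at $\sigma_B$ and $\sigma_C$. The main obstacle will be propagating these local $1/3$-type bounds into a global bound for the walk started at $(1,1)$: at $\sigma_0$ alone the local corner-entry probability is only bounded by $5/9$, so the desired $1/3$ must emerge from averaging over the walk's first approach to the corner-adjacent region together with the constraint that the $f_\sigma$'s (first-visit probabilities) satisfy natural inequalities forced by the walk's trajectory from $(1,1)$ through the bulk. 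I expect this final averaging step, which balances contributions from the three possible first-visit targets, to be the most delicate part of the argument.
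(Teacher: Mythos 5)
Your reduction ($p_T - q_T = u' - v'$ iff $(p_T,q_T)=(u',v')$) is correct and matches the paper's opening remark, and your corner-state analysis---using zone maximality for $x[u'-1]\ne y[v'-1]$, and monotonicity to rule out corner-entry from $\sigma_B(s),\sigma_C(s)$ with $s\ge1$---is the same three-case decomposition the paper uses. But you leave the crucial combination step open, and you correctly flag why it is nontrivial: from $\sigma_0=(u'-1,v'-1)$ the walk can still reach the corner after first passing through $\sigma_B$ or $\sigma_C$, so the corner-entry probability starting from $\sigma_0$ is $\tfrac13(1+Q_B+Q_C)$, which can be as large as $5/9$. You say the desired $1/3$ ``must emerge from averaging'' over the walk's first approach, but this averaging has nothing to work with when $u'=v'=2$: then the walk \emph{starts} at $\sigma_0$, and zone maximality actually forbids $Q_B=Q_C=0$ (if $x[u'-1]=x[u']$ and $y[v'-1]=y[v']$, then $x[u'-1]=x[u']=y[v']=y[v'-1]$, contradicting $x[u'-1]\ne y[v'-1]$). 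So the step you flag as ``the most delicate part'' is a genuine missing idea, not an omitted calculation, and it is unclear the route you sketch can reach the stated bound without additional structural input---for instance from the hypothesis $(u,v)\notin\Mcal$, which your proposal never uses.

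For reference, the paper's own proof of this claim is also very terse at exactly this combination step: it computes a per-case conditional bound of $1/3$ (or $0$) for each of the three possible last states before entering at $(u',v')$, but does not spell out how these three disjoint last-state cases sum to at most $1/3$ when the walk can first visit $\sigma_0$ and subsequently $\sigma_B$ or $\sigma_C$. Your more explicit development is a fair rendering of the same approach and makes the subtlety visible; the piece you were unable to complete is the heart of the argument rather than a routine detail, and you should not regard the proposal as a proof until it is supplied.
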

\begin{proof}
Note that event ``$p_T-q_T=u'-v'$'' is exactly ``$(p_T,q_T)=(u',v')$''.
Hence one of the following three cases must happen at some step $t<T$ in order to make $(p_T,q_T)=(u',v')$ possible:
\begin{itemize}
\item $(p_t,q_t)=(u',v'-1)$. Then the transitions must satisfy $r(i,x[u'])=r(i,y[v'-1])=0$ for all $i\in[t..T-2]$ and $r(T-1,x[u'])=1,r(T-1,y[v'-1])=0$, which happens with probability $1/3$ if $x[u']\neq y[v'-1]$ and with probability $0$ if not.
\item $(p_t,q_t)=(u'-1,v')$. Similar analysis.
\item $(p_t,q_t)=(u'-1,v'-1)$. If the state after time $t$ does not fall into the previous two cases, the transitions must satisfy $r(i,x[u'-1])=r(i,y[v'-1])=0$ for all $i\in[t..T-2]$ and $r(T-1,x[u'-1])=r(T-1,y[v'-1])=1$, which happens with probability $1/3$. \qedhere
\end{itemize}
\end{proof}

We will also rely on the following technical result, the proof of which is in \Cref{sec:bounds_on_rho}.

\begin{proposition}\label{prop:bounds_on_rho}
There exists a universal constant $C_4\ge1$ such that the following holds. 
Assume $X,Y$ are two identical length-$L$ strings over alphabet $\Sigma$. 
Assume there exists a size-$M$ matching $(i_1,j_1),\ldots,(i_M,j_M)\in[L]^2$ such that
\begin{itemize}
\item $i_t>j_t$ and $X[i_t]=Y[j_t]$ hold for all $t\in[M]$;
\item $i_1<i_2<\cdots<i_M$ and $j_1<j_2<\cdots<j_M$.
\end{itemize}
Let $\rho=C_4\cdot(L-M)$ and $(\hat I,\hat J)$ be any state satisfying $\hat I-\hat J\ge\rho$.
Then a CGK random walk on $X,Y$ starting from $(\hat I,\hat J)$ will miss $(L,L)$ with probability at least $0.5$.
\end{proposition}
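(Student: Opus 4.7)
The plan is to reduce the proposition to a statement about the pointer gap $\Delta_t := p_t - q_t$ of the CGK walk $\lambda$ started at $(\hat I, \hat J)$. Since $X = Y$ as strings, as soon as $\Delta_t$ first hits $0$ we have $p_t = q_t$, and then $r(s, X[p_s]) = r(s, Y[q_s])$ at every subsequent step $s$, so both pointers advance in lockstep forever and the trajectory must pass through $(L, L)$. Conversely, if $\Delta_t > 0$ for all $t$, then whenever $p_t = L$ we have $q_t < L$, so $(L, L)$ is never visited. Hence the proposition reduces to
\[
\Pr\bigl[\exists t: \Delta_t \le 0\bigr] \le 1/2.
\]

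By \Cref{rmk:progress_step_and_random_walk}, $\Delta$ transitions only on progress steps, and at each progress step it takes an independent $\{-1, 0, +1\}$-valued jump with probabilities $1/4, 1/2, 1/4$ (at a progress step $X[p] \ne Y[q]$, so the coins $r(t, X[p])$ and $r(t, Y[q])$ are independent). Consequently the embedded process $\tilde \Delta_s := \rho + \sum_{k \le s} \epsilon_k$ is an honest unbiased self-looped one-dimensional random walk, and $\Delta$ hits $0$ only if $\tilde \Delta$ does so within its first $P$ steps, where $P$ is the total number of progress steps of $\lambda$. Granting the main technical bound $\E[P] \le C \cdot K^2$ for some constant $C$ (with $K := L - M$), Markov's inequality gives $\Pr[P > 4CK^2] \le 1/4$, while Doob's maximal inequality combined with Azuma--Hoeffding, applied to the martingale $\rho - \tilde \Delta_s$, yields
\[
\Pr\bigl[\min_{s \le 4 C K^2} \tilde \Delta_s \le 0\bigr] \le \exp\!\bigl(-\rho^2 / (8 C K^2)\bigr) = \exp\!\bigl(-C_4^2/(8C)\bigr),
\]
which drops below $1/4$ once $C_4$ is a sufficiently large constant (depending on $C$). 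A union bound then gives the desired $\Pr[\Delta \text{ ever hits } 0] \le 1/2$.

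The heart of the proof is therefore to establish $\E[P] = O(K^2)$. Following the outline in \Cref{sec:proof_overview}, I plan to partition the trajectory into phases separated by \emph{stable states}: a state $(p, q)$ is declared stable when $\ed(X[q..p-1], X[p..2p-q-1])$ is controlled by the number of singletons in $[q..2p-q-1]$, which, thanks to the hypothesis on the $(i_t, j_t)$-matching, happens whenever the local combinatorial structure around $(p, q)$ aligns nicely. Starting from a stable state $(p, q)$, we run the walk until it first reaches a new stable state $(p', q')$ with $q' \ge p$ and $p' \ge 2p - q$, i.e., the pair has moved forward by roughly one window length. Combining \Cref{lem:CGK_random_walk} (to bound the expected pointer gap at hitting times in terms of edit distance) with \Cref{thm:steps_in_random_walk} (to pass from expected gaps to expected progress-step counts via the 1D walk) will give a per-phase bound of $O((p - q) \cdot S + S^2)$ expected progress steps, where $S$ is the number of singletons lying in $[q..p' - 1]$. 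Because each singleton lies in at most two consecutive phases, $\sum S \le 2K$, so summing over phases yields $O(d_0 K + K^2) = O(K^2)$ provided $p - q$ stays $\Theta(d_0) = \Theta(K)$ throughout the analysis.

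The main obstacle is the apparent circularity in this last clause: the per-phase bound wants $p-q \approx d_0$, yet ``$p-q$ stays close to $d_0$'' is essentially what we want to conclude. The plan is to resolve this by a stopping-time argument---truncate the analysis at the first moment $\Delta$ exits the window $[d_0/2, 3 d_0/2]$, and observe that upward exits only strengthen the Azuma estimate of the second paragraph (the walk is further from $0$), while downward exits can be absorbed into the $1/4$ failure probability of that same Azuma step by choosing the constants appropriately. Calibrating these constants so everything closes up, and giving a combinatorial definition of ``stable state'' dense enough that consecutive stable states are $O(K)$ pointer-positions apart, is where the bulk of the technical work in \Cref{sec:bounds_on_rho} will go.
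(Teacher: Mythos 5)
Your plan tracks the paper's argument quite closely: phase decomposition via stable states, a per-phase bound of $O((p-q)\cdot S_i + S_i^2)$ from a lemma like \Cref{lem:steps_to_next_stable_state}, singleton counting to get $\sum_i S_i \le 2K$, and a random-walk deviation estimate to finish. But two of your stated quantities are wrong, and if taken literally they break the conclusion. You set $\tilde\Delta_0 = \rho$ and claim $\E[P]=O(K^2)$ by invoking ``$d_0 = \Theta(K)$''; however the starting gap $d_0 := \hat I - \hat J$ is only \emph{lower}-bounded by $\rho = C_4(L-M)$ and can be arbitrarily larger, so $d_0 \ne \Theta(K)$ and the phase sum actually yields $\E[P] = O(K d_0 + K^2)$ (after truncating at gap $> 2d_0$, which the paper's proof calls a ``failure state''). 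With the embedded walk correctly started at $d_0$ rather than $\rho$, the deviation estimate still closes because
\[
\frac{d_0^2}{K d_0 + K^2} \;\ge\; \frac{d_0}{2K} \;\ge\; \frac{C_4}{2},
\]
and the paper gets this with the more elementary Cauchy--Schwarz bound $\E\bigl[\bigl|\sum_{j=1}^P r_j\bigr|\bigr] \le \sqrt{\E[P]/2}$ followed by Markov on the \emph{final} deviation $|I_{t_m}-J_{t_m}-d_0|$, rather than Doob plus Azuma on the running maximum.

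Separately, the displayed reduction $\Pr[\exists t:\Delta_t \le 0]\le 1/2$ cannot be correct as written: the unbiased self-looped 1D walk is recurrent, so that probability is $1$. The event you actually want to bound is that $\Delta$ hits $0$ (or exits the truncation window) \emph{before both pointers pass $L$}, which is exactly the ``success versus failure state'' framing the paper sets up; your later remarks about truncating at the window $[d_0/2, 3d_0/2]$ show you are aware of this, but the reduction statement itself needs to be restated in terms of the stopped process.
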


By symmetry, we derive the following corollary.

\begin{corollary}\label{cor:bounds_on_rho}
Let $C_4\ge1$ be the same constant in \Cref{prop:bounds_on_rho}. 
Assume $X,Y$ are two identical length-$L$ strings over alphabet $\Sigma$. 
Assume there exists a size-$d$ matching $(i_1,j_1),\ldots,(i_M,j_M)\in[L]^2$ such that
\begin{itemize}
\item $i_t>j_t$ holds for all $t\in[M]$, or $i_t<j_t$ holds for all $t\in[M]$;
\item $X[i_t]=Y[j_t]$ holds for all $t\in[M]$;
\item $i_1<i_2<\cdots<i_M$ and $j_1<j_2<\cdots<j_M$.
\end{itemize}
Let $\rho=C_4\cdot(L-M)$ and $(\hat I,\hat J)$ be any state satisfying $|\hat I-\hat J|\ge\rho$.
Then a CGK random walk on $X,Y$ starting from $(\hat I,\hat J)$ will miss $(L,L)$ with probability at least $0.5$.
\end{corollary}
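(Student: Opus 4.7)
The plan is to deduce the corollary from \Cref{prop:bounds_on_rho} by a short symmetry argument that exploits the fact that $X$ and $Y$ are the \emph{same} string. I will use two separate symmetries. The first is a pointer swap: since $X = Y$, a direct computation shows that if $(p_t, q_t)$ is the CGK trajectory from $(\hat I, \hat J)$ using the shared randomness $r$, then $(q_t, p_t)$ obeys the same transition rule as a CGK trajectory from $(\hat J, \hat I)$ using the same $r$, and hence has the same joint distribution. Since $(L, L)$ is a fixed point of the coordinate swap, the event $\{(L, L) \notin \lambda\}$ is swap-invariant, so the probability of missing $(L, L)$ from $(\hat I, \hat J)$ equals that from $(\hat J, \hat I)$. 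The second symmetry is an edge reversal on the witness matching: replacing each $(i_t, j_t)$ by $(j_t, i_t)$ preserves the size $M$, the equality $X[i_t] = Y[j_t]$ (because $X = Y$), and the strict monotonicity of both coordinate sequences, and merely flips the strict inequality $i_t > j_t$ into $i_t < j_t$.

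Given these two symmetries, I will split into cases according to the signs of $i_t - j_t$ and $\hat I - \hat J$. When $i_t > j_t$ and $\hat I > \hat J$, \Cref{prop:bounds_on_rho} applies directly. When $i_t < j_t$ and $\hat I > \hat J$, I first apply edge reversal to restore the matching-side hypothesis, and then apply the proposition. When $i_t > j_t$ and $\hat I < \hat J$, I instead apply the pointer-swap identity to pass to the walk started at $(\hat J, \hat I)$, which now satisfies $\hat J - \hat I \ge \rho$ while keeping the matching intact, and then apply the proposition. The final case, $i_t < j_t$ and $\hat I < \hat J$, is handled by combining both symmetries.

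Since the whole argument is bookkeeping, I do not anticipate any substantive obstacle. The one subtle point that needs care is that the pointer-swap identity is distributional rather than a pathwise identity in the usual sense, so it must be invoked only for events that are themselves coordinate-swap invariant. The target event $\{(L, L) \notin \lambda\}$ qualifies because $(L, L)$ is a fixed point of the swap, so this subtlety creates no real difficulty.
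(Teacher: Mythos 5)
Your proof is correct and takes essentially the same approach the paper gestures at: the paper states only ``By symmetry, we derive the following corollary'' with no further justification, and your write-up supplies exactly the two symmetries that phrase is referring to. The pointer-swap and edge-reversal observations, together with the case split on $\mathrm{sgn}(i_t - j_t)$ and $\mathrm{sgn}(\hat I - \hat J)$, correctly reduce every case to \Cref{prop:bounds_on_rho}. One minor nit on phrasing rather than substance: the pointer-swap identity you describe is in fact pathwise, not merely distributional. For each fixed realization of the randomness $r$, the computation you sketch shows $(q_t,p_t)_{t\ge 0}$ \emph{is} the CGK trajectory from $(\hat J,\hat I)$ under the same $r$ (this uses $X=Y$, so both coordinates evolve by the identical rule $p\mapsto p+r(t,X[p])$). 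Consequently, for each fixed $r$, the set of visited states under the start $(\hat J,\hat I)$ is the coordinate-swap of the set visited under the start $(\hat I,\hat J)$, and since $(L,L)$ is a fixed point of the swap the two ``miss $(L,L)$'' events coincide pointwise. Your caveat about restricting to swap-invariant events is therefore slightly overcautious, but it is harmless and the conclusion you draw is the right one.
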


\begin{proof}[Proof of \Cref{lem:key_lemma}: Case $|u-v|\le 100\cdot k$]
Let $C_5\ge1$ be a large constant. 
We will apply \Cref{prop:bounds_on_rho} with parameter $M\ge L-103\cdot k$; and let $\rho=C_4\cdot 103k$ be the corresponding bound in it.

We expect $\lambda$ to have the following three phases:
\begin{itemize}
\item $\Ecal_1$: $\lambda$ enters $\Zcal$ in a state $(p_1,q_1)$ within $C_5\cdot k^2$ progress steps, where $0<|(p_1-q_1)-(u-v)|\le200\cdot k$.
\item $\Ecal_2$: Starting from $(p_1,q_1)$ and within $2\cdot\rho^2$ progress steps, $\lambda$ reaches a state $(p_2,q_2)$ where either $(p_2,q_2)>(n,n)$ or $|(p_2-q_2)-(u-v)|\ge\rho$. Also, during the walk from $(p_1,q_1)$ to $(p_2,q_2)$, $\lambda$ never reaches some state $(p,q)$ satisfying $(p-q)-(u-v)=0$.
\item $\Ecal_3$: $(u,v)\notin\lambda$ and \#progress steps in $\lambda\le 2\cdot\rho^2+C_5\cdot\pbra{k^2+(\rho+301\cdot k)^2}$.
\end{itemize}

\begin{claim}\label{clm:key_lemma_E_1}
$\Pr\sbra{\Ecal_1}\ge0.5$.
\end{claim}
\begin{claim}\label{clm:key_lemma_E_2}
$\Pr\sbra{\Ecal_2\mid\Ecal_1}\ge1/(2\cdot\rho)$.
\end{claim}
\begin{claim}\label{clm:key_lemma_E_3}
$\Pr\sbra{\Ecal_3\mid\Ecal_1\land\Ecal_2}\ge1/4$.
\end{claim}

Assuming \Cref{clm:key_lemma_E_1}, \Cref{clm:key_lemma_E_2}, and \Cref{clm:key_lemma_E_3}, we show the following desired bound
$$
\Pr\sbra{(u,v)\notin\lambda\bigwedge\#\text{progress steps in $\lambda$}\le C_1\cdot k^2}
\ge\Pr\sbra{\Ecal_3}\ge\Pr\sbra{\Ecal_1\land\Ecal_2\land\Ecal_3}\ge\frac{1}{C_2\cdot k}
$$
by setting $C_1=2\cdot\pbra{103\cdot C_4}^2+C_5\cdot\pbra{1+(301+103\cdot C_4)^2}$ and $C_2=16$.
\end{proof}

\begin{proof}[Proof of \Cref{clm:key_lemma_E_1}]
By \Cref{thm:CGK} with $C_5$ large enough, we ensure 
$$
\Pr\sbra{\#\text{progress steps in $\lambda$}\le C_5\cdot k^2}\ge0.99.
$$
Then combining \Cref{clm:non_vertical} and \Cref{eq:key_lemma_1}, we have
\begin{align*}
&\phantom{\ge}\Pr\sbra{\Ecal_1}\\
&\ge\Pr\sbra{\#\text{progress steps in $\lambda$ before $\Zcal$}\le C_5\cdot k^2}\\
& \quad +\Pr\sbra{(p_1,q_1)\neq(u',v')}+\Pr\sbra{|(p_1-q_1)-(u-v)|\le200\cdot k}-2\\
&\ge\Pr\sbra{\#\text{progress steps in $\lambda$}\le C_5\cdot k^2}+\Pr\sbra{(p_1,q_1)\neq(u',v')}+\Pr\sbra{|p_1-q_1|\le100\cdot k}-2\\
&\ge0.99+2/3+0.96-2\\
&>0.5.\tag*{\qedhere}
\end{align*}
\end{proof}

\begin{proof}[Proof of \Cref{clm:key_lemma_E_2}]
Let $a=|(p_1-q_1)-(u-v)|$. Then by conditioning on $\Ecal_1$, we know $a\ge1$. If $a\ge\rho$, then $(p_2,q_2)=(p_1,q_1)$ and the claims holds immediately. Therefore we assume $a<\rho$. 

For convenience, if the case $(p_2,q_2)>(n,n)$ happens, we replace $x[p_2+1..\infty]$ and $y[q_2+1..\infty]$ with random strings\footnote{The random strings here are not essential. The only purpose of this is to generate infinitely many progress steps.} and continue the walk. 
Denote this new walk as $\lambda'$ and we stop it when $(p-q)-(u-v)=0$ or $|(p-q)-(u-v)|\ge\rho$ happens.
Let $\pi$ be a one-dimensional unbiased and self-looped random walk starting from $a$. Due to the correspondence between transitions in $\pi$ and progress steps in $\lambda'$, we can stop $\pi$ when it reaches $0$ or $\rho$; and hence
\begin{align*}
\Pr\sbra{\Ecal_2\mid\Ecal_1}
&\ge\Pr\sbra{\lambda'\text{ stops at the second case within }2\cdot\rho^2\text{ progress steps}}\\
&=\Pr\sbra{\pi\text{ stops at }\rho\text{ in }2\cdot\rho^2\text{ steps}}\\
&=\Pr\sbra{\pi\text{ stops at }\rho}\cdot\Pr\sbra{\pi\text{ stops }\rho\text{ in }2\cdot\rho^2\text{ steps}\middle|\pi\text{ hits }\rho\text{ first}}\\
&=\Pr\sbra{\pi\text{ stops at }\rho}\cdot\pbra{1-\Pr\sbra{\pi\text{ stops }\rho\text{ after }2\cdot\rho^2\text{ steps}\middle|\pi\text{ hits }\rho\text{ first}}}\\
&\ge\Pr\sbra{\pi\text{ stops at }\rho}\cdot\pbra{1-\frac{\E\sbra{\#\text{steps until }\pi\text{ stops}\middle|\pi\text{ hits }\rho\text{ first}}}{2\cdot\rho^2}}.
\tag{due to Markov's inequality}
\end{align*}
By \Cref{thm:steps_in_random_walk}, we have
\begin{align*}
a\cdot(\rho-a)
&=\E\sbra{\#\text{steps until }\pi\text{ stops}}\\
&\ge\Pr\sbra{\pi\text{ stops at }\rho}\E\sbra{\#\text{steps until }\pi\text{ stops}\middle|\pi\text{ stops at }\rho}\\
&=\frac a\rho\cdot\E\sbra{\#\text{steps until }\pi\text{ stops}\middle|\pi\text{ stops at }\rho}.
\end{align*}
Hence 
\begin{equation*}
\Pr\sbra{\Ecal_2\mid\Ecal_1}\ge\frac a\rho\cdot\pbra{1-\frac{\rho\cdot(\rho-a)}{2\cdot\rho^2}}\ge\frac1{2\cdot\rho}.\tag*{\qedhere}
\end{equation*}
\end{proof}

\begin{proof}[Proof of \Cref{clm:key_lemma_E_3}]
Observe that
$$
\Pr\sbra{\Ecal_3\mid\Ecal_1\land\Ecal_2}
\ge\Pr\sbra{\text{after }(p_2,q_2),~(u,v)\notin\lambda \bigwedge \#\text{progress steps}\le C_5\cdot(\rho+301\cdot k)^2}.
$$
If $(p_2,q_2)>(n,n)$ then there is nothing to prove. So we focus on the case $|(p_2-q_2)-(u-v)|\ge\rho$.

Let $X=Y=x[u'..u]=y[v'..v]$, $L=u-u'+1=v-v'+1$, and $(\hat I,\hat J)=(p_2-u'+1,q_2-v'+1)$. Then $|\hat I-\hat J|\ge\rho$.
Recall that $\Mcal$ is the greedy optimal matching between $x,y$. We retain those edges $(p,q)\in\Mcal$ that is inside $\Zcal$ to form a matching $\Mcal'$ between $X,Y$, i.e., 
$$
\Mcal'=\cbra{(p-u'+1,q-v'+1)\mid (p,q)\in\Mcal, u'\le p\le u, v'\le q\le v}.
$$
By \Cref{lem:optimal_matching}, we have
$$
M:=|\Mcal'|\ge L-3\cdot\ed(x,y)-|u-v|\ge L-103\cdot k.
$$
By \Cref{lem:greedy_optimal_matching} and $(u,v)\notin\Mcal$, the condition in \Cref{cor:bounds_on_rho} holds and we get 
$$
\Pr\sbra{\text{after }(p_2,q_2),~(u,v)\notin\lambda}\ge0.5.
$$

Therefore, it suffices to prove
\begin{equation}\label{eq:key_lemma_2}
\Pr\sbra{\text{after }(p_2,q_2),~\#\text{progress steps}\le C_5\cdot(\rho+301\cdot k)^2}\ge\frac34.
\end{equation}
Note that 
$$
\ed(x[p_2..\infty],y[q_2..\infty])\le\ed(x,y)+|p_2-q_2|\le\ed(x,y)+|(p_2-q_2)-(u-v)|+|u-v|.
$$
Thus \Cref{eq:key_lemma_2} follows from \Cref{thm:CGK} with $C_5$ large enough and the following estimate:
\begin{itemize}
\item If $|(p_1-q_1)-(u-v)|\le\rho$, then $|(p_2-q_2)-(u-v)|=\rho$. Hence 
$$
\ed(x[p_2..\infty],y[q_2..\infty])\le k+\rho+100\cdot k\le\rho+301\cdot k.
$$
\item If $\rho<|(p_1-q_1)-(u-v)|\le200\cdot k$, then $(p_2,q_2)=(p_1,q_1)$ and 
$|(p_2-q_2)-(u-v)|\le200\cdot k$. Hence
\begin{equation*}
\ed(x[p_2..\infty],y[q_2..\infty])\le k+200\cdot k+100\cdot k\le\rho+301\cdot k.\tag*{\qedhere}
\end{equation*}
\end{itemize}
\end{proof}

\subsection{Sketch Construction}
\label{sec:sketch-construction}
This section is devoted for the detailed description of \Cref{constr:sketch_for_each_random_walk} and its correctness. The construction is mostly based on \cite[Section 4.1]{BelazzouguiZ16}, with a few simplifying modifications. 

\begin{proof}[Proof of \Cref{constr:sketch_for_each_random_walk}]
Let $x',y'\in\Sigma^m$  be the outputs of the random walks on $x$ and $y$ respectively, i.e., $x'=\lambda_r(x)$ and $y'=\lambda_r(y)$, where $r$ is the (public) randomness. 

\paragraph*{\fbox{The encoding algorithm.}}
We only describe the encoding algorithm for $x$. The algorithm for $y$ is analogous.

Build a full binary tree $\Tcal_x$ of depth $d=\log m$ on top of $x'$, where the segments in depth $i$ have length $m/2^i$.
We use $\Tcal_x(i,j)$ to denote the $j$-th segment in depth $i$. More precisely, for $i\in[0..d],j\in[2^i]$,  
$$
\Tcal_x(i,j)=x'\sbra{1+\frac{(j-1)\cdot m}{2^i}..\frac{j\cdot m}{2^i}}.
$$
Let $f:\Sigma^{\le m}\to\Gamma$ be a randomized hash function\footnote{The same $f$ is used for both $x$ and $y$ in this random walk $\lambda$, but it may be different among different walks.} in \Cref{thm:rolling_hash} by setting $\eta_1=\eta/(2(d+1)\cdot m)$, and define $h^x_{i,j}=f(\Tcal_x(i,j))$.

\begin{theorem}[Rolling hash, \cite{KarpR87}]\label{thm:rolling_hash}
Let $\eta_1\in(0,1)$.
There exists a randomized hash function $f\colon\Sigma^{\le m}\to\Gamma$ where
\begin{enumerate}[label=(\alph*)]
\item $|\Gamma|=O(|\Sigma|+(m/\eta_1))$ and $\Sigma\subseteq\Gamma$;
\item for any distinct $z,z'\in\Sigma^{\le m}$, $\Pr\sbra{f(z)=f(z')}\le\eta_1$; and for any single character $c\in\Sigma$, $f(c)=c$;\footnote{When we interpret $\Sigma$ as a set of numbers, the rolling hash is $f(x)=\sum_{i=1}^{|x|}x_i\cdot r^{i-1}\mod p$, where $r$ is a random number and $p$ is a large prime. Therefore $f(\cdot)$ is an identity map on a single character.}
\item $f$ can be computed with $O(\log|\Gamma|)$ bits of space and $\poly(\log|\Gamma|)$ time per character.
\end{enumerate}
\end{theorem}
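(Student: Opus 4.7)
The plan is to instantiate $f$ as a Karp-Rabin polynomial hash with a length-encoding correction, and verify the three properties in turn. First I would set $N := \max\cbra{|\Sigma|, \lceil 4m/\eta_1\rceil}$ and, by Bertrand's postulate, pick a prime $p$ with $N \le p \le 2N$. Let $\Gamma := \cbra{0,1,\ldots,p-1}$, so $|\Gamma| = p = O(|\Sigma|+m/\eta_1)$; identifying $\Sigma$ with $\cbra{0,1,\ldots,|\Sigma|-1}\subseteq \Gamma$ (which is already consistent with the paper's convention, since $0\in\Sigma$) gives $\Sigma\subseteq\Gamma$, which is property (a).

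Using the public randomness I would sample $r$ uniformly from $\Fbb_p$ and define
$$
f(z) \;=\; \begin{cases} z & \text{if }|z|=1,\\ \pbra{\sum_{i=1}^{|z|} z_i\cdot r^{i-1} + r^{|z|}} \bmod p & \text{if }|z|\ge 2.\end{cases}
$$
The trailing $r^{|z|}$ term injects the length of $z$ into the polynomial; this is what prevents the trivial collisions $f(0^a)=f(0^b)=0$ that the plain Karp-Rabin hash would suffer, while keeping the output in $\Fbb_p$ so that $|\Gamma|$ is not blown up by an $m$ factor. By construction $f(c)=c$ for every single character $c\in\Sigma\subseteq[0,p-1]$, which is the second half of (b).

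For the collision bound in (b), given distinct $z,z'\in\Sigma^{\le m}$ I would write $P(r):=f(z)-f(z')\bmod p$ as a polynomial in $\Fbb_p[r]$ of degree at most $m$ and argue $P\not\equiv 0$ by a short case analysis: (i) if both strings have length $\ge 2$ and $|z|=|z'|$, some coefficient of $P$ equals $z_i-z'_i\ne 0$; (ii) if both have length $\ge 2$ and $|z|>|z'|$, the coefficient of $r^{|z|}$ in $P$ is $1$; (iii) if $|z|=1<|z'|$, then $P=f(z')-z$ has leading coefficient $1$ on $r^{|z'|}$; (iv) if $|z|=|z'|=1$, then $P=z-z'\ne 0$ in $\Fbb_p$ since $z,z'\in[0,p-1]$. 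In every case $P$ is a nonzero polynomial of degree at most $m$, so by the standard fact that such a polynomial has at most $m$ roots in $\Fbb_p$, $\Pr_r\sbra{P(r)=0}\le m/p \le \eta_1$.

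For property (c), $f$ is fully specified by $(p,r)$, each of bit-length $O(\log p)=O(\log|\Gamma|)$; and to compute $f$ streaming over the characters of $z$, it suffices to maintain a pair $(h,R)\in\Fbb_p^2$ with $h=\sum_{i\le t}z_ir^{i-1}$ and $R=r^t$ after reading the $t$-th character, updating $h\leftarrow h+z_t\cdot R$ and $R\leftarrow R\cdot r$ per step (and adding $R$ once more at the end when $|z|\ge 2$). Each update is $O(1)$ modular multiplications and additions in $\Fbb_p$, giving $O(\log|\Gamma|)$ space and $\poly(\log|\Gamma|)$ time per character. The main technical wrinkle throughout is reconciling the two demands of (b), namely $f(c)=c$ on singletons and low collision probability across all of $\Sigma^{\le m}$, within a single $\Fbb_p$-valued hash; the length-augmentation by $r^{|z|}$, activated only when $|z|\ge 2$ so as not to disturb single-character behaviour, is precisely what makes both demands compatible without inflating $|\Gamma|$.
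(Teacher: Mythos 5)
Your proof is correct, and at its core it is the same construction the paper has in mind: the paper gives no proof of \Cref{thm:rolling_hash} beyond citing Karp--Rabin and a footnote describing the plain polynomial hash $f(z)=\sum_{i=1}^{|z|}z_i\cdot r^{i-1}\bmod p$ with $p$ a prime of size $\Theta(|\Sigma|+m/\eta_1)$ and $r$ random, with exactly your root-counting argument implicit. The one genuine difference is your length-augmentation term $r^{|z|}$, and it is a worthwhile one: the plain footnote formula collides deterministically on distinct strings such as $0$ and $00$ (both hash to $0$), so it does not literally satisfy item (b) over all of $\Sigma^{\le m}$; your tweak repairs this while keeping $f(c)=c$ on single characters and keeping $|\Gamma|=O(|\Sigma|+m/\eta_1)$, and your case analysis (same length, differing lengths both $\ge 2$, one singleton, two singletons) correctly shows the difference polynomial is nonzero of degree at most $m$, giving collision probability at most $m/p\le\eta_1$. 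For context, the discrepancy is harmless in the paper itself, since in \Cref{constr:sketch_for_each_random_walk} the hash is only ever compared on segments $\Tcal_x(i,j)$ and $\Tcal_y(i,j)$ of identical length $m/2^i$, where the plain hash already suffices; but as a proof of the theorem as stated, your version is the more careful one. Your streaming implementation (maintaining $h$ and $R=r^t$ with $O(1)$ modular operations per character) matches the intended $O(\log|\Gamma|)$ space and $\poly(\log|\Gamma|)$ time-per-character bounds.
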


In addition, we define the following quantities for each segment $\Tcal_x(i,j)$ which will be used later to identify an effective alignment:
\begin{itemize}
\item let $\ell^x_{i,j}\in[m]$ be the length of the pre-image of $\Tcal_x(i,j)$ in $x[1..\infty]$; 
\item let $\alpha^x_{i,j}\in\bin$ be the indicator of whether the pre-image of the first character of $\Tcal_x(i,j)$ and the last character of $\Tcal_x(i,j-1)$\footnote{If $j=1$, $\Tcal_x(i,j-1)$ is not well-defined. Then we simply define $\alpha^x_{i,j}=0$ in this case.} are identical;
\item let $\beta^x_{i,j}\in\bin$ be the indicator of whether the pre-image of the last character of $\Tcal_x(i,j)$ and the first character of $\Tcal_x(i,j+1)$\footnote{If $j=2^i$, $\Tcal_x(i,j+1)$ is not well-defined. Then we simply define $\beta^x_{i,j}=0$ in this case.} are identical.
\end{itemize}
That is, assume the pointers are $p_1,p_2,p_3,p_4$ for $\Tcal_x(i,j-1)$'s last character, $\Tcal_x(i,j)$'s first character, $\Tcal_x(i,j)$'s last character, and $\Tcal_x(i,j+1)$'s first character respectively. Then
\begin{itemize}
\item $\ell^x_{i,j}=p_3-p_2+1$.
\item If $p_1=p_2$ then $\alpha^x_{i,j}=1$; otherwise $\alpha^x_{i,j}=0$.
\item If $p_3=p_4$ then $\beta^x_{i,j}=1$; otherwise $\beta^x_{i,j}=0$.
\end{itemize}

Let $U_x$ be the set of $6$-tuples $(i,j,h^x_{i,j},\ell^x_{i,j},\alpha^x_{i,j},\beta^x_{i,j})\in[0..d]\times[m]\times\Gamma\times[m]\times\bin\times\bin$.
We write $u_x$ as the indicator vector for $U_x$, which is a binary vector of length $4(d+1)\cdot m^2\cdot|\Gamma|$. 
We now apply \Cref{thm:sparse_recovery_sketch}\footnote{We use a more recent result \cite{KapralovNPWWY17} instead of \cite{PoratL07} as in \cite{BelazzouguiZ16}, since it is easier to state our dependence on the failure probability as an independent parameter. The result of \cite{PoratL07} is stated only for failure probability $1/n$. It may be possible to alter their result to also have failure probability stated as an independent parameter, but we have not verified this, and citing \cite{KapralovNPWWY17} instead allowed us to bypass doing so.} with $L=4(d+1)\cdot m^2\cdot|\Gamma|,\Delta=6(d+1)\cdot Ck^2,\eta_2=\eta/2$ to obtain the sketch $\Scal(u_x)$.

\begin{theorem}[{\cite[Section A.3]{KapralovNPWWY17}}]\label{thm:sparse_recovery_sketch}
Let $L,\Delta$ be two positive integers and $\eta_2\in(0,1)$. Let $z,z'\in\bin^L$. There exists an efficient randomized linear sketching algorithm $\Scal$ where
\begin{itemize}
\item the sketch size and the encoding space are $O(\Delta\log L+\log(L/\eta_2))$;
\item the encoding time per bit and the decoding time are $\poly(\Delta\log(L/\eta_2))$;
\item if $z-z'$ has at most $\Delta$ non-zero coordinates, the decoder recovers $z-z'$ exactly;
\item if $z-z'$ has more than $\Delta$ non-zero coordinates, the decoder reports ``fail'' with probability at least $1-\eta_2$.
\end{itemize}
\end{theorem}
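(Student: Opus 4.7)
The plan is to construct a linear sketch $\Scal$ and a decoder so that, setting $w = z - z' \in \cbra{-1, 0, +1}^L$, we can recover $w$ from $\Scal(w) = \Scal(z) - \Scal(z')$ whenever $\vabs{w}_0 \le \Delta$, and otherwise detect failure with probability at least $1-\eta_2$. I would use the standard hash-into-buckets template for ``for-each'' $\ell_0$-recovery, augmented with polynomial fingerprints for verification. The three ingredients are: (i) enough buckets so that each support element is isolated with high probability; (ii) counters per bucket that recover both the index and the value inside an isolated bucket; and (iii) a Schwartz--Zippel-style verification step that rejects bogus decodings from non-isolated buckets and also catches the $\vabs{w}_0 > \Delta$ case.

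Concretely, fix a prime $q = \poly(L/\eta_2)$, a pairwise independent hash $h \colon [L] \to [B]$ with $B = \Theta(\Delta)$, and an independent random $\phi \colon [L] \to \Fbb_q$ (for instance $\phi(i) = \alpha^i \bmod q$ for a uniformly random $\alpha \in \Fbb_q$). For each bucket $b \in [B]$ maintain three running sums over the indices $i$ with $h(i) = b$: a count $s^{(0)}_b = \sum_i w_i$, a weighted sum $s^{(1)}_b = \sum_i i \cdot w_i$, and a fingerprint $s^{(2)}_b = \sum_i \phi(i) \cdot w_i$. Call bucket $b$ \emph{isolated} when $|s^{(0)}_b| = 1$, when $i^\star := s^{(1)}_b / s^{(0)}_b$ is an integer in $[L]$, and when $\phi(i^\star) \cdot s^{(0)}_b = s^{(2)}_b$; in that case we read off the support--value pair $(i^\star, s^{(0)}_b)$. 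To amplify isolation I would take $T = O(\log(L/\eta_2))$ independent copies of this table and decode by peeling: find an isolated bucket in any table, output its pair, subtract its (linear) contribution from every table, and repeat. Pairwise independence plus a Chernoff bound shows that, provided $\vabs{w}_0 \le \Delta$, peeling recovers all of $w$ except with probability $\eta_2/2$; while the fingerprint check fails on a non-isolated bucket with probability $O(1/q)$, so a union bound over all $BT = \poly(\Delta \log(L/\eta_2))$ buckets rules out spurious outputs with probability $\eta_2/2$. The decoder reports ``fail'' whenever peeling leaves residual nonzero sums or emits more than $\Delta$ entries, which handles the $\vabs{w}_0 > \Delta$ branch.

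The main obstacle is tightening the sketch size to the stated \emph{additive} bound $O(\Delta \log L + \log(L/\eta_2))$ rather than the multiplicative $O(\Delta \log L \cdot \log(L/\eta_2))$ that falls out of the ``$T$ copies'' baseline above, and simultaneously getting decoding time $\poly(\Delta \log(L/\eta_2))$. Following \cite{KapralovNPWWY17}, the trick is to replace independent repetitions by a single sketching layer built on a suitable linear code: encode each support coordinate of $w$ by a short code of length $O(\Delta)$ and relative distance $\Omega(\log(L/\eta_2) / \log L)$ over an alphabet of size $\poly(L/\eta_2)$, so that the role of the $T$ repetitions is played internally by the code's distance, and the $\log(L/\eta_2)$ cost is paid once inside the alphabet size rather than once per bucket. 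Efficient syndrome decoding of this code (for example, a Reed--Solomon-style decoder combined with the peeling above) then delivers the stated sketch size and running time, and the fingerprint-verification layer from the previous paragraph carries over unchanged to certify failure.
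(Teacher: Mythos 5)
First, note that the paper does not prove this statement at all: it is imported verbatim as a black-box result, cited to \cite[Section A.3]{KapralovNPWWY17} (replacing the use of \cite{PoratL07} in \cite{BelazzouguiZ16}), so there is no internal proof to compare yours against. Judged on its own, your proposal has a genuine gap. The hash-into-buckets-with-peeling scheme of your first two paragraphs proves a weaker statement than the one required: it yields sketch size with a multiplicative $\log(L/\eta_2)$ factor on the $\Delta\log L$ term (as you concede), and, more importantly, it only recovers a $\Delta$-sparse difference \emph{with high probability}, whereas the third bullet demands exact recovery whenever $z-z'$ has at most $\Delta$ non-zero coordinates, with no failure probability budgeted there (and the paper's use of the theorem inside \prettyref{constr:sketch_for_each_random_walk} relies on exactly this accounting: all randomness loss is charged to the hash collisions and to the ``fail'' detection). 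Your third paragraph gestures at the right fix but does not carry it out, and its size accounting is wrong: a code of length $O(\Delta)$ over an alphabet of size $\poly(L/\eta_2)$ costs $O(\Delta\log(L/\eta_2))$ bits, which is again multiplicative in $\log(1/\eta_2)$ and exceeds the stated additive bound $O(\Delta\log L+\log(L/\eta_2))$ once $\eta_2$ is small; it also gives no explanation of why recovery in the sparse case becomes deterministic.

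The missing idea is to split the two guarantees across two separate layers of the sketch. For the sparse case, use a \emph{deterministic} linear measurement matrix that identifies every $\Delta$-sparse vector uniquely and efficiently: e.g.\ the $2\Delta$ Reed--Solomon/Vandermonde syndromes $\sum_i w_i\alpha_i^j$, $j=0,\dots,2\Delta-1$, over a field $\Fbb_q$ with $q=\poly(L)$ and distinct evaluation points $\alpha_i$; this costs $O(\Delta\log L)$ bits, is linear, and Berlekamp--Massey/Prony-type syndrome decoding recovers any $w$ with $\vabs{w}_0\le\Delta$ exactly, with no randomness and in $\poly(\Delta\log L)$ time. Randomness is then spent only once, on a single verification fingerprint of $O(\log(L/\eta_2))$ bits (a random linear hash of $w$, evaluated as well on the candidate $\hat w$ output by the syndrome decoder): if $\vabs{w}_0>\Delta$, the decoder either fails outright or outputs some $\hat w\neq w$ with $\vabs{\hat w}_0\le\Delta$, and the fingerprint mismatch is detected except with probability $\eta_2$. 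This yields the additive bound $O(\Delta\log L+\log(L/\eta_2))$, the unconditional exact recovery in the sparse case, and the $1-\eta_2$ ``fail'' guarantee in the dense case, which is what the cited construction of \cite{KapralovNPWWY17} provides and what your bucket-and-peel route, even with the sketched code-based amendment, does not.
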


The final sketch is $(\ell^x_{0,1},\Scal(u_x))$. Since $\ell^x_{0,1}\in[m]$ is the length of the pre-image of the whole $\lambda$, it in particular indicates if $\lambda$ walks through $x$.

\paragraph*{\fbox{The decoding algorithm.}}
Let $\Scal_x=(\ell^x_{0,1},\Scal(u_x))$ and $\Scal_y=(\ell^y_{0,1},\Scal(u_y))$ be the sketches for $x,y$ in this round of random walk respectively. We first condition on the event $\Ecal$ that the following two bullets simultaneously occur:
\begin{itemize}
\item For all $i,j$, if $\Tcal_x(i,j)\neq\Tcal_y(i,j)$ then $h^x_{i,j}\neq h^y_{i,j}$.
\item If $u_x$ and $u_y$ differ in more than $\Delta$ coordinates, the decoder from \Cref{thm:sparse_recovery_sketch} reports ``fail''.
\end{itemize}
By a union bound, this conditioning costs at most $\eta$ probability loss. 

If $\ell^x_{0,1}<n$ or $\ell^y_{0,1}<n$ or the decoder from \Cref{thm:sparse_recovery_sketch} reports ``fail'', we directly report ``error''. Let $\Ecal'$ be the event that we haven't reported ``error''. Then it suffices to verify the following two claims.

\begin{claim}
Conditioning on $\Ecal$ and $\lambda$ containing at most $C\cdot k^2$ progress steps, $u_x$ and $u_y$ differ in at most $\Delta$ coordinates.
\end{claim}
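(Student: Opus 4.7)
The plan is to bound $\|u_x-u_y\|_0$ by charging tuple disagreements to progress steps, depth by depth. For each pair $(i,j)$ there is exactly one 6-tuple in $U_x$ and exactly one in $U_y$, so if these two tuples are unequal they contribute exactly two coordinates at which the indicator vectors disagree. Writing
\[
N:=\#\cbra{(i,j)\in [0..d]\times[2^i]:\pbra{h^x_{i,j},\ell^x_{i,j},\alpha^x_{i,j},\beta^x_{i,j}}\neq\pbra{h^y_{i,j},\ell^y_{i,j},\alpha^y_{i,j},\beta^y_{i,j}}},
\]
it suffices to prove $N\le 3(d+1)\cdot Ck^2$, which yields $\|u_x-u_y\|_0=2N\le 6(d+1)\cdot Ck^2=\Delta$.

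Fix a depth $i\in[0..d]$ and bound separately the indices $j$ at which each of the four components can disagree. Under the conditioning $\Ecal$, property (b) of \Cref{thm:rolling_hash} gives $h^x_{i,j}\neq h^y_{i,j}\iff \Tcal_x(i,j)\neq\Tcal_y(i,j)$ (forward: the collision bound guaranteed inside $\Ecal$; backward: the same hash $f$ applied to the same string returns the same value), and the latter holds iff some step of segment $(i,j)$ is a progress step. Since $\lambda$ has at most $Ck^2$ progress steps and each lies in a unique segment at each depth, at most $Ck^2$ indices $j$ produce an $h$-disagreement. The $\ell$-component is absorbed into this count: if no progress step occurs in segment $(i,j)$ then $x[p_t]=y[q_t]$ at every step $t$ in the segment, so the shared bit $r(t,\cdot)$ advances both pointers identically throughout, and telescoping forces $\ell^x_{i,j}=\ell^y_{i,j}$.

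For the boundary indicators, unwinding \Cref{def:cgk} gives $\alpha^x_{i,j}=1\iff r(s{-}1,x[p_{s-1}])=0$, where $s$ is the first step of segment $(i,j)$, and $\beta^x_{i,j}=1\iff r(e,x[p_e])=0$, where $e$ is the last step. If step $s-1$ is not a progress step then $x[p_{s-1}]=y[q_{s-1}]$ and the same coin is read, so $\alpha^x_{i,j}=\alpha^y_{i,j}$; hence each $\alpha$-disagreement charges to a distinct progress step at time $s-1$, giving at most $Ck^2$ $\alpha$-disagreements per depth. The symmetric argument gives at most $Ck^2$ $\beta$-disagreements per depth. The edge cases $j=1$ and $j=2^i$, where by convention $\alpha^x_{i,1}=\beta^x_{i,2^i}=0$, are handled identically on the $y$ side and so trivially agree. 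Summing the three bounds yields $N\le 3(d+1)\cdot Ck^2$, as required.

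The entire argument is a careful accounting based on \Cref{def:cgk} and the conditioning $\Ecal$, and I do not anticipate any serious obstacle. The one point meriting a careful check is that each progress step is charged at most three times across the four component-counts at a single depth (once via the $h$-and-$\ell$ count of the segment that contains it, once as the $\alpha$-witness of the segment immediately following it, and once as the $\beta$-witness of its own segment), which is what produces the clean union bound of $3Ck^2$ disagreeing segments per depth.
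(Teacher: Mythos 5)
Your proof is correct and takes essentially the same approach as the paper: bound, per depth, the number of segments where the $h$/$\ell$, $\alpha$, and $\beta$ components can each disagree by the number of progress steps, yielding the factor $3(d+1)\cdot Ck^2 = \Delta/2$. The only (minor) stylistic difference is that for the boundary indicators $\alpha,\beta$ you argue directly from the shared coin $r(\cdot,\cdot)$ at the boundary step, whereas the paper instead invokes the pre-image uniqueness property (\Cref{thm:CGK}, Item (2)) on the concatenated adjacent segments to obtain the containment $V_3\subseteq V_1\cup\{(i,j):(i,j-1)\in V_1\}$.
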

\begin{proof}
Let 
\begin{align*}
&V_1=\cbra{(i,j)\mid \Tcal_x(i,j)\neq\Tcal_y(i,j)},
&V_2=\cbra{(i,j)\mid \ell^x_{i,j}\neq\ell^y_{i,j}},\\
&V_3=\cbra{(i,j)\mid \alpha^x_{i,j}\neq\alpha^y_{i,j}},
&V_4=\cbra{(i,j)\mid \beta^x_{i,j}\neq\beta^y_{i,j}}.
\end{align*}
Conditioning on $\Ecal$, the number of coordinates where $u_x,u_y$ differ equals twice the size of $V_1\cup V_2\cup V_3\cup V_4$.
\begin{itemize}
\item \textbf{For $V_1$.} Observe that $(i,j)\in V_1$ iff there is some progress step among $1+(j-1)\cdot m/2^i$ and $j\cdot m/2^i$. Hence by the structure of the full binary tree, we have $|V_1|\le(d+1)\cdot\#\text{progress steps in $\lambda$}\le(d+1)\cdot Ck^2$.
\item \textbf{For $V_2$.} By Item (2) in \Cref{thm:CGK}, if $\Tcal_x(i,j)=\Tcal_y(i,j)$ then their pre-image equals, which implies $\ell^x_{i,j}=\ell^y_{i,j}$. Thus $V_2\subseteq V_1$.
\item \textbf{For $V_3$.} By Item (2) in \Cref{thm:CGK}, if $\Tcal_x(i,j-1)\circ\Tcal_x(i,j)=\Tcal_y(i,j-1)\circ\Tcal_y(i,j)$ then their pre-image equals, which implies the random walk on this part is identical for $x,y$ and hence $\alpha^x_{i,j}=\alpha^y_{i,j}$. Thus $V_3\subseteq V_1\cup\cbra{(i,j)\mid (i,j-1)\in V_1}$.
\item \textbf{For $V_4$.} Similar analysis as for $V_3$. We have $V_4\subseteq V_1\cup\cbra{(i,j)\mid (i,j+1)\in V_1}$.
\end{itemize}
In all, $|V_1\cup V_2\cup V_3\cup V_4|=|V_1\cup V_3\cup V_4|\le|V_1|+|V_3\setminus V_1|+|V_4\setminus V_1|\le 3\cdot |V_1|\le 3(d+1)\cdot Ck^2=\Delta/2$.
\end{proof}

\begin{claim}
Conditioning on $\Ecal,\Ecal'$, we can compute in $\poly(k\log(n/\eta))$ time an effective alignment $\Acal$ consistent with $\lambda$.
\end{claim}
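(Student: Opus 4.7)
The plan is to use the sparse-recovery decoder of \Cref{thm:sparse_recovery_sketch} to obtain $u_x - u_y$ exactly under $\Ecal, \Ecal'$, and then traverse the two binary trees $\Tcal_x, \Tcal_y$ in synchronized top-down fashion to reconstruct enough of $\lambda$ to assemble $\Acal = (G, g_x, g_y)$. Under $\Ecal'$ we have $\ell^x_{0,1}, \ell^y_{0,1} \ge n$ and the decoder did not report ``fail''; combined with $\Ecal$'s guarantee that ``fail'' occurs whenever $\|u_x - u_y\|_0 > \Delta$, this forces $\|u_x - u_y\|_0 \le \Delta$, so the sparse recovery returns $u_x - u_y$ exactly. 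This hands us the complete list of nodes $(i,j)$ at which $\Tcal_x(i,j) \ne \Tcal_y(i,j)$, together with both sides' hashes, lengths, and boundary indicators at those nodes; since $\Ecal$ also rules out hash collisions, comparing $h^x_{i,j}$ with $h^y_{i,j}$ reveals whether any fixed node agrees. There are at most $\Delta = O(k^2 \log n)$ differing nodes.

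Next I descend the tree while maintaining for every visited node its pre-image ranges $[a^x, b^x] \subseteq [1,n]$ in $x$ and $[a^y, b^y] \subseteq [1,n]$ in $y$, with the root initialized to $[1, \ell^x_{0,1}]$ and $[1, \ell^y_{0,1}]$. At every differing node, the decomposition $\Tcal_x(i,j) = \Tcal_x(i+1, 2j-1) \circ \Tcal_x(i+1, 2j)$ shows that at least one child also differs, so I know directly its $\ell, \alpha, \beta$; for the other (possibly agreeing) sibling, the identity $\ell^x_{\text{parent}} = \ell^x_{\text{left}} + \ell^x_{\text{right}} - \alpha^x_{\text{right}}$ (and its $y$-analogue, using the equivalent indicator $\alpha^x_{\text{right}} = \beta^x_{\text{left}}$ when only the left child differs) pins down its length, after which the parent's pre-image range splits into the two children's ranges. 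The recursion stops at agreeing nodes and at depth-$d$ differing leaves, so only $O(\Delta)$ tree-node visits occur.

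To assemble $\Acal$, at each maximal agreeing node $(i,j)$ with pre-image ranges $[a^x, b^x]$ and $[a^y, b^y]$ (necessarily of equal length), the equality $\Tcal_x(i,j) = \Tcal_y(i,j)$ forces $x[p_t] = y[q_t]$ at every step $t$ of the segment; hence $r(t, x[p_t]) = r(t, y[q_t])$ and both pointers advance in lockstep, keeping the offset $p_t - q_t = a^x - a^y$ constant throughout. I then emit the compactly represented run of matched edges $\cbra{(a^x + s,\, a^y + s) : 0 \le s \le b^x - a^x}$, each of which is a state actually visited by $\lambda$ with matching characters. At each depth-$d$ differing leaf $(d, j)$, the single pre-image positions $p_j, q_j$ carry characters $h^x_{d,j}, h^y_{d,j}$ (since $f$ is the identity on single characters), which I record as singletons in $U_x, U_y$ with $g_x(p_j) = h^x_{d,j}$ and $g_y(q_j) = h^y_{d,j}$. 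The runs and singletons together cover $[1, n]$ on each side, the runs are non-intersecting by monotonicity of the pointers, and every matched edge corresponds to a state of $\lambda$; hence $\Acal$ is an effective alignment consistent with $\lambda$.

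The main obstacle is the careful bookkeeping of the $\alpha, \beta$ boundary indicators when the pointer stays at the same position across a segment boundary, so that the recovered pre-image ranges knit together without gaps or overlaps and the resulting matching is legitimately non-intersecting. Once this bookkeeping is in place, the running time is dominated by the $O(\Delta)$ tree-node visits and the single sparse-recovery decoding, each costing $\poly(\log(n/\eta))$ time for arithmetic, hash evaluation and sparse recovery, giving the desired $\poly(k \log(n/\eta))$ bound.
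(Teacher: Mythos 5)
Your proposal is correct and matches the paper's approach (a DFS over the binary tree, using the sparse-recovery output to mark differing nodes, emitting runs of matched edges at unmarked nodes with a constant pointer offset and singletons at marked leaves, and using the $\ell,\alpha,\beta$ indicators to reconstruct pre-image ranges down the tree). The only cosmetic difference is that the paper computes split points directly from the marked child's $\ell$, whereas you phrase it as deriving the unmarked sibling's $\ell$ via $\ell_{\text{parent}}=\ell_{\text{left}}+\ell_{\text{right}}-\alpha_{\text{right}}$ — these are the same computation.
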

\begin{proof}
Conditioning on $\Ecal,\Ecal'$, we use the decoder of \Cref{thm:sparse_recovery_sketch} to mark those $(i,j)$ that 
$$
(h^x_{i,j},\ell^x_{i,j},\alpha^x_{i,j},\beta^x_{i,j})\neq(h^y_{i,j},\ell^y_{i,j},\alpha^y_{i,j},\beta^y_{i,j}).
$$
We construct $\Acal$ by performing a DFS on the tree as in \Cref{alg:construct_effective_alignment}. Its correctness is guaranteed by $\ell^x_{0,1},\ell^y_{0,1}\ge n$.

\begin{algorithm}[ht]
\caption{Construct effective alignment $\Acal$}\label{alg:construct_effective_alignment}
\DontPrintSemicolon
\LinesNumbered
\SetKwProg{proc}{Procedure}{}{}
\SetKwFunction{DFS}{DFS}
\KwIn{$\ell_{0,1}^x,\ell_{0,1}^y$ and $(h^x_{i,j},\ell^x_{i,j},\alpha^x_{i,j},\beta^x_{i,j}),(h^y_{i,j},\ell^y_{i,j},\alpha^y_{i,j},\beta^y_{i,j})$ for marked nodes $(i,j)$}
\KwOut{An effective alignment $\Acal$ consistent with $\lambda$}
Initialize $\Acal=(G,g_x,g_y)$ with an empty graph $G$ and empty functions $g_x,g_y$\;
\DFS{$1,0,1,\ell^x_{0,1},1,\ell^y_{0,1}$}\;
\proc{\DFS{$i,j,s_x,e_x,s_y,e_y$}}{
\tcc{$(i,j)$ is the current node on the binary tree.
$s_x$ (resp., $s_y$) is the pointer for $\Tcal_x(i,j)$'s (resp., $\Tcal_y(i,j)$'s) first character.
$e_x$ (resp., $e_y$) is the pointer for $\Tcal_x(i,j)$'s (resp., $\Tcal_y(i,j)$'s) last character.}
\eIf{$(i,j)$ is not marked}{
\ForEach{edge $(p_x,p_y)\in\cbra{(s_x,s_y),(s_x+1,s_y+1),\ldots,(e_x,e_y)}$}{
\lIf{$p_x\le n$ and $p_y\le n$}{Add edge $(p_x,p_y)$ to $G$}
\lIf{$p_x\le n$ and $p_y>n$}{Update $g_x(p_x)=0$}
\lIf{$p_x>n$ and $p_y\le n$}{Update $g_y(p_y)=0$}
}
}{
\eIf{$(i,j)$ is a leaf}{
Update $g_x(s_x)=h^x_{i,j}$ and $g_y(s_y)=h^y_{i,j}$
\tcp*[f]{due to Item (b) in \Cref{thm:rolling_hash}}
}{
$(i_1,j_1)\gets$ left child of $(i,j)$ and $(i_2,j_2)\gets$ right child of $(i,j)$\;
\eIf{$(i_1,j_1)$ is marked}{
$m_x\gets s_x+\ell^x_{i_1,j_1}-1$ and $m_y\gets s_y+\ell^y_{i_1,j_1}-1$\;
$o_x\gets 1-\beta^x_{i_1,j_1}$ and $o_y\gets 1-\beta^y_{i_1,j_1}$\;
\DFS{$i_1,j_1,s_x,m_x,s_y,m_x$}\;
\DFS{$i_2,j_2,m_x+o_x,e_x,m_y+o_y,e_y$}
}(\tcp*[f]{$(i_2,j_2)$ must be marked}){
$m_x\gets e_x-\ell^x_{i_2,j_2}+1$ and $m_y\gets e_y-\ell^y_{i_2,j_2}+1$\;
$o_x\gets 1-\alpha^x_{i_1,j_1}$ and $o_y\gets 1-\alpha^y_{i_1,j_1}$\;
\DFS{$i_1,j_1,s_x,m_x-o_x,s_y,m_x-o_x$}\;
\DFS{$i_2,j_2,m_x,e_x,m_y,e_y$}
}
}
}
}
\end{algorithm}
Observe that the number of recursions is at most the number of marked nodes, which is at most $\Delta$. Hence the running time of \Cref{alg:construct_effective_alignment} is $\poly(k\log(n/\eta))$.
\end{proof}

\paragraph*{\fbox{Bounds on the parameters.} }
The decoding time is obvious, so we only calculate the parameters for encoding.
By \Cref{thm:sparse_recovery_sketch}, the sketch size is 
$$
O(\log m)+O(\Delta\log L+\log(L/\eta_2))=O\pbra{k^2\log(n/\eta)\log n}.
$$
When doing encoding, we work on each depth of $\Tcal$ in parallel. For a fixed depth, the hashes can be computed sequentially. Hence by \Cref{thm:rolling_hash} and \Cref{thm:sparse_recovery_sketch}, the encoding space\footnote{We omit the space for storing auxiliary information (e.g., current nodes) in the calculation, since these are minor term.} is 
$$
(d+1)\cdot O(\log|\Gamma|)+O(\Delta\log L+\log(L/\eta_2))=O\pbra{k^2\log(n/\eta)\log n}.
$$
Note that when a character arrives, we generate at most $(d+1)$ $6$-tuples, hence the encoding time per character is bounded by
\begin{equation*}
(d+1)\cdot\pbra{\poly(\log|\Gamma|)+\poly(\Delta\log(L/\eta_2))}=\poly(k\log(n/\eta)).\tag*{\qedhere}
\end{equation*}
\end{proof}
\section{CGK Random Walks on Self-similar Strings}\label{sec:bounds_on_rho}

This section is devoted to the proof of \Cref{prop:bounds_on_rho}. It characterizes CGK random walks on strings of certain self-similarity, which may be interesting on its own.

\begin{proposition*}[\Cref{prop:bounds_on_rho} restated]
There exists a universal constant $C_4\ge1$ such that the following holds. 
Assume $X,Y$ are two identical length-$L$ strings over alphabet $\Sigma$. 
Assume there exists a size-$M$ matching $(i_1,j_1),\ldots,(i_M,j_M)\in[L]^2$ such that
\begin{itemize}
\item $i_t>j_t$ and $X[i_t]=Y[j_t]$ hold for all $t\in[M]$;
\item $i_1<i_2<\cdots<i_M$ and $j_1<j_2<\cdots<j_M$.
\end{itemize}
Let $\rho=C_4\cdot(L-M)$ and $(\hat I,\hat J)$ be any state satisfying $\hat I-\hat J\ge\rho$.
Then a CGK random walk on $X,Y$ starting from $(\hat I,\hat J)$ will miss $(L,L)$ with probability at least $0.5$.
\end{proposition*}

We will first provide necessary definitions and explore basic properties in \Cref{sec:stable-defn}. Then we relate them with edit distances in \Cref{sec:stable-edit}, and present the main proof in \Cref{sec:stable-and} and \Cref{sec:together}. The proof of a technical lemma is deferred to \Cref{sec:stabilize}.

\subsection{Stable States}
\label{sec:stable-defn}

We fix the matching in \Cref{prop:bounds_on_rho}, so when we say $(i,j)$ is a matched edge it means $(i,j)$ is an edge in the matching. 
We extend $X,Y$ to $X[-\infty..\infty],Y[-\infty..\infty]$ by adding dummy characters $X[i]=Y[i]=X[L]$ for all $i>L$, and $X[i]=Y[i]=X[1]$ for all $i<1$. We also add matched edges $(i,i-1)$ for all $i>L$ as well as $i\le 1$. Note that all the edges are still non-intersecting. 
Though the added characters may not be consistent with the original input strings $x,y$, it does not change the probability of the walk missing $(L,L)$.
Since $X=Y$ and the initial state satisfies $\hat I\ge \hat J+\rho\ge \hat J$, any future state $(I,J)$ must still satisfy $I\ge J$.

We introduce the notion of \emph{stable segment}. 
\begin{definition}[Stable segment]
We say $[l..r]$ is a \emph{stable segment}, if for every matched edge $(I,J)$ (where we must have $I>J$), exactly one of the following two conditions hold:
\begin{itemize}
    \item $J<l$ and $I\le r$.
    \item $J\ge l$ and $I>r$.
\end{itemize}
\end{definition}

\begin{figure}[ht]
    \centering
    \begin{tikzpicture}
    \node[scale=1] at (0,60 pt) {index};
    \foreach \i in {0,1,2,3,4,5,6,7,8,9,10}
        \node[scale=1] at (\i*20 + 60 pt, 60 pt) {$\i$};

    \foreach \i/\p in {0/$a$,1/$a$,2/$c$,3/$a$,4/$b$,5/$c$,6/$a$,7/$b$,8/$a$,9/$b$,10/$b$}
    {
        \node at (\i*20 + 60 pt, 40 pt) {$\phantom{b}$\p$\phantom{b}$};
    }

    \node at (0, 20pt) {$X$};

    \node at (30 pt, 0pt) {$\ldots\ldots$};
    \node at (280 pt, 0 pt) {$\ldots\ldots$};

    \node at (0, -20 pt) {$Y$};

    \draw[blue,dashed,very thick] (70pt, -30pt) -- (70pt, 50pt);
    \draw[blue,dashed,very thick] (110pt, -30pt) -- (110pt, 50pt);
    \draw[blue,dashed,very thick] (170pt, -30pt) -- (170pt, 50pt);
    \draw[blue,dashed,very thick] (210pt, -30pt) -- (210pt, 50pt);
    \draw[blue,dashed,very thick] (250pt, -30pt) -- (250pt, 50pt);

    \draw[red,very thick] (60pt, -20pt) -- (80pt, 20pt);   
    \draw[red,very thick] (80pt, -20pt) -- (120pt, 20pt);   
    \draw[red,very thick] (100pt, -20pt) -- (160pt, 20pt);   
    \draw[red,very thick] (120pt, -20pt) -- (180pt, 20pt);   
    \draw[red,very thick] (140pt, -20pt) -- (200pt, 20pt);   
    \draw[red,very thick] (180pt, -20pt) -- (220pt, 20pt);
    \draw[red,very thick] (200pt, -20pt) -- (240pt, 20pt);       
    \draw[red,very thick] (240pt, -20pt) -- (260pt, 20pt);   
    
    \foreach \i in {0,1,2,3,4,5,6,7,8,9,10}
        \draw[fill=black] (\i*20 + 60 pt, 20 pt) circle (2pt);
    \foreach \i in {0,1,2,3,4,5,6,7,8,9,10}
        \draw[fill=black] (\i*20 + 60 pt, -20 pt) circle (2pt);
\end{tikzpicture}
    \caption{A stable partition for $X[1..L]=Y[1..L]={acabcabab}$ ($L=9$).}
    \label{fig:stable-state}
\end{figure}
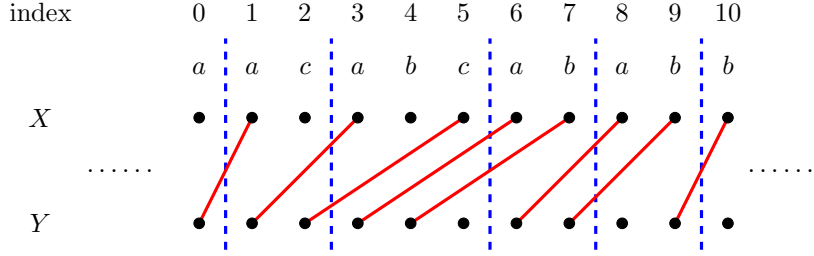

For example in \Cref{fig:stable-state}, every segment separated by blue dashed lines is a stable segment. 

\begin{remark}
To gain a better intuition of the definition, consider the special case where the string $X[1..L]$ has period $p$ and every matched edge $(I,J)$ inside segment $[1..L]$ satisfies $I-J=p$. In this periodic case, a segment contained in $[2..L-1]$ is stable if and only if its length is $p$. 

Our motivation is that, when there are few unmatched characters, using our more generalized definition we can approximately preserve the nice properties of periodic strings. 
For example, when $X$ has period $p$, the strings $X[i..i+tp-1]$ and $X[i+tp..i+2tp-1]$ must be identical. In a non-periodic case, we can similarly prove that $X[i..j-1]$ and $X[j..j+(j-i)-1]$ have small edit distance if $[i..j-1]$ can be divided into several stable segments (see \Cref{lem:equallength} for a more formal statement). In the remaining part of the section, readers are encouraged to use the periodic case for a more intuitive understanding.
\end{remark}

The following lemma says that every character can be the beginning/ending of some stable segment.
\begin{lemma} \label{lem:existseg}
The following hold.
\begin{enumerate}[label=(\arabic*)]
    \item For every $r\in \Zbb$, there exists some $l\le r$ such that $[l..r]$ is a stable segment.
    \item For every $l\in \Zbb$, there exists some $r \ge l$ such that $[l..r]$ is a stable segment.
\end{enumerate}
\end{lemma}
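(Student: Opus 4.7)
The plan is to exploit the fact that, after extension, the matching is non-crossing with $I>J$ everywhere. I can index all matched edges (including the dummy edges $(i,i-1)$ for $i\le 1$ and $i>L$) as $(I_s,J_s)$ with $s$ ranging over a doubly-infinite set, and both $I_s$ and $J_s$ strictly increasing in $s$. The dummy edges on the left and on the right guarantee that this index set is unbounded in both directions, so every ``first index past a threshold'' used below is well-defined.

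For Item (1), given $r\in\Zbb$, I will set $s^{\star}:=\min\{s:I_s>r\}$. This minimum exists because the right-side dummies provide edges with arbitrarily large $I$, and $s^{\star}-1$ exists because the left-side dummies provide edges with arbitrarily negative $I$. Then define $l:=J_{s^{\star}-1}+1$. Verification is a one-line case split on the sign of $I_s-r$: if $I_s\le r$ then $s\le s^{\star}-1$, so $J_s\le J_{s^{\star}-1}=l-1$, placing the edge in Case A ($J<l,\ I\le r$); if $I_s>r$ then $s\ge s^{\star}$, so $J_s\ge J_{s^{\star}}>J_{s^{\star}-1}=l-1$, giving $J_s\ge l$ and hence Case B. Exactly one of the two stable-segment conditions holds for every edge, so $[l..r]$ is stable. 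Moreover $l\le r$ because $J_{s^{\star}-1}<I_{s^{\star}-1}\le r$.

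Item (2) follows by the symmetric construction: set $s^{\star\star}:=\min\{s:J_s\ge l\}$ and $r:=I_{s^{\star\star}}-1$, then run the analogous case split on whether $J_s<l$ or $J_s\ge l$. The inequality $r\ge l$ comes from $r=I_{s^{\star\star}}-1\ge J_{s^{\star\star}}\ge l$. I do not expect a substantive obstacle: the only point requiring care is checking that the extension really does make the min-indices exist for every $r$ and every $l$, and once that is in hand the rest is immediate from the strict monotonicity of $I_s$ and $J_s$.
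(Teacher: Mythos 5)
Your proof is correct and takes essentially the same approach as the paper: identify the threshold edge $(I_{s^\star},J_{s^\star})$ whose $I$-coordinate first exceeds $r$ and determine $l$ from the monotone ordering of edges, then verify the two defining conditions of a stable segment by a clean case split on $I_s\le r$ versus $I_s>r$. The only cosmetic difference is the choice of endpoint — you pick $l=J_{s^\star-1}+1$ (which is $\pred_L(r+1)$, the leftmost valid choice), whereas the paper picks $\min\{r, J_{s^\star}\}$ (the rightmost) and argues by contradiction — but both are valid witnesses for the same lemma.
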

\begin{proof}
We prove Item (1), and the proof of Item (2) is analogous.  Let 
\[
l' = \min \cbra{l': \text{there exists $r'\ge r+1$ such that $(r',l')$ is a matched edge}}.
\]
We claim $\sbra{\min\cbra{r,l'}..r}$ must be a stable segment. If not, then there are two possible cases:
\begin{itemize}
    \item There exists a matched edge $(I,J)$ such that $J<l'$ and $I>r$. This contradicts the minimality of $l'$.
    \item  There exists a matched edge $(I,J)$ such that $l' \le J<I\le r$. By the definition of $l'$, there is another matched edge $(r',l')$ where $r'\ge r+1>I$. Since these two edges are non-intersecting, we must have $l'>J$. A contradiction.\qedhere
\end{itemize}
\end{proof}

\begin{definition}[Stable partition $\Pcal$ and stable states]
Consider a partition $\Pcal=\pbra{\Pcal_i}_i$ of the integers into segments, where $\Pcal_i=[p_i..p_{i+1}-1]$ and $p_i<p_{i+1}$. We say $\Pcal$ is a \emph{stable partition} if every $\Pcal_i$ is a stable segment. Then we say
\begin{itemize}
\item state $(I,J)$ is a \emph{$(\Pcal,b)$-stable state}, if there exists some $i$ such that $J=p_i$ and $I=p_{i+b}$;
\item state $(I,J)$ is a \emph{$b$-stable state}, if there exists a stable partition $\Pcal$ such that $(I,J)$ is a $(\Pcal,b)$-stable state;
\item state $(I,J)$ is a \emph{stable state}, if there exists some $b\ge0$ such that $(I,J)$ is a $b$-stable state. In particular, when $I\ge J>L$, $(I,J)$ is always a stable state.
\end{itemize}
\end{definition}

Note that, given a partition of $[J..I-1]$ into several stable segments, we can apply \Cref{lem:existseg} and extend it into a stable partition. Hence we have the following fact.

\begin{fact}
For any $b\ge1$, a state $(I,J)$ is a $b$-stable state iff $[J..I-1]$ can be partitioned into $b$ stable segments.
Moreover, a state $(I,J)$ is a 0-stable state iff $I=J$.
\end{fact}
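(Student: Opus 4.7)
The plan is to verify both directions of the equivalence for $b\ge 1$ separately, and then handle the $b=0$ case as a special instance of the same construction. The only tool I need beyond unfolding the definitions is \Cref{lem:existseg}, which guarantees that every integer is the left (resp.\ right) endpoint of some stable segment.

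For the forward direction when $b\ge 1$, if $(I,J)$ is a $(\Pcal,b)$-stable state witnessed by an index $i_0$ with $J=p_{i_0}$ and $I=p_{i_0+b}$, then the $b$ consecutive stable segments $\Pcal_{i_0},\Pcal_{i_0+1},\ldots,\Pcal_{i_0+b-1}$ tile the range $[p_{i_0}..p_{i_0+b}-1]=[J..I-1]$ exactly, producing the required partition directly from the definition.

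The reverse direction is the main step. Starting from a given decomposition $[J..I-1]=[q_0..q_1-1]\cup\cdots\cup[q_{b-1}..q_b-1]$ with $q_0=J$, $q_b=I$, and each block a stable segment, I will extend $\cbra{q_0,\ldots,q_b}$ to a bi-infinite increasing sequence $(q_i)_{i\in\Zbb}$ in which every consecutive pair bounds a stable segment. The forward extension is built by iterating \Cref{lem:existseg} Item (2): starting with $l=q_b$ obtain $r$ with $[q_b..r]$ stable, set $q_{b+1}=r+1$, then repeat with $l=q_{b+1}$. The backward extension is symmetric using \Cref{lem:existseg} Item (1) with $r=q_0-1$, setting $q_{-1}=l$ and iterating. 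Each step advances the boundary by at least one, so $q_{b+n}\ge I+n$ and $q_{-n}\le J-n$, which ensures that $(q_i)_{i\in\Zbb}$ exhausts $\Zbb$. The resulting partition $\Pcal$ is stable by construction and has $J=q_0$, $I=q_b$ separated by exactly $b$ indices, so $(I,J)$ is $(\Pcal,b)$-stable.

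For $b=0$, the definition forces $I=p_{i+0}=p_i=J$; conversely, when $I=J$ the same iterative extension seeded at a single boundary $q_0=J$ yields a stable partition in which $J=I$ sits as a boundary, giving that $(I,J)$ is $0$-stable. The only mildly subtle point across the whole argument is checking that the iterative extension never stalls and reaches arbitrarily far in both directions; this is immediate because \Cref{lem:existseg} always returns a segment of length at least one, so the boundary strictly advances at every step.
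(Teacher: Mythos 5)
Your proof is correct and follows essentially the same route as the paper: the forward direction is read off from the definition, and the reverse direction extends the given partition of $[J..I-1]$ to a stable partition of all of $\Zbb$ by iterating \Cref{lem:existseg} in both directions (which is exactly the one-line justification the paper gives before stating the fact). Your write-up just makes the extension and the $b=0$ case explicit, which is fine.
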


From the definition of stable partition, we immediately have the following result.
\begin{proposition}
\label{prop:adjacent}
Let $\pbra{\Pcal_i}_i$ be a stable partition. Then for every matched edge $(I,J)$, there exists some $i$ such that $J\in \Pcal_{i-1}$ and $I\in \Pcal_{i}$.
\end{proposition}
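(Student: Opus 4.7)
The plan is to directly unpack the definition of stable segment, applied to the two blocks of $\Pcal$ straddling $J$. Since $\Pcal=(\Pcal_i)_i$ partitions $\Zbb$ into the consecutive blocks $\Pcal_i=[p_i..p_{i+1}-1]$, there is a unique index $i$ with $J\in\Pcal_{i-1}$; I will show that this same $i$ satisfies $I\in\Pcal_i$.

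First I would apply the stable segment condition to $\Pcal_{i-1}=[p_{i-1}..p_i-1]$ with the matched edge $(I,J)$. Since $J\ge p_{i-1}$, the first alternative (``$J<p_{i-1}$ and $I\le p_i-1$'') cannot hold, so the second alternative must: we obtain $I>p_i-1$, that is, $I\ge p_i$.

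Next, I would apply the same condition to $\Pcal_i=[p_i..p_{i+1}-1]$ with $(I,J)$. Since $J\le p_i-1<p_i$, the second alternative (``$J\ge p_i$ and $I>p_{i+1}-1$'') cannot hold, so the first must: we obtain $I\le p_{i+1}-1$. Combining the two inequalities, $p_i\le I\le p_{i+1}-1$, which is exactly $I\in\Pcal_i$.

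There is no genuine obstacle here — the proposition is an immediate consequence of the definition of stable segment applied to the two blocks adjacent to the one containing $J$. The only mild subtlety is the degenerate case $I=J+1$, where the condition on $\Pcal_{i-1}$ forces $J=p_i-1$ and $I=p_i$, but the same two-step argument still carries through without modification.
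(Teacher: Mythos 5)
Your proof is correct, and it is essentially the same argument as the paper's: both unpack the definition of a stable segment on the block containing $J$ and on the block immediately after it. The paper phrases this as ruling out $i=j$ and $i>j+1$ by contradiction, while you derive the two inequalities $I\ge p_i$ and $I\le p_{i+1}-1$ directly; these are the same two applications of the definition, just stated in a slightly more constructive way.
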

\begin{proof}
Let $i,j$ be such that $I\in \Pcal_i$ and $J\in \Pcal_j$. Since $I>J$, we must have $i\ge j$. If $i>j+1$, then $\Pcal_{j+1}$ cannot be a stable segment. If $i=j$, then $\Pcal_j$ cannot be a stable segment. Hence we must have $i=j+1$.
\end{proof}

\begin{lemma}[Stable predecessors $\pred_L(\cdot),\pred_R(\cdot)$]
\label{lem:pred}
For every $I\in \Zbb$, there exists $\pred_{L}(I)\le \pred_{R}(I) \le I-1$ such that $[J..I-1]$ is a stable segment iff $\pred_{L}(I)\le J\le \pred_{R}(I)$.

Moreover, the following hold:
\begin{enumerate}[label=(\arabic*)]
\item $\pred_{L}(I)\le \pred_{L}(I+1)$;
\item $\pred_{R}(I)\le \pred_{R}(I+1)$.
\end{enumerate}
\end{lemma}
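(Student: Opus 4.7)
The plan is to rephrase the definition of a stable segment as the conjunction of two monotone-in-$J$ conditions, so that the set of valid $J$ automatically becomes a contiguous interval. Unpacking the definition with $l=J$ and $r=I-1$, the segment $[J..I-1]$ is stable if and only if:
\begin{enumerate}[label=(\roman*)]
\item no matched edge $(i_t,j_t)$ satisfies $J\le j_t<i_t\le I-1$ (no matched edge lies entirely inside the segment);
\item no matched edge $(i_t,j_t)$ satisfies $j_t<J$ and $i_t\ge I$ (no matched edge properly straddles the segment).
\end{enumerate}
The crucial structural observation is that (i) is \emph{upward-closed} in $J$---enlarging $J$ only makes the constraint $J\le j_t$ harder to realize---while (ii) is \emph{downward-closed} in $J$---shrinking $J$ only makes $j_t<J$ harder to realize. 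Hence the set of $J\le I-1$ such that $[J..I-1]$ is stable is the intersection of an upward ray with a downward ray, i.e.\ a (possibly empty) interval.

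Next I would define $\pred_{L}(I)$ to be the minimum integer $J$ for which (i) holds, and $\pred_{R}(I)$ to be the maximum integer $J\le I-1$ for which (ii) holds. The characterization above then gives exactly ``$[J..I-1]$ is stable iff $\pred_{L}(I)\le J\le \pred_{R}(I)$'', and Item~(1) of \Cref{lem:existseg} ensures this interval is nonempty, so $\pred_{L}(I)\le \pred_{R}(I)\le I-1$ as required.

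For the monotonicity claims I would simply track how each condition changes when $I$ is replaced by $I+1$. Passing from $I$ to $I+1$ in (i) \emph{enlarges} the family of forbidden edges (the endpoint $i_t=I$ is newly allowed), so the set of $J$ satisfying (i) can only shrink and its minimum can only increase, yielding $\pred_{L}(I)\le \pred_{L}(I+1)$. Symmetrically, passing from $I$ to $I+1$ in (ii) \emph{shrinks} the family of forbidden edges (now $i_t\ge I+1$ is required), so the supremum of $J$ satisfying (ii) can only increase. The only mildly subtle point is that $\pred_{R}$ is additionally capped at $I-1$ versus $I$; this is handled by a short two-case split according to whether this cap is active, and poses no essential difficulty. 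Overall I do not expect a real obstacle---the entire argument is a monotonicity unpacking of the definition, with the cap interaction in the proof of (2) being the one place that needs even a moment of care.
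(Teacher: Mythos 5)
Your proposal is correct, and it proceeds along a route that is genuinely different (and arguably cleaner) than the paper's. For the interval structure, the paper works directly with the definition and shows convexity: it assumes $J_0\le J\le J_1$ with $[J_0..I-1]$ and $[J_1..I-1]$ both stable, and derives a contradiction from any edge violating stability at $J$. You instead \emph{decompose} stability of $[J..I-1]$ into two orthogonal constraints --- no interior edge (i), no straddling edge (ii) --- and observe that the first is upward-closed and the second downward-closed in $J$; the interval structure then falls out for free, and nonemptiness comes from \Cref{lem:existseg} exactly as in the paper. The bigger divergence is in the monotonicity part: the paper's proof of Item~(1) extracts a witness edge $(I',\pred_L(I)-1)$ from the minimality of $\pred_L(I)$ and derives a contradiction if $\pred_L(I+1)<\pred_L(I)$, whereas your decomposition makes both monotonicity claims essentially mechanical: the $J$-set of (i) only shrinks as $I$ grows, and the $J$-set of (ii) only grows as $I$ grows, so $\pred_L$ and $\pred_R$ (the latter being a $\min$ with the monotone cap $I-1$, hence automatically monotone) both move the right way. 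Your approach buys modularity and a more transparent reason for the monotonicity; the paper's has the minor advantage of not needing the decomposition to be set up, but both are about the same total length. One stylistic note: your ``two-case split'' for the cap on $\pred_R$ is overselling the difficulty --- $\min(b_{I+1},I)\ge\min(b_I,I-1)$ follows immediately from $b_{I+1}\ge b_I$ and $I\ge I-1$, with no cases needed.
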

\begin{proof}
To prove the first part, suppose $J_0\le  J\le J_1$, and both $[J_0..I-1]$ and $[J_1..I-1]$ are stable segments. If $[J..I-1]$ is not a stable segment, then there are two possibilities:

\begin{itemize}
\item There is a matched edge $(I',J')$ such that $J'<J$ and $I-1<I'$. Then $J'<J_1\le I-1<I'$, contradicting that $[J_1..I-1]$ is a stable segment.
\item There is a matched edge $(I',J')$ such that $J\le J'<I'\le I-1$. Then $J_0< J'< I'\le I_1$, contradicting that $[J_0..I-1]$ is a stable segment.
\end{itemize}
Hence, $[J..I-1]$ is also a stable segment, which means $\pred_L(I),\pred_R(I)$ are well-defined.

We prove Item (1). Let $J_0 = \pred_L(I)$, then there must be a matched edge $(I',J_0-1)$ where $J_0\le I'\le I-1$.
If $\pred_L(I+1)<J_0$, then $[\pred_L(I+1)..I]$ is a stable segment, which contains matched edge $(I',J_0-1)$. A contradiction.

Now we turn to Item (2). Let $J_1 = \pred_R(I+1)$. Suppose that $\pred_R(I)>J_1$. Then $[J_1+1..I-1]$ is also a stable segment due to Item (1). By the definition of $J_1$, there must be a matched edge $(I',J_1)$ where $I'\ge I+1$. This contradicts that $[J_1+1..I-1]$ is a stable segment.
\end{proof}

For example in \Cref{fig:stable-state}, we have $\pred_{L}(8)=5, \pred_{R}(8)=6$.

\begin{lemma}[$b$-stable predecessors $\pred^{(b)}_L(\cdot),\pred^{(b)}_R(\cdot)$]
\label{lem:b-pred}
Define 
$$
\pred_{L}^{(b)}(I) = \begin{cases}
I&b=0,\\
\pred_{L}(\pred_{L}^{(b-1)}(I))&b\ge 1,
\end{cases}
\quad\text{and}\quad
\pred_{R}^{(b)}(I) = \begin{cases}
I&b=0,\\
\pred_{R}(\pred_{R}^{(b-1)}(I))&b\ge 1.
\end{cases}
$$ 
Then the following hold for all $I,b$:
\begin{enumerate}[label=(\arabic*)]
\item  $\pred_{L}^{(b)}(I)\le \pred_{L}^{(b)}(I+1)$, and $\pred_{R}^{(b)}(I)\le \pred_{R}^{(b)}(I+1)$, and $\pred^{(b)}_L(I)\le \pred^{(b)}_R(I)$.
\item  For all $J\le I-1$, $(I,J)$ is a $b$-stable state iff $\pred_L^{(b)}(I) \le J \le \pred_R^{(b)}(I)$.
\item $\pred_{R}^{(b)}(I)\ge \pred_{L}^{(b)}(I+1)-1$.
\end{enumerate}
\end{lemma}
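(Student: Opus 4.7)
My plan is to prove Items (1), (2), and (3) together by induction on $b\ge 0$. The base case $b=0$ is immediate: since $\pred_L^{(0)}(I)=\pred_R^{(0)}(I)=I$, Items (1) and (3) are trivial, and Item (2) holds vacuously as $J\le I-1<I$.

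For the inductive step on Items (1) and (2), I would use the associativity $\pred_L^{(b)}=\pred_L\circ\pred_L^{(b-1)}=\pred_L^{(b-1)}\circ\pred_L$ (and similarly for $\pred_R$) together with \Cref{lem:pred}. Monotonicity in Item (1) follows from composing monotone maps; the inequality $\pred_L^{(b)}(I)\le\pred_R^{(b)}(I)$ follows by chaining $\pred_L^{(b)}(I)=\pred_L^{(b-1)}(\pred_L(I))\le\pred_R^{(b-1)}(\pred_L(I))\le\pred_R^{(b-1)}(\pred_R(I))=\pred_R^{(b)}(I)$ using the IH and monotonicity. For Item (2)'s forward direction, given a $b$-partition $[J_0..J_1-1],\ldots,[J_{b-1}..J_b-1]$ of $[J..I-1]$ into stable segments (so $J_0=J$ and $J_b=I$), chaining $J_k\in[\pred_L(J_{k+1}),\pred_R(J_{k+1})]$ through the iterated monotone maps gives $J\in[\pred_L^{(b)}(I),\pred_R^{(b)}(I)]$. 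For the backward direction, I would set $J_1^*=\max\cbra{J_1\in[\pred_L(I),\pred_R(I)]:\pred_L^{(b-1)}(J_1)\le J}$, which is nonempty since $\pred_L(I)$ works. If $J_1^*=\pred_R(I)$ then $\pred_R^{(b-1)}(J_1^*)=\pred_R^{(b)}(I)\ge J$; otherwise maximality forces $\pred_L^{(b-1)}(J_1^*+1)>J$, and IH Item (3) gives $\pred_R^{(b-1)}(J_1^*)\ge\pred_L^{(b-1)}(J_1^*+1)-1\ge J$. In either case IH Item (2) produces a $(b-1)$-partition of $[J..J_1^*-1]$, which together with the stable segment $[J_1^*..I-1]$ yields a $b$-partition of $[J..I-1]$. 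Item (3) at $b\ge 2$ reduces to the $b=1$ case: with $I_R=\pred_R(I)$ and $I_L=\pred_L(I+1)$, the $b=1$ case gives $I_R+1\ge I_L$, and IH Item (3) at level $b-1$ combined with monotonicity yields $\pred_R^{(b)}(I)=\pred_R^{(b-1)}(I_R)\ge\pred_L^{(b-1)}(I_R+1)-1\ge\pred_L^{(b-1)}(I_L)-1=\pred_L^{(b)}(I+1)-1$.

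The main obstacle is thus Item (3) at $b=1$, namely $\pred_R(I)\ge\pred_L(I+1)-1$, which is the genuinely new content beyond \Cref{lem:pred}. My plan is to show that either $[\pred_R(I)..I]$ or $[\pred_R(I)+1..I]$ is itself a stable segment, which yields $\pred_L(I+1)\le\pred_R(I)+1$. Let $J^*=\pred_R(I)$ and split into three cases according to the matched edges incident to $I$ and $J^*$. Case (a): a matched edge $(I,J_0)$ with $J_0<I$ exists; since $(I,J_0)$ violates $[J..I-1]$ for $J>J_0$, we have $\pred_R(I)\le J_0$, and the non-crossing property applied to any other edge makes $[J_0..I-1]$ stable, so $J_0=J^*$; then non-crossing with $(I,J^*)$ rules out every potential violator of $[J^*+1..I]$, so that segment is stable. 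Case (b): no edge with left endpoint $I$ exists but an edge $(I_a,J^*)$ with $I_a>I$ does; then the only violators of $[J^*..I]$ not already ruled out by the stability of $[J^*..I-1]$ would require an edge with left endpoint $I$, which we excluded, so $[J^*..I]$ is stable. Case (c): neither such edge exists; then any violator of $[J^*+1..I]$ would either be of the form $(I_a,J^*)$ with $I_a\ge I+1$ or $(I,J_b)$ with $J_b\ge J^*+1$, both excluded, so $[J^*+1..I]$ is stable. All three cases complete the induction.
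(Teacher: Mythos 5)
Your proof is correct, and it takes a genuinely different route on the one non-trivial piece, Item~(3). Items~(1) and~(2) match the paper's proof modulo presentation: both chain through the monotonicity of $\pred_L,\pred_R$ from \Cref{lem:pred}, and your explicit maximization $J_1^*=\max\cbra{J_1\in[\pred_L(I),\pred_R(I)]:\pred_L^{(b-1)}(J_1)\le J}$ is just a concrete instantiation of the paper's argument that a suitable $I_1$ exists because a ``gap'' $\pred_R^{(b-1)}(I')<J<\pred_L^{(b-1)}(I'+1)$ would violate IH~(3). The real divergence is Item~(3). The paper proves it for all $b$ at once, within the same induction: assuming $\pred_R^{(b)}(I)<\pred_L^{(b)}(I+1)-1$, it invokes \Cref{lem:existseg} to produce an $I'$ with $[\pred_R^{(b)}(I)+1..I'-1]$ partitioned into $b$ stable segments, then uses Item~(2) (for the current $b$, just proved) and monotonicity to force $I<I'<I+1$, a contradiction --- a short, slick argument that piggybacks on Items~(1) and~(2). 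You instead isolate $b=1$ as the only genuinely new combinatorial content and prove it directly by a case analysis on the matched edges incident to $I$ and to $J^*=\pred_R(I)$, showing that one of $[J^*..I]$ or $[J^*+1..I]$ must itself be a stable segment; then you lift to $b\ge 2$ by associativity of iteration, monotonicity, and IH~(3) at $b-1$. Both approaches are sound. The paper's is shorter and stays entirely at the level of the $\pred^{(b)}$ operators; yours is longer but more explicit about the underlying graph structure (it exhibits the new stable segment concretely), which some readers may find more illuminating, and it doesn't need \Cref{lem:existseg}. One small remark: in your Case~(b), the hypothesis that an edge $(I_a,J^*)$ with $I_a>I$ exists is never actually used in the verification that $[J^*..I]$ is stable --- that conclusion needs only that $X[I]$ is unmatched and that $[J^*..I-1]$ is stable --- so Cases~(b) and~(c) could be merged (with the common conclusion ``$[J^*..I]$ is stable''), though splitting them as you do makes the edge bookkeeping easier to read.
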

\begin{proof}
Item (1) follows immediately from \Cref{lem:pred}.

We now prove Item (2) and (3) together by induction on $b$. Suppose both statements hold for $b-1$, where $b\ge 1$. 

We first prove the ``only if'' part of Item (2). We partition $[J..I-1]$ into $b$ stable segments, the last of which is $[I_1..I-1]$. Then we have $\pred_{L}^{(b-1)}(I_1) \le J \le\pred_{R}^{(b-1)}(I_1)$ by induction hypothesis on Item (2), and $\pred_{L}(I)\le I_1 \le \pred_R(I)$. Then we have $J\ge \pred_{L}^{(b-1)}(I_1) \ge \pred_{L}^{(b-1)}(\pred_{L}(I)) = \pred_L^{(b)}(I)$, and similarly $J\le \pred_R^{(b)}(I)$.

Now we prove the ``if'' part of Item (2). We show there exists some $I_1$ such that $\pred_L(I)\le I_1 \le  \pred_R(I)$ and $\pred_{L}^{(b-1)}(I_1) \le J \le \pred_{R}^{(b-1)}(I_1)$, which by induction hypothesis on Item (2) implies a partition of the segment $[J..I-1]$ into $b$ stable segments. 
Suppose there is no such $I_1$, then by the monotonicity of $\pred_{L}^{(b-1)}$ and $\pred_{R}^{(b-1)}$, there must exist $\pred_{L}(I)\le I'<\pred_{R}(I)$ such that $\pred_R^{(b-1)}(I') < J < \pred_L^{(b-1)}(I'+1)$, contradicting the induction hypothesis on Item (3).

Finally we prove Item (3). Suppose there exists some $I$ such that $\pred_R^{(b)}(I)< \pred_L^{(b)}(I+1)-1$. By \Cref{lem:existseg}, there exists $I'$ such that $[\pred_R^{(b)}(I)+1..I'-1]$ can be partitioned into $b$ stable segments. Hence by Item (2), $\pred_{L}^{(b)}(I') \le \pred_R^{(b)}(I)+1 \le \pred_{R}^{(b)}(I')$. Then, $\pred_{L}^{(b)}(I') \le \pred_{R}^{(b)}(I)+1 < \pred_L^{(b)}(I+1)$ implies $I'<I+1$; while $\pred_R^{(b)}(I') \ge \pred_{R}^{(b)}(I)+1$ implies $I'>I$. A contradiction.
\end{proof}

Given a stable partition $\Pcal$, we can define a predecessor function for $\Pcal$ as follows.
\begin{lemma}[Stable predecessor for a stable partition]
\label{lem:predpart}
Let $\Pcal=(\Pcal_i)_i$ be a stable partition where $\Pcal_i=[p_i..p_{i+1}-1]$ and $p_i<p_{i+1}$.
Then there exists a non-decreasing function $\pred_{\Pcal}\colon \Zbb \to \Zbb$ such that the following hold:
\begin{itemize}
\item For every $i$, $\pred_{\Pcal}(p_{i+1})=p_i$.
\item For every $I$, we have $\pred_{\Pcal}(I) \le I-1$, and  $[\pred_{\Pcal}(I).. I-1]$ is a stable segment. 
\end{itemize}
\end{lemma}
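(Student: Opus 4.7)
The plan is to give an explicit formula for $\pred_\Pcal$ using the stable left-predecessor $\pred_L$ from \Cref{lem:pred} together with the partition boundaries. For each integer $I$, let $i(I)$ denote the unique index with $I \in \Pcal_{i(I)}$, i.e., $p_{i(I)} \le I \le p_{i(I)+1} - 1$, and define
$$\pred_\Pcal(I) := \max\pbra{p_{i(I)-1},\, \pred_L(I)}.$$
Intuitively, $p_{i(I)-1}$ is the partition boundary one cell to the left of the cell containing $I$ and serves to pin the value down at the boundary points $I=p_{i+1}$, while clamping against $\pred_L(I)$ guarantees that $[\pred_\Pcal(I)..I-1]$ is always a stable segment.

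I would then verify the four required properties in order. First, at $I = p_{i+1}$ we have $i(I) = i+1$, so the formula becomes $\max\pbra{p_i, \pred_L(p_{i+1})}$; since $\Pcal_i = [p_i..p_{i+1}-1]$ is a stable segment, $p_i$ is a valid left endpoint of $[\cdot..p_{i+1}-1]$, so $\pred_L(p_{i+1}) \le p_i$ and hence $\pred_\Pcal(p_{i+1}) = p_i$. Second, $\pred_\Pcal(I) \le I-1$ is immediate since $p_{i(I)-1} < p_{i(I)} \le I$ and $\pred_L(I) \le I-1$. Third, stability of $[\pred_\Pcal(I)..I-1]$ reduces, via \Cref{lem:pred}, to showing $\pred_\Pcal(I) \in [\pred_L(I), \pred_R(I)]$; the lower bound is by construction, and for the upper bound it suffices to show $p_{i(I)-1} \le \pred_R(I)$, which follows from monotonicity of $\pred_R$ combined with stability of $\Pcal_{i(I)-1}$: $\pred_R(I) \ge \pred_R(p_{i(I)}) \ge p_{i(I)-1}$. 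Finally, monotonicity of $\pred_\Pcal$ follows because both $p_{i(I)-1}$ (since $i(I)$ is non-decreasing in $I$) and $\pred_L(I)$ (by \Cref{lem:pred}) are non-decreasing.

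The main obstacle I expect is choosing the formula in the first place, since each of the two ``obvious'' candidates fails on its own. Using only $\pred_L(I)$ does not enforce $\pred_\Pcal(p_{i+1}) = p_i$, because $\pred_L(p_{i+1})$ can be strictly smaller than $p_i$. Using only $p_{i(I)-1}$ does not preserve stability, because extending $\Pcal_{i(I)-1}$ into the interior of $\Pcal_{i(I)}$ can cross a matched edge whose top endpoint lies strictly inside $\Pcal_{i(I)}$ (stability of individual cells does not imply stability of their unions with partial neighbors). Taking the maximum of the two values fixes both issues at once: the $\pred_L$ clamp preserves stability, while the $p_{i(I)-1}$ floor snaps the value to the correct partition boundary whenever $I$ is one.
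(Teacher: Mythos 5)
Your proposal is correct and is essentially the same construction as the paper's: the paper defines $\pred_\Pcal$ on the interior of each cell by the recurrence $\pred_\Pcal(p_i+j)=\max\cbra{\pred_\Pcal(p_i+j-1),\pred_L(p_i+j)}$ with boundary value $\pred_\Pcal(p_{i+1})=p_i$, and since $\pred_L$ is non-decreasing this recurrence unwinds to precisely your closed form $\max\pbra{p_{i(I)-1},\pred_L(I)}$. Your verification of the four properties matches what the paper's induction establishes, just stated more explicitly.
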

\begin{proof}
For any fixed $i$, it suffices to determine the values of $\pred_{\Pcal}(p_i+j)$ for $j=1,2,\dots,p_{i+1}-p_{i}-1$.

We simply define $\pred_{\Pcal}(p_i+j) = \max \cbra{\pred_{\Pcal}(p_i+j-1), \pred_L(p_i+j)}$. By \Cref{lem:pred} and induction on $j$, we have $\pred_L(p_i+j)\le \pred_{\Pcal}(p_i+j)\le \pred_R(p_i+j)$.
\end{proof}

Similarly as in \Cref{lem:b-pred}, we can define the $b$-stable predecessor $\pred_{\Pcal}^{(b)}(\cdot)$ for a stable partition $\Pcal$, which is sandwiched between $\pred_L^{(b)}(\cdot)$ and $\pred_R^{(b)}(\cdot)$, and also satisfies the Item (1) in \Cref{lem:b-pred}. As an example, in \Cref{fig:stable-state} $\pred_{\Pcal}^{(3)}(8)=1$.

\begin{corollary}[$b$-stable predecessors $\pred^{(b)}_\Pcal(\cdot)$] 
\label{cor:b-pred-P}
Let $\Pcal$ be a stable partition.
Define 
$$
\pred_\Pcal^{(b)}(I) = \begin{cases}
I&b=0,\\
\pred_\Pcal(\pred_\Pcal^{(b-1)}(I))&b\ge 1.
\end{cases}
$$ 
Then for any $I,b$, we have $\pred_\Pcal^{(b)}(I)\le \pred_\Pcal^{(b)}(I+1)$.

Moreover, $(I,\pred_\Pcal^{(b)}(I))$ is a $(\Pcal,b)$-stable state and $\pred_L^{(b)}(I)\le\pred_\Pcal^{(b)}(I)\le\pred_R^{(b)}(I)$.
\end{corollary}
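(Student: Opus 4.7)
The plan is to prove all three conclusions by a single induction on $b$, since they decouple cleanly and each reduces to a one-step property already in hand. The base case $b=0$ is immediate: $\pred_\Pcal^{(0)}(I)=I=\pred_L^{(0)}(I)=\pred_R^{(0)}(I)$, the identity is monotone, and when $I=p_i$ is a partition boundary the state $(I,I)$ is tautologically a $(\Pcal,0)$-stable state.

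For the inductive step I would use the recursion $\pred_\Pcal^{(b)}(I)=\pred_\Pcal(\pred_\Pcal^{(b-1)}(I))$ and treat each claim in turn. Monotonicity follows by composing the inductive hypothesis that $\pred_\Pcal^{(b-1)}$ is non-decreasing with the monotonicity of $\pred_\Pcal$ (immediate from its construction in \Cref{lem:predpart}). The $(\Pcal,b)$-stable state claim is handled using the exact identity $\pred_\Pcal(p_{i+1})=p_i$ from \Cref{lem:predpart}: if $I=p_{i+b}$ is a partition boundary, iterating yields $\pred_\Pcal^{(b)}(I)=p_i$, so $(I,p_i)$ is by definition a $(\Pcal,b)$-stable state.

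For the sandwich, I would chain the one-step inequality $\pred_L(J)\le\pred_\Pcal(J)\le\pred_R(J)$ (which is established in passing inside the proof of \Cref{lem:predpart}) with the monotonicity properties of $\pred_L^{(b-1)}$, $\pred_R^{(b-1)}$, and $\pred_\Pcal$ recorded in \Cref{lem:b-pred} and \Cref{lem:predpart}. Concretely,
\[
\pred_\Pcal^{(b)}(I) \;=\; \pred_\Pcal(\pred_\Pcal^{(b-1)}(I)) \;\le\; \pred_R(\pred_\Pcal^{(b-1)}(I)) \;\le\; \pred_R(\pred_R^{(b-1)}(I)) \;=\; \pred_R^{(b)}(I),
\]
where the first inequality is the one-step sandwich and the second uses the inductive hypothesis together with monotonicity of $\pred_R$; the lower bound involving $\pred_L$ is symmetric.

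I do not foresee a genuine obstacle here. The only point worth flagging is that the $(\Pcal,b)$-stable state conclusion is meaningful precisely when $I$ is a partition boundary of $\Pcal$ (the regime in which the corollary will presumably be applied); for a generic $I$, iterating $\pred_\Pcal$ still produces a decomposition of $[\pred_\Pcal^{(b)}(I)..I-1]$ into $b$ stable segments, so one always obtains at least a $b$-stable state, though not necessarily aligned with $\Pcal$ itself.
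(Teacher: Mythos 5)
Your proof is correct and matches the paper's intent exactly: the paper states this as a corollary of \Cref{lem:predpart} with only the remark that it follows ``similarly as in \Cref{lem:b-pred},'' and the induction on $b$ that you give (base case $b=0$ trivial, inductive step composing the one-step monotonicity of $\pred_\Pcal$ and the one-step sandwich $\pred_L\le\pred_\Pcal\le\pred_R$ from \Cref{lem:predpart} with \Cref{lem:b-pred}) is precisely the intended argument. Your flag about the second conclusion is also well taken: as literally stated, ``$(I,\pred_\Pcal^{(b)}(I))$ is a $(\Pcal,b)$-stable state'' only holds when $I$ is a boundary $p_{i+b}$ of $\Pcal$ (since the definition requires both endpoints to be boundaries of $\Pcal$); for generic $I$ what one actually gets, and what the paper uses downstream, is that $[\pred_\Pcal^{(b)}(I)..I-1]$ splits into $b$ stable segments, hence $(I,\pred_\Pcal^{(b)}(I))$ is a $b$-stable state, together with the sandwich bounds.
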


\subsection{Edit Distances for Stable States}\label{sec:stable-edit}

We will bound the edit distance between stable states using the number of singletons. 

\begin{definition}[Singleton]
Every unmatched $X[i]$ or $Y[j]$ is called a \emph{singleton}.

Let $\sing_X[l,r)$ (resp., $\sing_Y[l,r)$) denote the number of singletons in $X[l..r-1]$ (resp., $Y[l..r-1]$). Let $\sing[l,r):=\sing_X[l,r)+\sing_Y[l,r)$. 
\end{definition}

We start with 1-stable states, i.e., stable segments.

\begin{lemma}\label{lem:compare}
Let $\Pcal$ be a stable partition.
For $I< I'$, let $J= \pred_{\Pcal}(I), J'= \pred_{\Pcal}(I')$.
Then
\begin{enumerate}[label=(\alph*)]
\item $\ed(X[I..I'-1], Y[J..J'-1]) \le \sing_{X}[I,I') + \sing_{Y}[J,J') \le \sing[J,I')$;
\item $|(I'-J') - (I-J)| = |(I'-I) - (J'-J)| \le \sing_{X}[I,I') + \sing_{Y}[J,J')\le \sing[J,I')$.
\end{enumerate}
\end{lemma}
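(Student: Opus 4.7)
The plan is to exploit the fact that both $[J..I-1]$ and $[J'..I'-1]$ are stable segments (this is exactly the content of \Cref{lem:predpart}) to establish a bijection between matched characters in $X[I..I'-1]$ and matched characters in $Y[J..J'-1]$. Once that bijection is in place, both (a) and (b) follow from simple counting, and the final inequality $\sing_X[I,I')+\sing_Y[J,J')\le\sing[J,I')$ follows from $J\le I$ and $J'\le I'-1<I'$, so that $[I..I'-1]$ and $[J..J'-1]$ are both contained in $[J..I'-1]$.

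First I would fix any matched edge $(a,b)$ (with $a>b$) and argue the following dichotomy using the two stable segments. Stability of $[J..I-1]$ says that either ($b<J$ and $a\le I-1$) or ($b\ge J$ and $a\ge I$). Stability of $[J'..I'-1]$ says that either ($b<J'$ and $a\le I'-1$) or ($b\ge J'$ and $a\ge I'$). Combining these four cases, an edge with $a\in[I..I'-1]$ must have $b\ge J$ (from the first segment) and $b\le J'-1$ (from the second), and symmetrically an edge with $b\in[J..J'-1]$ must have $a\in[I..I'-1]$. Hence the matched edges $(a,b)$ with $a\in[I..I'-1]$ are precisely those with $b\in[J..J'-1]$.

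Let $c$ denote the number of such edges. Since $\sing_X[I,I')$ and $\sing_Y[J,J')$ count the unmatched characters in each window, we obtain
\[
c=(I'-I)-\sing_X[I,I')=(J'-J)-\sing_Y[J,J').
\]
Subtracting these two equalities gives $(I'-I)-(J'-J)=\sing_X[I,I')-\sing_Y[J,J')$, and taking absolute values yields part (b). For part (a), the non-intersecting property of the matching guarantees that these $c$ matched pairs, read in order, form a common subsequence of $X[I..I'-1]$ and $Y[J..J'-1]$ (each matched edge contributes an equal character $X[a]=Y[b]$, and the order is preserved because $i_1<i_2<\cdots$ and $j_1<j_2<\cdots$). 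We then transform $X[I..I'-1]$ into $Y[J..J'-1]$ by deleting the $(I'-I)-c=\sing_X[I,I')$ unmatched $X$-characters and inserting the $(J'-J)-c=\sing_Y[J,J')$ unmatched $Y$-characters, giving an edit sequence of the desired length.

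I do not expect a genuine obstacle here: the lemma is essentially a careful bookkeeping exercise once the bijection of matched edges is established. The only subtle point is getting the four stability case-splits right, and then being careful about the direction of the containments $[I..I'-1],[J..J'-1]\subseteq[J..I'-1]$ when deducing the last inequality $\sing_X[I,I')+\sing_Y[J,J')\le\sing[J,I')$.
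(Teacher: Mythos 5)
Your proof is correct, and it takes a mildly different route from the paper's. The paper works locally: for each single position $i\in[I..I'-1]$, it applies the stable-segment property to the two segments $[\pred_{\Pcal}(i)..i-1]$ and $[\pred_{\Pcal}(i+1)..i]$ to pin down that $X[i]$ can only be matched to some character in $Y[\pred_{\Pcal}(i)..\pred_{\Pcal}(i+1)-1]$ (and conversely), giving the per-character bound $\ed(X[i..i],Y[j_0..j_1-1])\le \sing_X[i,i+1)+\sing_Y[j_0,j_1)$; summing over $i$ and concatenating edit sequences yields (a), and (b) then follows from \Cref{fct:length_to_edit}. You instead invoke only the two outer stable segments $[J..I-1]$ and $[J'..I'-1]$ to get a single global bijection between matched characters of $X[I..I'-1]$ and matched characters of $Y[J..J'-1]$, and then derive (a) from a delete-then-insert edit sequence and (b) from the counting identity $c=(I'-I)-\sing_X[I,I')=(J'-J)-\sing_Y[J,J')$. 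Your four-way case split from the two stable segments is correctly carried out, your observation that the matched pairs form a common subsequence (because the matching is non-intersecting) is what makes the edit sequence valid, and your inclusion argument $J\le I-1$ and $J'\le I'-1$ for the final inequality is right. The trade-off: the paper's local version localizes the match to a finer sub-interval and is the form actually reused in \Cref{cor:compare-b} and \Cref{cor:b}, whereas your version is more economical for proving this lemma in isolation since it only touches two stable segments rather than $O(I'-I)$ of them.
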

\begin{proof}
For every fixed $i\in [I,I']$, let $j_0= \pred_{\Pcal}(i),j_1= \pred_{\Pcal}(i+1)$.
By \Cref{lem:predpart}, we observe that $X[i]$ is either unmatched or matched to one of the characters $Y[j_0],Y[j_0+1],\dots,Y[j_1-1]$; and also conversely these characters can only be matched to $X[i]$. Hence,
\[ \ed(X[i..i], Y[j_0..j_1-1]) \le \sing_{X}[i,i+1) + \sing_{Y}[j_0,j_1).  \]
Summing up over all $i= I,I+1,\dots,I'$, we obtain
$$
\ed(X[I..I'-1], Y[J..J'-1]) \le \sing_{X}[I,I') + \sing_{Y}[J,J'). 
$$
On the other hand, by \Cref{fct:length_to_edit} we have $\ed(X[I..I'-1],Y[J..J'-1])\ge\abs{(I'-I)-(J'-J)}$. Hence Item (b) follows immediately from Item (a).
\end{proof}

\begin{corollary}
\label{cor:compare-b}
Let $\Pcal$ be a stable partition.
For any $I$, let $J = \pred_{\Pcal}(I)$, and let $I^{(b)} =\pred_{\Pcal}^{(b)}(I), J^{(b)} =\pred_{\Pcal}^{(b+1)}(I)$ for $b\ge 0$.
Then
\begin{enumerate}[label=(\alph*)]
\item $\ed(X[J..I-1], Y[J^{(b)}..I^{(b)}-1]) \le \sing_{X}[I^{(b)},I) + \sing_{Y}[J^{(b)},J)\le \sing[J^{(b)},I)$;
\item $|(I-I^{(b)})-(J-J^{(b)})|=|(I-J)-(I^{(b)}-J^{(b)})|\le \sing_{X}[I^{(b)},I) + \sing_{Y}[J^{(b)},J)\le \sing[J^{(b)},I)$.
\end{enumerate}
\end{corollary}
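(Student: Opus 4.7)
The plan is to derive both parts by a telescoping application of \Cref{lem:compare} combined with the triangle inequality for edit distance, leveraging the fact that $X = Y$ as strings. For brevity I will write $I^{(k)} := \pred_\Pcal^{(k)}(I)$ throughout, so that $I^{(0)} = I$, $I^{(1)} = \pred_\Pcal(I) = J$, and $I^{(b+1)} = J^{(b)}$.

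The base case $b = 0$ is immediate, since then $I^{(b)} = I$, $J^{(b)} = J$, and both sides of (a) and (b) vanish. For $b \ge 1$, I would form the chain of substrings
\[
S_k := X[I^{(k+1)}..I^{(k)} - 1], \qquad k = 0, 1, \dots, b,
\]
so that $S_0 = X[J..I-1]$ and, using $X = Y$, $S_b = X[I^{(b+1)}..I^{(b)} - 1] = Y[J^{(b)}..I^{(b)} - 1]$. Since $\pred_\Pcal(I^{(k)}) \le I^{(k)} - 1$, consecutive iterates strictly decrease, so \Cref{lem:compare} applies with $(I, I') \leftarrow (I^{(k+1)}, I^{(k)})$ and yields
\[
\ed\bigl(S_k,\, Y[I^{(k+2)}..I^{(k+1)} - 1]\bigr) \;\le\; \sing_X[I^{(k+1)}, I^{(k)}) + \sing_Y[I^{(k+2)}, I^{(k+1)}).
\]
By $X = Y$ the left-hand side equals $\ed(S_k, S_{k+1})$.

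Next I would chain these bounds by the triangle inequality $\ed(S_0, S_b) \le \sum_{k=0}^{b-1} \ed(S_k, S_{k+1})$. Summing the right-hand sides above, the $\sing_X$ pieces telescope to $\sing_X[I^{(b)}, I^{(0)}) = \sing_X[I^{(b)}, I)$ and the $\sing_Y$ pieces telescope to $\sing_Y[I^{(b+1)}, I^{(1)}) = \sing_Y[J^{(b)}, J)$, giving part (a). Part (b) then follows from \Cref{fct:length_to_edit} applied to the two strings in (a), which have lengths $I - J$ and $I^{(b)} - J^{(b)}$, so their absolute difference in lengths is bounded by the edit distance and hence by the same singleton sum.

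I do not anticipate any genuine obstacle; the only thing to verify carefully is that the telescoped endpoints match the claimed singleton intervals, but this is immediate from $I^{(1)} = J$ and $I^{(b+1)} = J^{(b)}$. The monotonicity $\pred_\Pcal^{(k)}(I) \le \pred_\Pcal^{(k)}(I+1)$ from \Cref{cor:b-pred-P} is not even needed for this argument; only the strict inequality $\pred_\Pcal(I) \le I-1$ is used, to justify applying \Cref{lem:compare} at each step.
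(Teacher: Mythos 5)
Your proof is correct and takes essentially the same approach as the paper: apply \Cref{lem:compare} to consecutive pairs of $\pred_\Pcal$-iterates, sum via the triangle inequality using $X=Y$ to chain the substrings, telescope the singleton counts, and deduce (b) from (a) via \Cref{fct:length_to_edit}. The only difference is notational---the paper indexes the chain by $(J^{(c)},I^{(c)})$ while you index by $(I^{(k+1)},I^{(k)})$, and these agree since $J^{(c)}=I^{(c+1)}$.
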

\begin{proof}
For every $0\le c<b$, by applying \Cref{lem:compare} to $J^{(c)}$ and $I^{(c)}$, we obtain
\[ \ed(X[J^{(c)}..I^{(c)}-1], Y[J^{(c+1)}..I^{(c+1)}-1]) \le \sing_{X}[J^{(c)},I^{(c)}) + \sing_{Y}[J^{(c+1)},I^{(c+1)}).  \]
Then by triangle inequality,
\begin{align*}
\ed(X[J..I-1], Y[J^{(b)}..I^{(b)}-1]) 
&\le \sum_{0\le c< b}\ed(X[J^{(c)}..I^{(c)}-1], Y[J^{(c+1)}..I^{(c+1)}-1])\\
&\le \sum_{0\le c< b}\pbra{\sing_{X}[J^{(c)},I^{(c)}) + \sing_{Y}[J^{(c+1)},I^{(c+1)})}\\
&= \sing_X[I^{(b)},I) +\sing_Y[J^{(b)},J)\\
&\le \sing[J^{(b)},I).
\end{align*}
On the other hand, Item (b) follows from Item (a) by \Cref{fct:length_to_edit}.
\end{proof}

Now we extend it to $b$-stable states.

\begin{corollary}
\label{cor:b}
Let $\Pcal$ be a stable partition.
For $I_1\le I_2$ and any fixed $b\ge0$, let $I_1^{(b)} = \pred_{\Pcal}^{(b)}(I_1),I_2^{(b)} = \pred_{\Pcal}^{(b)}(I_2)$. Then
\[ \abs{(I_1-I_1^{(b)}) - (I_2-I_2^{(b)})} \le b\cdot \sing[I_1^{(b)},I_2).\]
Moreover, the following generalization also holds. Suppose $I_1\le I_2\le \dots \le I_m$, then
\[ \sum_{j=1}^{m-1}\abs{(I_j-I_j^{(b)}) - (I_{j+1}-I_{j+1}^{(b)})} \le b\cdot \sing[I_1^{(b)},I_m).\]
\end{corollary}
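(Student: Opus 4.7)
I will prove the \emph{moreover} statement directly by induction on $b$; the first displayed inequality is then the special case $m=2$. The base case $b=0$ is immediate: $I_j^{(0)}=I_j$ so every summand on the left vanishes.

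The heart of the argument is the base case $b=1$. For each consecutive pair with $I_j<I_{j+1}$, set $L_j:=\pred_{\Pcal}(I_j)$ and apply \Cref{lem:compare}(b) with $I:=I_j, I':=I_{j+1}$ to get
\[
\abs{(I_j-L_j)-(I_{j+1}-L_{j+1})} \;\le\; \sing_X[I_j,I_{j+1}) + \sing_Y[L_j,L_{j+1}).
\]
(Terms with $I_j=I_{j+1}$ contribute zero and may be skipped.) Because $\pred_{\Pcal}$ is non-decreasing (\Cref{cor:b-pred-P}), the intervals $[I_j,I_{j+1})$ are pairwise disjoint with union $[I_1,I_m)$, and analogously the intervals $[L_j,L_{j+1})$ are pairwise disjoint with union $[L_1,L_m)$. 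Summing over $j$ telescopes to
\[
\sing_X[I_1,I_m) + \sing_Y[L_1,L_m) \;\le\; \sing_X[L_1,I_m) + \sing_Y[L_1,I_m) \;=\; \sing[L_1,I_m),
\]
establishing the $b=1$ inequality (noting $L_1 = I_1^{(1)}$).

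For the inductive step with $b\ge 2$, set $J_j:=\pred_{\Pcal}^{(b-1)}(I_j)$, so that $I_j^{(b)}=\pred_{\Pcal}(J_j)$ and $J_1\le\dots\le J_m$. Decompose $I_j-I_j^{(b)}=(I_j-J_j)+(J_j-\pred_{\Pcal}(J_j))$ and apply the triangle inequality termwise to obtain
\[
\sum_{j=1}^{m-1}\abs{(I_j-I_j^{(b)})-(I_{j+1}-I_{j+1}^{(b)})} \;\le\; \sum_{j=1}^{m-1}\abs{(I_j-J_j)-(I_{j+1}-J_{j+1})} + \sum_{j=1}^{m-1}\abs{(J_j-\pred_{\Pcal}(J_j))-(J_{j+1}-\pred_{\Pcal}(J_{j+1}))}.
\]
The first sum is bounded by $(b-1)\cdot\sing[I_1^{(b-1)},I_m)$ by the inductive hypothesis, and the second is the $b=1$ case applied to $J_1\le\dots\le J_m$, hence at most $\sing[I_1^{(b)},I_m^{(b-1)})$. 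Since $I_1^{(b)}\le I_1^{(b-1)}$ and $I_m^{(b-1)}\le I_m$, both contributions are dominated by $\sing[I_1^{(b)},I_m)$, yielding the target bound $b\cdot \sing[I_1^{(b)},I_m)$.

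The obstacle I anticipated was over-counting of singletons when summing pairwise bounds: naively applying the $m=2$ statement to each consecutive pair with bound $b\cdot\sing[I_j^{(b)},I_{j+1})$ counts singletons in overlapping intervals many times, producing an $m$-dependent blow-up. The resolution is the finer $\sing_X+\sing_Y$ decomposition in \Cref{lem:compare}(b); once one works at level $b=1$ and separates the $X$-side pieces from the $Y$-side pieces, the two telescoping unions become disjoint, sidestepping the overlap issue, and then a single layer of induction on $b$ completes the proof.
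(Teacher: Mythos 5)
Your proof is correct and takes essentially the same approach as the paper's: both rest on \Cref{lem:compare}(b) applied level-by-level, with the crucial observation that separating the $\sing_X$ and $\sing_Y$ contributions lets the sum over $j$ telescope cleanly. The paper writes the argument as a direct double sum over the $b$ levels $c=0,\dots,b-1$ and the $m-1$ consecutive pairs, whereas you package the sum over levels as an induction on $b$ — peeling off the outermost level ($c=b-1$) via the $b=1$ base case and recursing on the rest — which, once unrolled, reproduces the paper's computation exactly.
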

\begin{proof}
Since
\[ I_j - I_j^{(b)} = \sum_{c=0}^{b-1} (I_j^{(c)}-I_j^{(c+1)}),\]
we have
\begin{align*}
\abs{(I_j-I_j^{(b)}) - (I_{j+1}-I_{j+1}^{(b)})} 
&\le \sum_{c=0}^{b-1} \abs{(I_j^{(c)}-I_{j}^{(c+1)}) - (I_{j+1}^{(c)}-I_{j+1}^{(c+1)})}\\
&\le \sum_{c=0}^{b-1} \pbra{ \sing_X[I_j^{(c)},I_{j+1}^{(c)}) + \sing_Y[I_j^{(c+1)},I_{j+1}^{(c+1)})}.
\tag{due to \Cref{lem:compare}}
\end{align*}
Summing up over all $j=1,2,\dots,m-1$, we obtain
\begin{align*}
\sum_{j=1}^{m-1}\abs{(I_j-I_j^{(b)}) - (I_{j+1}-I_{j+1}^{(b)})}  
&\le \sum_{j=1}^{m-1}  \sum_{c=0}^{b-1} \pbra{ \sing_X[I_j^{(c)},I_{j+1}^{(c)}) + \sing_Y[I_j^{(c+1)},I_{j+1}^{(c+1)})} \\
&= \sum_{c=0}^{b-1} \pbra{\sing_X[I_1^{(c)},I_m^{(c)}) + \sing_Y[I_1^{(c+1)},I_m^{(c+1)})}\\
&\le b\cdot \sing_X[I_1^{(b-1)},I_m) +b\cdot \sing_Y[I_1^{(b)},I_m^{(1)}) \\
&\le b\cdot \sing[I_1^{(b)}, I_m).  \tag*{\qedhere}
\end{align*} 
\end{proof}

We end this subsection with the following lemma, which will be used in \Cref{sec:stable-and}.
\begin{lemma}
\label{lem:equallength}
Let $\Pcal$ be a stable partition.
For any $I$ and $b\ge 1$, let $J=\pred_{\Pcal}^{(b)}(I)$, and let $d=I-J,I' = I+d$.
Then $\ed(X[I..I'-1],Y[J..I-1]) \le 4b\cdot \sing[J,I')$.
\end{lemma}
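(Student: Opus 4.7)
Plan: The idea is to bridge $X[I..I'-1]$ to $Y[J..I-1]$ by iteratively \emph{peeling off} one level of $\pred_\Pcal$ at both endpoints, using \Cref{lem:compare} and the hypothesis $X = Y$ to chain the bounds; a final length adjustment via \Cref{cor:b} converts the right endpoint to $I-1$.

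First, for each $k = 0, 1, \dots, b-1$, I would apply \Cref{lem:compare} with the substitution $I \mapsto \pred_\Pcal^{(k)}(I)$ and $I' \mapsto \pred_\Pcal^{(k)}(I')$, yielding
\[
\ed\bigl(X[\pred_\Pcal^{(k)}(I)..\pred_\Pcal^{(k)}(I')-1],\; Y[\pred_\Pcal^{(k+1)}(I)..\pred_\Pcal^{(k+1)}(I')-1]\bigr) \;\le\; \sing[\pred_\Pcal^{(k+1)}(I),\, \pred_\Pcal^{(k)}(I')).
\]
Because $X = Y$, the $X$-substring on the left of step $k+1$ is literally the same string as the $Y$-substring on the right of step $k$, so the triangle inequality for edit distance chains the $b$ bounds into
\[
\ed\bigl(X[I..I'-1],\; Y[\pred_\Pcal^{(b)}(I)..\pred_\Pcal^{(b)}(I')-1]\bigr) \;\le\; \sum_{k=0}^{b-1} \sing[\pred_\Pcal^{(k+1)}(I),\, \pred_\Pcal^{(k)}(I')).
\]
All $b$ intervals on the right lie inside $[\pred_\Pcal^{(b)}(I), I') = [J, I')$, so each singleton is counted at most $b$ times and the sum is at most $b \cdot \sing[J, I')$. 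Using $\pred_\Pcal^{(b)}(I) = J$, the intermediate bound reads
\[
\ed(X[I..I'-1], Y[J..\pred_\Pcal^{(b)}(I')-1]) \;\le\; b \cdot \sing[J, I').
\]

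To finish, I would apply \Cref{cor:b} with $I_1 = I$, $I_2 = I'$, and parameter $b$. Since $I - \pred_\Pcal^{(b)}(I) = I - J = d$ and $I' = I + d$, the corollary yields $|\pred_\Pcal^{(b)}(I') - I| \le b \cdot \sing[J, I')$. The two $Y$-substrings $Y[J..\pred_\Pcal^{(b)}(I')-1]$ and $Y[J..I-1]$ share the prefix starting at $J$ and differ only in their trailing length by that amount, so their edit distance is at most $b \cdot \sing[J, I')$. A last triangle inequality then gives
\[
\ed(X[I..I'-1], Y[J..I-1]) \;\le\; 2b \cdot \sing[J, I') \;\le\; 4b \cdot \sing[J, I'),
\]
as desired. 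The main subtlety is the iterated chaining: one must exploit $X = Y$ both to identify the intermediate substring produced by step $k$ with the input $X$-substring of step $k+1$ and to re-invoke \Cref{lem:compare} on the shifted positions $\pred_\Pcal^{(k)}(I), \pred_\Pcal^{(k)}(I')$. Degenerate iterations where $\pred_\Pcal^{(k)}(I) = \pred_\Pcal^{(k)}(I')$ contribute empty substrings and zero to the sum (and the equality then propagates to all larger $k$), so pose no issue.
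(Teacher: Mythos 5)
Your proof is correct, but it takes a genuinely different route from the paper's. The paper proves \Cref{lem:equallength} by locating $I'$ among the stable segments $\sigma_0,\sigma_1,\dots$ following $I$ and splitting into two cases ($q\ge b$ and $q<b$): the first case matches $\sigma_i$ with $\sigma_{i-b}$ via \Cref{cor:compare-b}, the second needs a separate argument counting matched edges out of a prefix of $\sigma_q$ together with length estimates, and both cases pass through \Cref{prop:edit_dist} (the subsequence-to-edit-distance factor of $2$), which is where the constant $4b$ originates. You instead shift the entire window $[I..I'-1]$ down one predecessor level at a time, chaining \Cref{lem:compare} $b$ times through the triangle inequality (legitimate because $X=Y$ makes each intermediate $Y$-substring literally the $X$-substring of the next step), which yields $\ed(X[I..I'-1],\,Y[J..\pred_\Pcal^{(b)}(I')-1])\le b\cdot\sing[J,I')$ since every range $[\pred_\Pcal^{(k+1)}(I),\pred_\Pcal^{(k)}(I'))$ sits inside $[J,I')$ by monotonicity of $\pred_\Pcal^{(k)}$ (\Cref{cor:b-pred-P}); then a single application of \Cref{cor:b} with $I_1=I$, $I_2=I'=I+d$ gives $|\pred_\Pcal^{(b)}(I')-I|\le b\cdot\sing[J,I')$, so trimming or extending the right endpoint to $I-1$ costs at most that much, and the triangle inequality finishes with constant $2b\le 4b$. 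Your handling of the degenerate case $\pred_\Pcal^{(k)}(I)=\pred_\Pcal^{(k)}(I')$ is also fine, since equality then propagates and the remaining terms vanish. The net effect is a more uniform argument: no case analysis on where $I'$ lands, no appeal to \Cref{prop:edit_dist}, and a slightly better constant ($2b$ instead of $4b$); the paper's segment-by-segment comparison is closer in spirit to the periodic-string intuition but is otherwise not needed for this bound.
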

\begin{proof}
The function $\pred_\Pcal$ divides the string $Y[J..I-1]$ into $b$ stable segments, which correspond to substrings $\sigma_{-b},\sigma_{-b+1},\dots,\sigma_{-1}$ from left to right, where $\sigma_{-i}:= Y[\pred_{\Pcal}^{(i)}(I).. \pred_{\Pcal}^{(i-1)}(I)-1]$.
Suppose $\Pcal$ induces the substrings $\sigma_0,\sigma_1,\sigma_2,\dots$ immediately after $\sigma_{-1}$ (see \Cref{fig:equallength}).

Assume $I'$ is contained in the stable segment corresponding to substring $\sigma_{q}$ for some $q\ge 0$. We divide the proof into the following two cases.

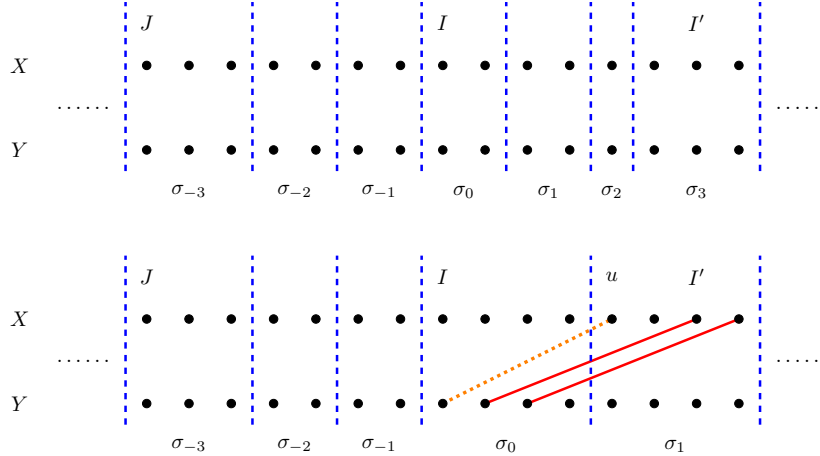
\begin{figure}[ht]
    \centering
    \scalebox{0.8}{\begin{tikzpicture}
    \node at (60pt, 40pt) {$J$};
    \node at (200pt, 40pt) {$I$};            
    \node at (320pt, 40pt) {$I'$};        

    \node at (320pt, -40pt) {$\sigma_{3}$};
    \node at (280pt, -40pt) {$\sigma_{2}$};
    \node at (250pt, -40pt) {$\sigma_{1}$};
    \node at (210pt, -40pt) {$\sigma_{0}$};
    \node at (170pt, -40pt) {$\sigma_{-1}$};
    \node at (130pt, -40pt) {$\sigma_{-2}$};
    \node at (80pt, -40pt) {$\sigma_{-3}$};
    
    \node at (0, 20pt) {$X$};
    \foreach \i in {0,1,2,3,4,5,6,7,8,9,10,11,12,13,14}
        \draw[fill=black] (\i*20 + 60 pt, 20 pt) circle (2pt);

    \node at (30 pt, 0pt) {$\ldots\ldots$};
    \node at (370 pt, 0 pt) {$\ldots\ldots$};

    \node at (0, -20 pt) {$Y$};
    \foreach \i in {0,1,2,3,4,5,6,7,8,9,10,11,12,13,14}
        \draw[fill=black] (\i*20 + 60 pt, -20 pt) circle (2pt);

    \draw[blue,dashed,very thick] (50pt, -30pt) -- (50pt, 50pt);
    \draw[blue,dashed,very thick] (110pt, -30pt) -- (110pt, 50pt);
    \draw[blue,dashed,very thick] (150pt, -30pt) -- (150pt, 50pt);
    \draw[blue,dashed,very thick] (190pt, -30pt) -- (190pt, 50pt);
    \draw[blue,dashed,very thick] (230pt, -30pt) -- (230pt, 50pt);      
    \draw[blue,dashed,very thick] (270pt, -30pt) -- (270pt, 50pt);
    \draw[blue,dashed,very thick] (290pt, -30pt) -- (290pt, 50pt);
    \draw[blue,dashed,very thick] (350pt, -30pt) -- (350pt, 50pt);
    

    \draw[orange,ultra thick,dotted] (200pt, -140pt) -- (280pt, -100pt);   
    \draw[red,very thick] (220pt, -140pt) -- (320pt, -100pt);
    \draw[red,very thick] (240pt, -140pt) -- (340pt, -100pt);   

    \node at (60pt, -80pt) {$J$};
    \node at (200pt, -80pt) {$I$};            
    \node at (280pt, -80pt) {$u$};
    \node at (320pt, -80pt) {$I'$};

    \node at (310pt, -160pt) {$\sigma_{1}$};
    \node at (230pt, -160pt) {$\sigma_{0}$};
    \node at (170pt, -160pt) {$\sigma_{-1}$};
    \node at (130pt, -160pt) {$\sigma_{-2}$};
    \node at (80pt, -160pt) {$\sigma_{-3}$};
    
    \node at (0, -100pt) {$X$};
    \foreach \i in {0,1,2,3,4,5,6,7,8,9,10,11,12,13,14}
        \draw[fill=black] (\i*20 + 60 pt, -100 pt) circle (2pt);

    \node at (30 pt, -120pt) {$\ldots\ldots$};
    \node at (370 pt, -120 pt) {$\ldots\ldots$};

    \node at (0, -140 pt) {$Y$};
    \foreach \i in {0,1,2,3,4,5,6,7,8,9,10,11,12,13,14}
        \draw[fill=black] (\i*20 + 60 pt, -140 pt) circle (2pt);

    \draw[blue,dashed,very thick] (50pt, -150pt) -- (50pt, -70pt);
    \draw[blue,dashed,very thick] (110pt, -150pt) -- (110pt, -70pt);
    \draw[blue,dashed,very thick] (150pt, -150pt) -- (150pt, -70pt);
    \draw[blue,dashed,very thick] (190pt, -150pt) -- (190pt, -70pt);
    \draw[blue,dashed,very thick] (270pt, -150pt) -- (270pt, -70pt);
    \draw[blue,dashed,very thick] (350pt, -150pt) -- (350pt, -70pt);
    
  \end{tikzpicture}}
    \caption{Visualization for the proof of \Cref{lem:equallength}, where the top one is for Case $1$ and the bottom one is for Case $2$. 
    In both cases $b=3$. In Case $1$, $q=3$. In Case $2$, $q=1$.}
    \label{fig:equallength}
\end{figure}

\paragraph*{\fbox{Case 1: $q\ge b$.}}

There are at least $b$ stable segments in $X[I..I'-1]$, which corresponds to substrings $\sigma_0,\sigma_1,\dots,\sigma_{b-1}$. By \Cref{cor:compare-b}, $\ed(\sigma_{i-b},\sigma_{i}) \le \sing[J,I')$ holds for every $0\le i\le b-1$. Then, \[\ed(\sigma_{0}\circ \sigma_{1}\circ \dots \circ \sigma_{b-1},Y[J..I-1])\le \sum_{i=0}^{b-1}\ed(\sigma_{i-b},\sigma_i) \le b\cdot \sing[J,I').\]
By \Cref{prop:edit_dist}, we immediately have 
\[ \ed(X[I..I'-1],Y[J..I-1]) \le 2b\cdot \sing[J,I').\] 

\paragraph*{\fbox{Case 2: $0\le q< b$.}}

Suppose $\sigma_q$ starts at $X[u]$. Then $X[u..I'-1]$ is a prefix of $\sigma_q$. There are $r=(I'-u) - \sing_X[u,I')$ matched edges $(i,j)$ that satisfy $u\le i < I'$ (see the dotted orange line in \Cref{fig:equallength}). 
Since $Y[j]$ must be contained in $\sigma_{q-1}$, we have 
\begin{align*}
\ed(X[u..I'-1], \sigma_{q-1}) 
&\le (I'-u) + |\sigma_{q-1}| - 2r \\
&= 2\cdot \sing_X[u,I') + |\sigma_{q-1}|-(I'-u)\\
&\le 2\cdot \sing[u,I') + |\sigma_{q-1}|-(I'-u).
\end{align*} 
Note that 
\begin{align*}
I'-u  
&= (I'-I) - (|\sigma_0|+|\sigma_1|+\dots + |\sigma_{q-1}|) \\
&=  (|\sigma_{-b}|+|\sigma_{-b+1}|+\dots + |\sigma_{-1}|) - (|\sigma_0|+|\sigma_1|+\dots + |\sigma_{q-1}|) \\
&\ge |\sigma_{-1}| + \sum_{0\le j<q}\pbra{|\sigma_{-b+j}|-|\sigma_{j}|}\\
&\ge |\sigma_{-1}|-q\cdot \sing[J,u).
\tag{due to \Cref{cor:compare-b}}
\end{align*} 
Combining two inequalities, we have
\begin{align*}
\ed(X[u..I'-1],\sigma_{-1}) &\le \ed(X[u..I'-1],\sigma_{q-1}) + \ed(\sigma_{q-1},\sigma_{-1})\\
&\le  2\cdot \sing[u,I')+ q\cdot \sing[J,u) +|\sigma_{q-1}| -|\sigma_{-1}|  +\ed(\sigma_{q-1},\sigma_{-1})\\
&\le 2\cdot \sing[u,I') + q\cdot \sing[J,u)+2\cdot\sing[J,u)
\tag{due to \Cref{cor:compare-b}}\\
&= 2\cdot\sing[J,I') + q\cdot \sing[J,u).
\end{align*}
Hence, by \Cref{prop:edit_dist}, we have
\begin{align*}
\ed(X[I..I'-1],Y[J..I-1]) &\le 2\cdot \ed( X[I..I'-1],\sigma_{-q-1}\circ \cdots \circ \sigma_{-1})\\
&\le 2\cdot \ed(\sigma_0\circ\cdots\circ\sigma_{q-1},\sigma_{-q-1}\circ \cdots \circ  \sigma_{-2}) + 2\cdot \ed(X[u,I'-1],\sigma_{-1})\\
&\le 2q\cdot \sing[J,u) +2\cdot \pbra{2\cdot\sing[J,I') + q\cdot \sing[J,u)}\\
&\le 4b\cdot \sing[J,I').\tag*{\qedhere}
\end{align*}
\end{proof}

\subsection{Catch-up and Stabilize}
\label{sec:stable-and}

This subsection is devoted for \Cref{lem:steps_to_next_stable_state}, which shows a CGK random walk goes from a stable state to a distant stable state with low cost. This process consists of a ``catch-up phase'' (i.e., from a stable state to a distant non-stable state) and then a ``stabilization phase'' (i.e., from a non-stable state to a nearby stable state).

\begin{lemma}[From stable to stable]\label{lem:steps_to_next_stable_state}
Consider a CGK random walk starting from a stable state $(I_0,J_0),I_0>J_0$. Let $D$ be a distance bound satisfying $D\ge I_0-J_0$.  

Consider the first time $T>0$ that either $I_T-J_T>D$, or the following three conditions hold simultaneously: $J_T\ge I_0$, and $I_T\ge 2I_0-J_0$, and $(I_T,J_T)$ is a stable state.\footnote{This time $T$ is almost surely well-defined, as any state $(I,J)$ satisfying $I\ge J>L$ is a stable state.} 
Let $P$ be the number of progress steps before time $T$ and let $S = \sing[J_0,I_T)$.
Then
\[\E\sbra{P - 2000\cdot\pbra{S\cdot D+S^2}} \le 0.\]
\end{lemma}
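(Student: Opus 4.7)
The plan is to apply optional stopping to the martingale $M_t := \Delta_t^2 - N_t/2$, where $\Delta_t := I_t - J_t$ and $N_t$ denotes the number of progress steps among the first $t$ walk steps. A direct computation (using \Cref{rmk:progress_step_and_random_walk}, so that $\Delta_t$ evolves as a lazy unbiased $\{-1,0,+1\}$-random walk on each progress step) verifies that $M_t$ is a martingale. Because $X=Y$ prevents any progress step whenever $p=q$, the walk cannot cross $\Delta=0$ downward, so $\Delta_t \ge 0$ throughout, and by the stopping rule $\Delta_T \le D+1$. After checking integrability of the a.s.-finite stopping time $T$ (every state with $I\ge J>L$ is stable) and applying OST, together with $\E[\Delta_T]=\Delta_0$ obtained from OST on the martingale $\Delta_t$ itself, we get
\begin{equation*}
\E[P] \;=\; 2\,\E\bigl[(\Delta_T - \Delta_0)^2\bigr].
\end{equation*}

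The crux is thus to bound $\E[(\Delta_T - \Delta_0)^2]$ by $O(SD + S^2)$. My plan is to split $(\Delta_T - \Delta_0)^2 \le (D+1)\cdot|\Delta_T - \Delta_0|$, reducing to control of $\E[|\Delta_T - \Delta_0|]$. Shifting coordinates so the walk begins at $(1,1)$ on $x' := X[I_0..\infty]$ and $y' := Y[J_0..\infty]$, I would then apply \Cref{lem:CGK_random_walk}(2) at the first time both shifted pointers exceed $d := I_0 - J_0$: this bounds the expected gap there by a constant times $\ed(X[I_0..2I_0{-}J_0{-}1],Y[J_0..I_0{-}1])$. The self-similarity bound \Cref{lem:equallength}, applied to the stable partition witnessing $(I_0,J_0)$ stable, then controls this edit distance by $O(\sing[J_0, 2I_0{-}J_0)) = O(S)$, which accounts for the $SD$ term of the lemma. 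The additive $S^2$ term should arise from the final ``stabilization'' sub-phase (from the first-passage state to a genuine stable state meeting all three conditions); during that sub-phase the walk is effectively a CGK walk on substrings whose edit distance is already $O(S)$ by self-similarity, so \Cref{thm:CGK}(3) combined with the variance identity above contributes $O(S^2)$ expected additional progress steps.

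The main obstacle I anticipate is the circular dependence of the target bound on $S = \sing[J_0,I_T)$, which is itself a random quantity determined by the stopping time $T$, while the deterministic edit-distance bounds from \Cref{lem:equallength} and \Cref{lem:CGK_random_walk} require the relevant range to be fixed in advance. The careful argument alluded to in \Cref{sec:proof_overview} likely resolves this by first proving a conditional statement of the form $\E[P\cdot\mathbf{1}[I_T\le I^*]] \le 2000\,\E[\sing[J_0,I^*)\cdot D + \sing[J_0,I^*)^2]$ for every deterministic $I^*\ge 2I_0-J_0$, and then combining via a dyadic decomposition over the possible range of $I_T$. A secondary technical issue is avoiding the factor $b$ (the number of stable segments in $[J_0,I_0)$) that appears in \Cref{lem:equallength}: in the worst case $b$ can be as large as $D$, so the argument should either choose the stable partition to minimize $b$, or route through \Cref{cor:compare-b}, which is tight with constant one.
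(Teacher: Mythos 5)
Your martingale identity $\E[P] = 2\,\E[(\Delta_T - \Delta_0)^2]$, obtained by optional stopping applied to $M_t=\Delta_t^2 - N_t/2$ and to $\Delta_t$ itself, is correct and a reasonable starting point, but it is not how the paper proceeds, and the downstream plan has two genuine gaps.

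The main gap is in the stabilization phase. You propose to bound the expected number of progress steps after the first-passage time $Q$ by combining \Cref{thm:CGK}(3) with the observation that the relevant edit distance is $O(S)$. But \Cref{thm:CGK}(3) is a tail bound, not an expectation bound: it says $\Pr\sbra{\#\text{progress steps}\ge(T\cdot\ed)^2}\le O(1/T)$, so the tail at level $B$ decays only like $\ed/\sqrt{B}$, and the expectation is \emph{not} $O(\ed^2)$ --- it can be as large as $\Theta(n)$. Controlling this expectation is precisely the difficulty that forces the paper to prove the separate technical \Cref{lem:from_non-stable_to_stable}, whose proof in \Cref{sec:stabilize} decomposes the stabilization phase into sub-phases delimited by hits of the neighbouring stable-segment boundaries $L_{t_i},R_{t_i}$ and bounds each sub-phase via \Cref{thm:steps_in_random_walk}(2), then carefully sums the contributions using the grouping argument of \Cref{clm:claim}. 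Your proposal does not engage with this machinery, and without something like it the argument does not close. Relatedly, since you only apply \Cref{lem:CGK_random_walk}(2) at time $Q$, the reduction $\E[(\Delta_T-\Delta_0)^2]\le(D+1)\,\E[|\Delta_T-\Delta_0|]$ still leaves you needing to control $\E[|\Delta_T-\Delta_Q|]$, which runs into the same wall.

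A secondary gap is the factor $b$ you flag from \Cref{lem:equallength}: neither of your suggested remedies resolves it, and with it your catch-up-phase bound becomes $O(bSD)$ rather than $O(SD)$. The paper avoids this entirely in the catch-up phase with a much simpler observation that does not involve edit distance at all: each pointer advances by one with probability $1/2$ per step, so by Wald the expected number of \emph{total} steps (a fortiori progress steps) until both pointers have advanced by $d=I_0-J_0$ is at most $4d\le 4D$, and $D\le\E[SD]$ once one disposes of the trivial $S=0$ case. The factor $b$ does appear in the paper's argument, but only inside \Cref{clm:RJJL}, where it is cancelled against the compensating $1/b$ in the bound $\tilde R_0-\tilde L_0\le S+D/b$ of \Cref{eq:temp2}.
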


The proof of this lemma relies on \Cref{lem:equallength} and \Cref{lem:CGK_random_walk} to control the catch-up phase, and the following technical lemma to control the stabilization phase. The proof of \Cref{lem:from_non-stable_to_stable} is deferred to \Cref{sec:stabilize}.

\begin{lemma}[From non-stable to stable]
\label{lem:from_non-stable_to_stable}
Consider a CGK random walk starting from a non-stable state $(\tilde{I}_0,\tilde{J}_0),\tilde{I}_0>\tilde{J}_0$.
Let $\Pcal$ be a stable partition and let $b'$ be such that $\tilde{L}_0<\tilde{J}_0<\tilde{R}_0$ where $\tilde{L}_0=\pred_{\Pcal}^{(b')}(\tilde{I}_0),\tilde{R}_0=\pred_{\Pcal}^{(b'-1)}(\tilde{I}_0)$. 
Let $D$ be a distance bound satisfying $D\ge\tilde{I}_0-\tilde{J}_0$. 

Consider the first time $T'$ that either $(\tilde{I}_{T'},\tilde{J}_{T'})$ is a stable state or $\tilde{I}_{T'}-\tilde{J}_{T'} > D$.\footnote{This time $T'$ is almost surely well-defined, as any state $(I,J)$ satisfying $I\ge J>L$ is a stable state.}  
Let $P'$ be the number of progress steps before time $T'$ and let $S' = \sing[\tilde{L}_0,\tilde{I}_{T'})$. 
Then
\[
\E\sbra{P'-50\cdot \pbra{(\tilde{R}_0-\tilde{J}_0) (\tilde{J}_0-\tilde{L}_0)+S'\cdot D+{S'}^2} } \le 0.
\]
\end{lemma}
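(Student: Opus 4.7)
The plan is to reduce the analysis to a one-dimensional random walk and apply hitting-time estimates together with the edit-distance-to-singletons bounds from \Cref{sec:stable-edit}. Concretely, I would track the displacement process $\Phi_t := \tilde J_t - \pred_\Pcal^{(b')}(\tilde I_t)$, which starts at $\Phi_0 = \tilde J_0 - \tilde L_0 =: b$, together with its ``twin'' $\Psi_t := \pred_\Pcal^{(b'-1)}(\tilde I_t) - \tilde J_t$, starting at $\Psi_0 = \tilde R_0 - \tilde J_0 =: a$. When $\Phi_t$ drops to $0$ the state $(\tilde I_t, \tilde J_t)$ is $(\Pcal, b')$-stable by \Cref{cor:b-pred-P}, and when $\Psi_t$ drops to $0$ the state is $(\Pcal, b'{-}1)$-stable; in either case the walk stops by time $T'$ (together with the separate stopping criterion $\tilde I_t - \tilde J_t > D$).

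On a progress step $\tilde I_t$ and $\tilde J_t$ independently advance by $\{0,1\}$ by \Cref{rmk:progress_step_and_random_walk}, so $\tilde J_t$ contributes an unbiased random $\pm 1/0$ increment to $\Phi_t$, while $\pred_\Pcal^{(b')}(\tilde I_t)$ jumps by some data-dependent amount $\delta_t\ge 0$ when $\tilde I_t$ advances. In the ideal periodic case $\delta_t\equiv 1$ whenever $\tilde I_t$ advances, so $\Phi_t$ is an unbiased 1D self-looped random walk starting at $b$; with effective absorbing barriers at $0$ and $a+b$ (corresponding to $\tilde J_t$ reaching the neighbouring $\Pcal$-stable predecessors, modulo the shifts of these predecessors with $\tilde I_t$), \Cref{thm:steps_in_random_walk} gives an expected hitting time of $2ab = 2(\tilde R_0-\tilde J_0)(\tilde J_0-\tilde L_0)$, accounting for the leading $ab$ term. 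For the non-ideal drift, \Cref{cor:b} bounds the total deviation $|(\tilde I_t - \pred_\Pcal^{(b')}(\tilde I_t)) - (\tilde I_0 - \tilde L_0)|$ by $b'\cdot\sing[\tilde L_0, \tilde I_t)\le b'\cdot S'$, and \Cref{lem:equallength} together with \Cref{lem:CGK_random_walk} provides the analogous control over expected pointer displacements in terms of the edit distance of shifted substrings of $X$ (which is itself $O(b'\cdot S')$). A quadratic-variation / optional-stopping argument, leveraging that an unbiased walk that has taken $P'$ progress steps has typical displacement $O(\sqrt{P'})$ which is $\le D$, then absorbs the correction into the $S'D+{S'}^2$ summands.

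The hard part will be rigorously coupling the $\Phi_t$ hitting-time analysis with the \emph{a priori unrelated} stopping criterion $\tilde I_t - \tilde J_t > D$: the process $\Phi_t$ is only approximately a simple random walk, and the ``drift'' caused by non-periodic jumps of $\pred_\Pcal^{(b')}$ must be charged carefully to singletons via \Cref{cor:b} so that the total correction is $O(S'D + {S'}^2)$ rather than a looser polynomial combination. Making this accounting tight enough to match the constant $50$ in the statement --- simultaneously handling (i) the $ab$ random-walk hitting cost, (ii) the singleton-induced drift correction, and (iii) the $\Delta>D$ early-stopping rule --- is the technically demanding step, and produces precisely the non-stable counterpart of the cleaner $O(SD + S^2)$ bound appearing in \Cref{lem:steps_to_next_stable_state}.
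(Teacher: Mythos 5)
Your plan correctly identifies the two ingredients (a gambler's-ruin bound of the form $2\cdot(\tilde R_0-\tilde J_0)(\tilde J_0-\tilde L_0)$ via \Cref{thm:steps_in_random_walk}, plus a correction charged to singletons via \Cref{cor:b}), but as written it has a genuine gap: the step you yourself flag as ``the hard part'' is the entire content of the lemma, and the route you sketch for it does not obviously work. The process $\Phi_t=\tilde J_t-\pred_\Pcal^{(b')}(\tilde I_t)$ is \emph{not} an unbiased self-looped walk --- the predecessor can jump by $0$ or by more than $1$ when $\tilde I_t$ advances --- and its two ``barriers'' $0$ and $\Psi_t+\Phi_t=\pred_\Pcal^{(b'-1)}(\tilde I_t)-\pred_\Pcal^{(b')}(\tilde I_t)$ move with the walk. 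A perturbation of total size $O(b'S')$ in the barrier positions does not translate into an additive $O(S'D+S'^2)$ correction to the expected hitting time by any generic optional-stopping or quadratic-variation argument: each unit of barrier displacement can cost up to the current band width in extra expected progress steps, and you have neither a bound on that width nor a mechanism preventing singletons from being charged many times. So the proposal, as it stands, is a statement of intent rather than a proof.

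The paper's proof supplies exactly the missing devices, and they are worth noting because they sidestep the ``approximate random walk'' issue entirely. Instead of perturbing $\Phi_t$, one runs the \emph{exact} unbiased self-looped walk on $I-J$ (which is exact by \Cref{rmk:progress_step_and_random_walk}) between successive stopping times $t_0<t_1<\cdots$, where at stage $i$ the absorbing levels are the \emph{frozen} values $I_{t_{i-1}}-R_{t_{i-1}}$ and $I_{t_{i-1}}-L_{t_{i-1}}$; \Cref{thm:steps_in_random_walk} then applies verbatim at each stage, and \Cref{lem:mustpass}/\Cref{cor:mustpass} guarantee the walk stays in the same band $b'$ until a stable state is reached, so the construction can be iterated. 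The cost of later stages is then controlled \emph{pathwise} (\Cref{clm:claim}): the interval endpoints' movements telescope and are bounded by singletons via \Cref{lem:compare} and \Cref{cor:b}, the band width is bounded by $D/(b'-1)+\sing[\cdot,\cdot)$ by averaging over the $b'-1$ inner segments (which forces a separate, easier argument for $b'=1$, where only one barrier is reachable), and a grouping of stages ensures each singleton is counted at most twice --- this is what yields $O(S'D+S'^2)$ rather than a looser product. Finally, because the per-stage bounds are conditional expectations while \Cref{clm:claim} is a pathwise inequality, the paper combines them through an explicit tree/iterated-expectation argument rather than a single optional-stopping application. None of these steps appears in your proposal, and without them (or a genuinely different replacement) the claimed bound is not established.
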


Now we present the proof of \Cref{lem:steps_to_next_stable_state}.

\begin{proof}[Proof of \Cref{lem:steps_to_next_stable_state}]
If $S=0$, then $X[J_0..I_T-1]=Y[J_0..I_T-1]$ is periodic string and the period divides $I_0-J_0$; hence $P=0$. Therefore we assume $S\ge1$.

\paragraph*{\fbox{Catch-up phase.}}
Consider the first time $Q$ that either $I_Q-J_Q>D$, or $J_Q\ge I_0$ and $I_Q\ge 2I_0-J_0$. Let $\hat P$ be the number of progress steps before time $Q$. Then 
\begin{equation}\label{eq:catch-up}
\E[\hat P]\le 4\cdot(I_0-J_0)\le 4\cdot D\le 4\cdot\E[S\cdot D].
\end{equation}

If $I_Q-J_Q>D$ or $(I_Q,J_Q)$ is a stable state, then $T=Q$ and $P=\hat P$; hence the bound holds naturally. Thus we focus on the case $J_Q\ge I_0$ and $I_Q\ge 2I_0-J_0$ and $(I_Q,J_Q)$ is non-stable from now on.

\paragraph*{\fbox{Stabilization phase.}}
By the definition of $Q$, we know $T>Q$ is the first time that either $(I_T,J_T)$ is a stable state, or $I_T-J_T>D$. Let $P'$ be the number of progress steps from time $Q$ to time $T$.
Applying \Cref{lem:from_non-stable_to_stable} with $\tilde{I}_0=I_Q,\tilde{J}_0=J_Q$ and inheriting its notations, we have
\begin{equation}\label{eq:stabilize}
\E\sbra{P'-50\cdot\pbra{(\tilde{R}_0-\tilde{J}_0)(\tilde{J}_0-\tilde{L}_0)+S'\cdot D+{S'}^2}}\le0.
\end{equation}
To relate $(\tilde{R}_0-\tilde{J}_0)(\tilde{J}_0-\tilde{L}_0)$ with $S$ and $D$, we will prove the following claim.
\begin{claim}\label{clm:RJJL}
$\E\sbra{\pbra{\tilde{R}_0-\tilde{J}_0}\pbra{\tilde{J}_0-\tilde{L}_0}}\le20\cdot\E\sbra{S\cdot D}$.
\end{claim}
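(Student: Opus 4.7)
I would combine three ingredients: (a) the predecessor estimates of Section~\ref{sec:stable-edit}, (b) a martingale argument on $\Delta_t:=p_t-q_t$, and (c) a ``shifted'' application of \Cref{lem:CGK_random_walk}.

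First, using \Cref{lem:existseg} iteratively, I fix a stable partition $\Pcal$ so that $\pred_\Pcal^{(b)}(I_0)=I_0-b(I_0-J_0)$ for every $b\ge 0$; in particular $[J_0..I_0-1]$ is a segment of $\Pcal$. Applying \Cref{cor:b} to $(I_1,I_2)=(I_0,\tilde I_0)$ at every level $b$,
\[\bigl|\tilde I_0-\pred_\Pcal^{(b)}(\tilde I_0)-b(I_0-J_0)\bigr|\le b\cdot\sing\bigl[\pred_\Pcal^{(b)}(I_0),\tilde I_0\bigr)\le b\cdot S.\]
Setting $a:=\tilde I_0-\tilde J_0$, $b_0:=I_0-J_0\le D$, and letting $b^*$ be the level with $\pred_\Pcal^{(b^*)}(\tilde I_0)<\tilde J_0<\pred_\Pcal^{(b^*-1)}(\tilde I_0)$, this yields
\[\tilde R_0-\tilde J_0=\bigl(a-(b^*-1)b_0\bigr)+\epsilon_1,\qquad \tilde J_0-\tilde L_0=\bigl(b^*b_0-a\bigr)+\epsilon_2,\qquad |\epsilon_i|\le b^*\cdot S.\]

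The key estimate is $\E\sbra{|a-b_0|}\le 16\,S$. Viewing the walk past $(I_0,J_0)$ as a fresh CGK walk on $X[I_0..]$, $Y[J_0..]$ starting at $(1,1)$, \Cref{lem:CGK_random_walk}~(2) with $u=b_0$ gives $\E\sbra{|a-b_0|}\le 4\cdot\ed\bigl(X[I_0..2I_0-J_0-1],Y[J_0..I_0-1]\bigr)$, which by \Cref{lem:equallength} at $b=1$ (applicable since $(I_0,J_0)$ is $1$-stable in $\Pcal$) is at most $16\cdot \sing[J_0,2I_0-J_0)\le 16\,S$. Combined with the martingale identity $\E\sbra{a}=b_0$ (optional stopping on the bounded-increment martingale $\Delta_t$), this gives $\E\sbra{(a-b_0)^+}\le 8S$ and, by Jensen, $\E\sbra{a(b_0-a)}\le 0$.

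In the $b^*=1$ regime the product equals $a\bigl((b_0-a)+\epsilon_1\bigr)$, and
\[\E\sbra{(a(b_0-a))^+}\le \E\sbra{a(a-b_0)^+}\le D\cdot\E\sbra{(a-b_0)^+}\le 8\,DS,\]
while the error contributes $\E\sbra{a|\epsilon_1|}\le D\cdot\E\sbra{S}$. In the $b^*\ge 2$ regime one has $a>b_0$ and, using $a-(b^*-1)b_0\le a-b_0$ together with $b^*b_0-a<b_0\le D$, the main term is $\le D\cdot(a-b_0)^+$ whose expectation is again $\le 8\,DS$; the $\epsilon$-errors telescope over the at most $O(D/b_0)$ relevant levels and contribute $O(\E\sbra{SD})$ in total (using $\E\sbra{b^*}\le 2$). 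The main obstacle is the shifted-walk step: the stopping time $Q$ may terminate because $I_t-J_t>D$ rather than because both pointers crossed their thresholds, so one needs either an extension of \Cref{lem:CGK_random_walk}~(2) to the modified stopping rule or a separate argument bounding the ``exceeded $D$'' contribution by $O(\E\sbra{SD})$ as well.
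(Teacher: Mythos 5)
Your plan has a genuine gap at the very first step, and this gap propagates. You propose to ``fix a stable partition $\Pcal$ so that $\pred_\Pcal^{(b)}(I_0)=I_0-b(I_0-J_0)$ for every $b\ge 0$,'' i.e., all of $\Pcal$'s segments below $I_0$ have length exactly $I_0-J_0$. But stable segments are dictated by the matching, not free to choose: $[J..I-1]$ is a stable segment only for $J$ in the range $[\pred_L(I),\pred_R(I)]$, and this range is typically narrow. For instance if the string has a short period $p\ll I_0-J_0$ near $I_0$, every stable segment there has length about $p$, and $[J_0..I_0-1]$ cannot be a single stable segment. \Cref{lem:existseg} only guarantees \emph{some} stable segment ending (or starting) at a given index; it does not let you pick the length. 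So the claim that $(I_0,J_0)$ is $1$-stable in your chosen $\Pcal$ fails, and with it your key estimate $\E[|a-b_0|]\le 16S$: the correct bound from \Cref{lem:equallength} plus \Cref{lem:CGK_random_walk}(2) is $\E[|a-b_0|]\le 16b\cdot S$, where $b$ is whatever stability level $(I_0,J_0)$ actually has.

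The paper's proof confronts exactly this factor of $b$ and cancels it rather than trying to force $b=1$: it bounds $\min\{\tilde R_0-\tilde J_0,\,\tilde J_0-\tilde L_0\}\le|\tilde J_0-\tilde J_0'|$ where $\tilde J_0'=\pred_\Pcal^{(b)}(\tilde I_0)$, with $\E[|\tilde J_0-\tilde J_0'|]\le 17b\,\E[S]$, and separately establishes the segment-size bound $\tilde R_0-\tilde L_0\le S+D/b$ by averaging over the $b-1$ segments between $R_0$ and $I_0$; multiplying these makes the $b$ drop out and gives the $20\,\E[SD]$ directly, with no case split on $b^*$ needed. Beyond the main gap, two further points in your sketch are shaky: the claim $\E[b^*]\le 2$ is unjustified ($b^*$ can be as large as $\Theta(D/b_0)$, and controlling its expectation after conditioning on the stopping event is nontrivial), and your final ``main obstacle'' about the stopping rule is not actually an obstacle: the quantity $(\tilde R_0-\tilde J_0)(\tilde J_0-\tilde L_0)$ is nonnegative and vanishes on the event that catch-up ended with $I-J>D$, so one simply bounds the unconditional expectation with the stopping time of \Cref{lem:CGK_random_walk}(2) and this dominates.
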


\paragraph*{\fbox{Final bounds.}}
Since $P=\hat P+P'$ and $S'\le S$, by \Cref{eq:catch-up}, \Cref{eq:stabilize}, and \Cref{clm:RJJL} we have
\begin{equation*}
\E\sbra{P-2000\cdot\pbra{S\cdot D+S^2}}\le\E\sbra{P-4S\cdot D-50\cdot\pbra{20S\cdot D+S\cdot D+S^2}}\le0.\tag*{\qedhere}
\end{equation*}
\end{proof}

\begin{proof}[Proof of \Cref{clm:RJJL}]
Assume $(I_0,J_0)$ is $(\Pcal,b)$-stable where $b\ge 1$. By \Cref{lem:predpart} we know $J_0 = \pred_{\Pcal}^{(b)}(I_0)$.
By Item (2) in \Cref{lem:CGK_random_walk}, we have
\begin{align*}
\E\sbra{\abs{(\tilde{I}_0- I_0) - (\tilde{J}_0-J_0)}} 
&\le 4\cdot\ed(X[I_0..2I_0-J_0-1], Y[J_0..I_0-1])\\
&\le 16b\cdot\sing[J_0,2I_0-J_0)
\tag{due to \Cref{lem:equallength}}\\
&\le 16b\cdot S.
\end{align*}
Let $\tilde{J}_0':= \pred_{\Pcal}^{(b)}(\tilde{I}_0)$. By \Cref{cor:b}, we have $\abs{(\tilde{I}_0-\tilde{J}_0') - (I_0-J_0)} \le b \cdot \sing[J_0,\tilde{I}_0)\le b\cdot S$.
Hence 
\begin{equation}\label{eq:temp3}
\E\sbra{\abs{\tilde{J}_0-\tilde{J}_0'}}
\le \E\sbra{\abs{ (\tilde{I}_0-I_0)-(\tilde{J}_0-J_0)}  
+ \abs{ (\tilde{I}_0-\tilde{J}'_0)-(I_0-J_0)}} 
\le 17b\cdot \E\sbra{S}.
\end{equation}
On the other hand, since $\tilde{I}_0\ge\tilde{J}_0,\tilde{J}_0'$, we have
\begin{equation}\label{eq:temp6}
|\tilde{J}_0-\tilde{J}'_0| \le (\tilde{I}_0-\tilde{J}_0) + (\tilde{I}_0-\tilde{J}'_0) \le (\tilde{I}_0-\tilde{J}_0) + (\tilde{I}_0-I_0)+(I_0-J_0)\le 3\cdot D,
\end{equation}

For $\tilde{J}_0$, we know $\tilde{L}_0$ and $\tilde{R}_0$ are the closest boundaries of stable segments induced by $\tilde{I}_0$ and $\Pcal$, hence
\begin{equation}
\min \cbra{\tilde{R}_0 - \tilde{J}_0,\tilde{J}_0 -\tilde{L}_0} \le |\tilde{J}_0-\tilde{J}'_0|. \label{eq:temp1}
\end{equation}
Since $\tilde{R}_0 \ge \tilde{J}_0 \ge I_0$, we apply \Cref{lem:compare} and obtain
$$
\pbra{\tilde{R}_0 -\tilde{L}_0} - \pbra{\pred_{\Pcal}^{(j)}(I_0) - \pred_{\Pcal}^{(j+1)}(I_0)} 
\le \sing[\pred_\Pcal^{(j+1)}(I_0),\tilde{R}_0)
\le S,
$$
for all $0\le j\le b-1$. Taking average over $j$, we obtain
\begin{equation}
\tilde{R}_0 -\tilde{L}_0\le S+\frac{I_0 - J_0}b\le S+\frac Db.\label{eq:temp2}
\end{equation}
Therefore we have
\begin{align*}\label{eq:temp4}
&\phantom{\le}\E\sbra{\pbra{\tilde{R}_0-\tilde{J}_0}\pbra{\tilde{J}_0-\tilde{L}_0}}\\
&\le\E\sbra{\pbra{\tilde{R}_0-\tilde{L}_0}\cdot\min\cbra{\tilde{R}_0-\tilde{J}_0,\tilde{J}_0-\tilde{L}_0}}\\
&\le\E\sbra{\abs{\tilde{J}_0-\tilde{J}'_0}\cdot \pbra{\frac Db+S}}
\tag{due to \Cref{eq:temp1} and \Cref{eq:temp2}}\\
&\le17\cdot\E\sbra{S\cdot D}+3\cdot\E\sbra{S\cdot D}
\tag{due to \Cref{eq:temp3} and \Cref{eq:temp6}}\\
&=20\cdot\E\sbra{S\cdot D}.\tag*{\qedhere}
\end{align*}
\end{proof}

\subsection[Proof of Proposition 3.9]{Proof of \Cref{prop:bounds_on_rho}}
\label{sec:together}

Given previous lemmas to control progress steps, we now prove \Cref{prop:bounds_on_rho}.

\begin{proposition*}[\Cref{prop:bounds_on_rho} restated]
There exists a universal constant $C_4\ge1$ such that the following holds. 
Assume $X,Y$ are two identical length-$L$ strings over alphabet $\Sigma$. 
Assume there exists a size-$M$ matching $(i_1,j_1),\ldots,(i_M,j_M)\in[L]^2$ such that
\begin{itemize}
\item $i_t>j_t$ and $X[i_t]=Y[j_t]$ hold for all $t\in[M]$;
\item $i_1<i_2<\cdots<i_M$ and $j_1<j_2<\cdots<j_M$.
\end{itemize}
Let $\rho=C_4\cdot(L-M)$ and $(\hat I,\hat J)$ be any state satisfying $\hat I-\hat J\ge\rho$.
Then a CGK random walk on $X,Y$ starting from $(\hat I,\hat J)$ will miss $(L,L)$ with probability at least $0.5$.
\end{proposition*}
\begin{proof}
Since $X[1]$ and $Y[L]$ are matched to dummy characters after we extend $X,Y$, there are $K:=\sing[-\infty,+\infty)=2\cdot(L-M-1)$ singletons in total.
Let $d:=\hat I-\hat J$ be the initial distance between the two pointers and let $D:=2\cdot d$.
For a state $(I,J),I\ge J$,
\begin{itemize}
\item if $I=J\le L$ or $I-J>D$, then we say it is a \emph{failure state};
\item if it is not a failure state and $I>L$, then we say it is a \emph{success state}.
\end{itemize}

We stop the CGK random walk when it reaches a success state or a failure state. The former case implies that the random walk misses $(L,L)$. So it suffices to prove that we stop at a success state with probability at least $0.5$.

\paragraph*{\fbox{Phases in the CGK random walk.}}
Let $I_0=\hat I,J_0=\hat J$ and $t_0=0$.
Let $t_1\ge 0$ be the first time that either $(I_{t_1},J_{t_1})$ is a stable state or $I_{t_1}-J_{t_1}>D$.\footnote{By our definition, any state $(I,J)>(L,L)$ is a stable state. Hence $t_1$ is almost surely well-defined.}

For every $i\ge 2$, if $(I_{t_{i-1}},J_{t_{i-1}})$ is neither a success state nor a failure state, we know $J_{t_{i-1}}<I_{t_{i-1}}\le L$ and $I_{t_{i-1}}-J_{t_{i-1}}\le D$. 
Then we recursively define $t_i > t_{i-1}$ to be the first time that either $I_{t_i}-J_{t_i}>D$, or the following three conditions hold simultaneously: $I_{t_i}\ge 2I_{t_{i-1}}-J_{t_{i-1}}$, and $J_{t_i}\ge I_{t_{i-1}}$, and $(I_{t_i},J_{t_i})$ is a stable state. 

Assume we stop at $(I_{t_m},J_{t_m})$, which is either a success state or a failure state.\footnote{Since the random walk walks through $X[1..L],Y[1..L]$ almost surely, $m$ is almost surely well-defined.}
Let $P_i$ be the number of progress steps made during the time interval $[t_i,t_{i+1})$. 
Then $P:= \sum_{i=0}^{m-1}P_i$ is the total number of progress steps before we stop.

\paragraph*{\fbox{Bounds on $\E\sbra{P_0}$.}}
Let $\Pcal$ be an arbitrary stable partition and let $b$ be such that $\pred_{\Pcal}^{(b)}(I_0)\le J_0 < \pred_{\Pcal}^{(b-1)}(I_0)$. Let $L_0=\pred_\Pcal^{(b)}(I_0),R_0=\pred_\Pcal^{(b-1)}(I_0)$.
Since $X[1]$ is matched to $Y[0]$, we know $\pred_\Pcal(1)=0$.
Hence applying \Cref{lem:compare} with $I'=R_0,I=1$, we have 
$$
(R_0-L_0)-(1-0) \le \sing[0,R_0) \le \sing[0,I_0).
$$
Therefore, let $S_0 = \sing[0,I_{t_1})$ and we have
$$ 
(J_0- L_0)(R_0-J_0) 
\le \left\lfloor\frac{R_0-L_0}2\right\rfloor\cdot\left\lceil\frac{R_0-L_0}2\right\rceil\le\pbra{\sing[0,I_0)}^2\le S_0^2.
$$
Thus by \Cref{lem:from_non-stable_to_stable}, we have $\E\sbra{P_0 - 50\cdot\pbra{2\cdot S_0^2 + S_0 \cdot D}}\le 0$.

\paragraph*{\fbox{Bounds on $\E\sbra{P_i},1\le i\le m-1$.}}
Let $S_i:=\sing[J_{t_i},I_{t_{i+1}})$. By  \Cref{lem:steps_to_next_stable_state}, we have 
$$
\E\sbra{P_i - 2000\cdot\pbra{S_i\cdot D+S_i^2}}\le 0.
$$

\paragraph*{\fbox{Final bounds.}}
Note that $\sum_{0\le i<m} S_i \le 2\cdot \sing[0,I_{t_m}) \le 2\cdot K$. This is because $J_{t_{i+1}}\ge I_{t_{i}}$ for all $i\ge 1$, implying each singleton is counted at most twice.
Hence
$$
\E[P] =  \E\sbra{\sum_{i=0}^{m-1}P_i}\le \E\sbra{2000\sum_{i=0}^{m-1}  (S_iD+S_i^2)}\le 2000\cdot\E \sbra{ D\sum_{i=0}^{m-1}S_i+\pbra{\sum_{i=0}^{m-1}S_i}^2 }\le 8000\cdot(K\cdot d+K^2).
$$

For $1\le j< +\infty$, let $r_j$ be the deviation brought by the $j$-th progress step.\footnote{Though we will only use $r_1,\ldots,r_P$, we define it in this way to make the next Cauchy-Schwarz inequality easier to understand.} Then $r_j$ are i.i.d.~random variables with 
$$
\Pr[r_j=0]=1/2,\quad \Pr[r_j=+1]=\Pr[r_j=-1]=1/4.
$$
Hence by Cauchy-Schwarz inequality, we have
$$
\E\sbra{\abs{\sum_{j=1}^P r_j}}
=\E\sbra{\abs{\sum_{j=1}^{+\infty} r_j\cdot1_{\cbra{j\le P}}}}
\le \sqrt{\E\sbra{\pbra{\sum_{j=1}^{+\infty} r_j\cdot1_{\cbra{j\le P}}}^2}} 
= \sqrt{\E\sbra{\frac P2}}\le \sqrt{4000\cdot(K\cdot d+K^2)}.
$$
 
Observe that in the end we have  $I_{t_m}-J_{t_m} = d + \sum_{j=1}^P r_j$. 
By setting $\rho = C_4\cdot(L-M)$ for some large enough constant $C_4$, we have $d\ge\rho\ge C_4\cdot K/2$ and
$$
d\ge4\cdot\sqrt{4000\cdot(K\cdot d+K^2)}
\ge4\cdot\E\sbra{\abs{\sum_{j=1}^P r_j}}.
$$
Then by Markov's inequality, with probability at least $0.5$ we have $|I_{t_m}-J_{t_m}-d| \le d/2$, which indicates $(I_{t_m},J_{t_m})$ is not a failure state. Hence we stop at some success state with probability at least $0.5$.
\end{proof}

\subsection[Proof of Lemma 4.17]{Proof of \Cref{lem:from_non-stable_to_stable}}
\label{sec:stabilize}

This subsection is devoted to the proof of \Cref{lem:from_non-stable_to_stable}. We first prove the following useful lemma.

\begin{lemma}\label{lem:mustpass}
Let $(I,J)$ and $(I',J')$ be two states where $I'\ge I$ and $J'\ge J$. Suppose at least one of the following two conditions holds:
\begin{enumerate}[label=(\arabic*)]
    \item $J\le \pred_{R}^{(b)}(I)$ and $J'\ge \pred_{L}^{(b)}(I')$.
    \item $J\ge \pred_{L}^{(b)}(I)$ and $J'\le \pred_{R}^{(b)}(I')$.
\end{enumerate}
Then every possible CGK random walk from $(I,J)$ to $(I',J')$ must contain a $b$-stable state.
\end{lemma}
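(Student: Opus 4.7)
The plan is to classify each state $(I,J)$ visited by the walk into three classes according to $J$'s position relative to the interval $[\pred_L^{(b)}(I),\pred_R^{(b)}(I)]$: call $(I,J)$ \emph{below} if $J<\pred_L^{(b)}(I)$, \emph{inside} if $\pred_L^{(b)}(I)\le J\le \pred_R^{(b)}(I)$, and \emph{above} if $J>\pred_R^{(b)}(I)$. By Item~(2) of \Cref{lem:b-pred}, the inside class is exactly the set of $b$-stable states, so the goal reduces to showing the walk must visit an inside state.

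First, I would interpret the two hypotheses in this language. Under condition~(1), the inequality $J\le \pred_R^{(b)}(I)$ rules out $(I,J)$ being above, and $J'\ge \pred_L^{(b)}(I')$ rules out $(I',J')$ being below; condition~(2) symmetrically rules out $(I,J)$ being below and $(I',J')$ being above. If either endpoint is inside we are done; otherwise the walk runs from a below state to an above state (under condition~(1)) or from an above state to a below state (under condition~(2)).

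The heart of the argument is to show that a single CGK transition $(I,J)\mapsto (I+\delta_I,J+\delta_J)$ with $\delta_I,\delta_J\in\{0,1\}$ can never jump directly between ``above'' and ``below''. This relies on two facts from \Cref{lem:b-pred}: the monotonicity $\pred_L^{(b)}(I)\le \pred_L^{(b)}(I+1)$ and $\pred_R^{(b)}(I)\le \pred_R^{(b)}(I+1)$ (Item~(1)), together with the adjacency inequality $\pred_R^{(b)}(I)\ge \pred_L^{(b)}(I+1)-1$ (Item~(3)). Concretely, if $(I,J)$ is above then
\[
J+\delta_J\ge J\ge \pred_R^{(b)}(I)+1\ge \pred_L^{(b)}(I+1)\ge \pred_L^{(b)}(I+\delta_I),
\]
so the successor cannot be below; dually, if $(I,J)$ is below then
\[
J+\delta_J\le J+1\le \pred_L^{(b)}(I)\le \pred_L^{(b)}(I+\delta_I)\le \pred_R^{(b)}(I+\delta_I),
\]
so the successor cannot be above. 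Hence any walk running from an above state to a below state (or vice versa) must at some step land in the inside class, yielding a $b$-stable state.

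I expect the main obstacle to be identifying Item~(3) of \Cref{lem:b-pred} as the correct ``adjacency'' tool: without it, $\pred_L^{(b)}$ and $\pred_R^{(b)}$ might in principle leave a gap between consecutive values of $I$ that a single CGK step could leap across, breaking the argument. Once that inequality is in hand, everything else is a routine case check based on monotonicity.
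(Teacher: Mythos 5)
Your proof is correct and follows essentially the same route as the paper's: the paper identifies the step $t$ at which the walk crosses from the ``below'' region ($J_t\le\pred_L^{(b)}(I_t)-1$) to the ``above'' region ($J_{t+1}\ge\pred_R^{(b)}(I_{t+1})+1$), or vice versa, and derives a contradiction from the same three facts you invoke --- monotonicity, $\pred_L^{(b)}\le\pred_R^{(b)}$, and the adjacency inequality $\pred_R^{(b)}(I)\ge\pred_L^{(b)}(I+1)-1$. Your ``a single CGK step cannot jump over the inside region'' phrasing is just a repackaging of the same contradiction, and you correctly identify that Item~(3) of \Cref{lem:b-pred} is the crucial tool preventing a one-step leap across the $b$-stable band.
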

\begin{proof}
Let $(I_0,J_0)=(I,J)$ be the starting state.
Then there are four possible transitions during the walk 
$$
(I_{t+1},J_{t+1})\in\cbra{(I_t,J_t), (I_t,J_t+1),(I_t+1,J_t),(I_t+1,J_t+1)}.
$$

We first prove Item (1).
If there is no $b$-stable state on the walk, then by \Cref{lem:b-pred} for all $t$, either $J_t \le \pred^{(b)}_L(I_t)-1$ or $J_t \ge \pred^{(b)}_R(I_{t})+1$. Hence there must exist some $t$ such that $J_t\le \pred^{(b)}_L(I_t)-1$ and $J_{t+1}\ge \pred^{(b)}_R(I_{t+1})+1$. Then
\[ J_{t+1}-1 \ge \pred^{(b)}_R(I_{t+1}) \ge \pred^{(b)}_R(I_{t}) \ge \pred^{(b)}_L(I_{t}) \ge J_t+1. \]
A contradiction.

Now we prove Item (2).
Similarly, if there is no $b$-stable state on the walk, then by \Cref{lem:b-pred} there must exist some $t$ such that $J_t\ge \pred^{(b)}_R(I_t)+1$ and $J_{t+1}\le \pred^{(b)}_L(I_{t+1})-1$. Hence
\[ J_{t}-1 \ge \pred^{(b)}_R(I_t) \ge \pred^{(b)}_L(I_t+1)-1 \ge\pred^{(b)}_L(I_{t+1})-1 \ge J_{t+1}.  \]
A contradiction.
\end{proof}

\begin{corollary}\label{cor:mustpass}
Let $(I,J)$ and $(I',J')$ be two non-stable states where $I'\ge I$ and $J'\ge J$. Let $\Pcal$ be a stable partition and let $b$ be such that $\pred_\Pcal^{(b)}(I)<J<\pred_\Pcal^{(b-1)}(I)$.

If there exists some CGK random walk from $(I,J)$ to $(I',J')$ which does not contain any stable state, then $\pred_\Pcal^{(b)}(I')<J'<\pred_\Pcal^{(b-1)}(I')$.
\end{corollary}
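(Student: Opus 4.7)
The goal is to transfer the ``level-$b$'' sandwich from $(I,J)$ to $(I',J')$, which I would do by applying \Cref{lem:mustpass} twice, once at level $c=b$ and once at $c=b-1$. Before invoking the lemma, however, the hypothesis $\pred_\Pcal^{(b)}(I) < J < \pred_\Pcal^{(b-1)}(I)$ needs to be upgraded from a statement in terms of $\pred_\Pcal$ to a statement in terms of $\pred_L$ and $\pred_R$, because the contrapositive of \Cref{lem:mustpass} only yields useful information once one knows on which side of the interval $[\pred_L^{(c)}(I),\pred_R^{(c)}(I)]$ the point $J$ lies.

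The upgrade step is where non-stability of $(I,J)$ is crucial. By \Cref{cor:b-pred-P}, the hypothesis already gives $J > \pred_L^{(b)}(I)$ and $J < \pred_R^{(b-1)}(I)$. Combining each of these with the fact that $(I,J)$ is not a stable state (so $J \notin [\pred_L^{(c)}(I), \pred_R^{(c)}(I)]$ for both $c=b$ and $c=b-1$ by \Cref{lem:b-pred}), I get the strengthened sandwich
\[
\pred_R^{(b)}(I) < J < \pred_L^{(b-1)}(I).
\]
Now I would apply the contrapositive of \Cref{lem:mustpass} at level $c=b$: Condition~(1) is vacuous since $J > \pred_R^{(b)}(I)$, so Condition~(2) must also fail. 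Because $J \ge \pred_L^{(b)}(I)$ holds, the only way for Condition~(2) to fail is that $J' > \pred_R^{(b)}(I')$. The same argument at $c=b-1$ (where now Condition~(2) is vacuous since $J < \pred_L^{(b-1)}(I)$, and Condition~(1) must fail while $J \le \pred_R^{(b-1)}(I)$ holds) yields $J' < \pred_L^{(b-1)}(I')$. Sandwiching once more via \Cref{cor:b-pred-P} gives
\[
\pred_\Pcal^{(b)}(I') \le \pred_R^{(b)}(I') < J' < \pred_L^{(b-1)}(I') \le \pred_\Pcal^{(b-1)}(I'),
\]
which is exactly the desired conclusion.

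The main obstacle, modest as it is, is keeping the logical direction straight: the non-stability of $(I,J)$ is needed precisely to convert the weak $\pred_\Pcal$ sandwich into the strict $\pred_L/\pred_R$ sandwich, without which the contrapositive of \Cref{lem:mustpass} provides no constraint on $J'$ at all. Everything else is a mechanical unrolling of the two failure modes of \Cref{lem:mustpass}. The hypothesis $I'\ge I$, $J'\ge J$ is used only to satisfy the monotonicity requirement of \Cref{lem:mustpass}.
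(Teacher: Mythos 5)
Your proof is correct and follows essentially the same route as the paper: apply \Cref{lem:mustpass} at levels $b$ and $b-1$, combined with \Cref{cor:b-pred-P}. Your write-up is actually somewhat more explicit than the paper's, which states the intermediate strict sandwich $\pred_R^{(b)}(I)<J<\pred_L^{(b-1)}(I)$ only implicitly (it is obtained precisely as you describe: non-stability of $(I,J)$ upgrades the one-sided inequalities $J>\pred_L^{(b)}(I)$ and $J<\pred_R^{(b-1)}(I)$ to the strict form via \Cref{lem:b-pred}).
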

\begin{proof}
By \Cref{cor:b-pred-P}, we have 
$$
\pred_L^{(b)}(I)\le\pred_\Pcal^{(b)}(I)<J<\pred_P^{(b-1)}(I)\le\pred_R^{(b-1)}(I).
$$
As there is no $b$-stable state or $(b-1)$-stable state on the walk, by \Cref{lem:mustpass} and \Cref{cor:b-pred-P} we have
\begin{equation*}
\pred_\Pcal^{(b)}(I')\le\pred_R^{(b)}(I')<J'<\pred_L^{(b-1)}(I')\le\pred_\Pcal^{(b-1)}(I').\tag*{\qedhere}
\end{equation*}
\end{proof}

Now we are ready to prove \Cref{lem:from_non-stable_to_stable}. We restate it here with simpler notations.
\begin{lemma*}[\Cref{lem:from_non-stable_to_stable} restated]
Consider a CGK random walk starting from a non-stable state $(I_0,J_0),I_0>J_0$.
Let $\Pcal$ be a stable partition and let $b$ be such that $L_0< J_0 <R_0$ where $L_0=\pred_{\Pcal}^{(b)}(I_0),R_0=\pred_{\Pcal}^{(b-1)}(I_0)$. 
Let $D$ be a distance bound satisfying $D\ge I_0-J_0$. 

Consider the first time $T$ that either $(I_T,J_T)$ is a stable state or $I_T-J_T > D$. Let $P$ be the number of progress steps before time $T$ and let $S = \sing[L_0,I_T)$. 
Then
\[
\E\sbra{P-50\cdot \pbra{(R_0-J_0) (J_0-L_0)+S\cdot D+S^2} } \le 0.
\]
\end{lemma*}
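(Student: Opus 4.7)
Plan. By \Cref{cor:mustpass} applied to the non-stable starting state $(I_0,J_0)$ and the stable partition $\Pcal$, the walk stays in the band $L_t < J_t < R_t$ throughout $[0,T)$, where $L_t := \pred_\Pcal^{(b)}(I_t)$ and $R_t := \pred_\Pcal^{(b-1)}(I_t)$. Consequently, at time $T$ exactly one boundary event occurs: (i) $J_T = L_T$ (the state becomes $b$-stable); (ii) $J_T = R_T$ (the state becomes $(b-1)$-stable); or (iii) $\Delta_T := I_T - J_T = D+1$ (the overshoot case).

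The key idea is to track the non-negative potential
\[
\phi_t := (J_t - L_t)(R_t - J_t),
\]
which satisfies $\phi_0 = (J_0-L_0)(R_0-J_0)$ and vanishes in exit cases (i) and (ii). In the idealized periodic situation, where the boundary increments $\delta_L := L_{t+1}-L_t$ and $\delta_R := R_{t+1}-R_t$ match the pointer increment $\delta_I := I_{t+1}-I_t$, a direct case analysis over the four equally likely transitions $(\delta_I,\delta_J)\in\{0,1\}^2$ on a progress step shows $\E\sbra{\phi_{t+1}-\phi_t \mid \text{progress step}} = -1/2$. Hence $\phi_t + P_t/2$ is a martingale in that idealization, and optional stopping (truncated at $T\wedge n$ and then $n\to\infty$, justified by $|\Delta_t|\le D+1$) recovers $\E\sbra{P} \le 2\phi_0 = 2(J_0-L_0)(R_0-J_0)$, consistent with the hitting-time formula \Cref{thm:steps_in_random_walk}.

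For the general case, whenever $I_t$ crosses a singleton the increments $\delta_L,\delta_R$ may exceed $\delta_I$, introducing an extra drift correction $\Psi_t$ into $\E\sbra{\phi_{t+1}-\phi_t \mid \mathcal{F}_t}$. A careful expansion shows that $|\Psi_t|$ is a polynomial of degree at most two in the excess jumps $\delta_L-\delta_I$ and $\delta_R-\delta_I$, with coefficients bounded by the band-width quantities $J_t-L_t$ and $R_t-J_t$ (each at most $D$). Summing along the walk and telescoping via \Cref{lem:compare}, the total excess of $\delta_L$ or $\delta_R$ over $\delta_I$ is bounded by the number of singletons in $[L_0,I_T)$, namely $S$. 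This yields $\E\sbra{\sum_t \Psi_t} = O(SD + S^2)$, and combining with optional stopping and $\phi_T \ge 0$ gives
\[
\E\sbra{P}/2 \le \phi_0 - \E\sbra{\phi_T} + \E\sbra{\sum_t \Psi_t} \le (J_0-L_0)(R_0-J_0) + O(SD + S^2),
\]
proving the lemma with a suitable absolute constant.

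The main obstacle will be treating the overshoot case (iii), where $\phi_T$ need not vanish and can be as large as $\Theta(D^2)$. The plan is to bound $\E\sbra{\phi_T}$ by estimating $\Pr\sbra{\Delta_T = D+1}$ via a first-moment argument on $|\Delta_T - \Delta_0|$ using \Cref{lem:CGK_random_walk}, together with \Cref{thm:steps_in_random_walk}; the resulting contribution is absorbed into the $SD$ term. A secondary challenge is to ensure that the depth parameter $b$ (which may be as large as the distance $I_0 - L_0$) does not enter the drift correction. I will handle this by carrying out the telescoping at the single-level \Cref{lem:compare} rather than invoking \Cref{cor:b}, so that each singleton in $[L_0,I_T)$ is charged only a bounded number of times independent of $b$.
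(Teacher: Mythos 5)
The martingale idea is appealing and the idealized calculation is correct: with $a_t = J_t - L_t$, $c_t = R_t - J_t$, and $\epsilon = \delta_J - \delta_I$, one has $\phi_{t+1}-\phi_t = \epsilon(c_t - a_t) - \epsilon^2$ on a progress step, whence $\E[\phi_{t+1}-\phi_t] = -1/2$ in the boundary-rigid case. This is a genuinely different framing than the paper's, which instead decomposes the walk into segments between hitting times of the band boundaries and invokes the hitting-time formula (\Cref{thm:steps_in_random_walk}) once per segment. Your per-step supermartingale and their segment-wise stopping-time argument are two legitimate ways to package the same underlying random-walk estimate, and if the drift corrections could be controlled as you claim the proof would go through.

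However, there is a genuine gap in your control of the drift. Your key claim, that ``the total excess of $\delta_L$ or $\delta_R$ over $\delta_I$ is bounded by the number of singletons in $[L_0,I_T)$, namely $S$,'' is false as stated. The boundaries are $L_t = \pred_\Pcal^{(b)}(I_t)$ and $R_t = \pred_\Pcal^{(b-1)}(I_t)$, which are $b$- and $(b-1)$-level predecessors, so a single-level application of \Cref{lem:compare} cannot reach them. The correct telescoping estimate, coming from \Cref{cor:b} applied to consecutive integers $I_0,I_0+1,\dots,I_T$, is $\sum_t |L_{t+1}-L_t - \delta_I| \le b\cdot S$ and $\sum_t |R_{t+1}-R_t - \delta_I| \le (b-1)\cdot S$; the signed sums telescope to $O(S)$, but the drift correction $\Psi_t \approx \tfrac12\bigl(a_t(\rho_t-1) + c_t(1-\ell_t)\bigr)$ involves the absolute excesses multiplied by the time-varying coefficients $a_t,c_t$, and those coefficients do not have a uniform sign. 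Plugging in the true bounds gives $\sum |\Psi_t| = O\bigl((D/(b-1)+S)\cdot bS\bigr) = O(DS + bS^2)$, which carries an unwanted factor of $b$. Your proposed fix of ``carrying out the telescoping at the single-level \Cref{lem:compare}'' cannot produce a $b$-independent bound here, because the jump sequence $L_{t+1}-L_t$ for a $b$-level predecessor genuinely accumulates contributions from all $b$ intermediate levels.

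The paper removes the $b$-dependence by two devices your proposal omits. First, it exploits the sharper band-width estimate $R_t - L_t \le \tfrac{D}{b-1} + \sing[L_t,I_t)$ (\Cref{eq:firstfactor}), so that the $D$-portion of the coefficient comes with a $1/(b-1)$ factor that cancels the $b$ in the telescoping sum (\Cref{eq:to-bound1}, \Cref{eq:first-term-bound}). Second, for the singleton-portion of the band width it groups the stopping indices so that each singleton is charged only $O(1)$ times rather than $b$ times (\Cref{eq:to-bound2}, \Cref{second-term-bound}); this is exactly the step that replaces the naive $bS^2$ with $S^2$. Without an analogue of this grouping, your per-step drift sum retains the extra factor of $b$.

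Two further points, smaller but still unresolved: (1) your exit-case trichotomy is not exhaustive as written — at time $T$ the pointer $J_T$ can land strictly below $L_T$ (if $L$ jumps past $J$ on the last step), making $\phi_T$ negative rather than zero, which your optional-stopping identity would need to account for; and (2) the overshoot case $\Delta_T = D+1$ is flagged as a difficulty but no argument is actually given. The paper's proof structurally avoids both issues by obtaining a pathwise bound in \Cref{clm:claim} rather than a first-moment estimate of $\Pr[\Delta_T > D]$, a distinction worth noting when revising.
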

\begin{proof}
Let $t_0=0$.
Starting from state $(I_{t_0},J_{t_0})$, consider the first time $t_1$ that we reach a state $(I_{t_1},J_{t_1})$ that one of the following three conditions holds: $I_{t_1}-J_{t_1}=I_{t_0}-R_{t_0}$, or $I_{t_1}-J_{t_1}=I_{t_0}-L_{t_0}$, or $I_{t_1}\ge J_{t_1}>L$.\footnote{The third condition implies $(I_{t_1},J_{t_1})$ is a stable state.}
By \Cref{thm:steps_in_random_walk}, the expected number of progress steps is at most $2\cdot(R_{t_0}-J_{t_0})(J_{t_0}-L_{t_0})$.

Note that if $T\le t_1$, then the bounds hold naturally. Therefore, from now on we focus on the case $T>t_1$, which means $I_{t_1}-J_{t_1}\le D$ and we have not reached a stable state during these steps.
Let $L_{t_1}:= \pred_{\Pcal}^{(b)}(I_{t_1})$ and $R_{t_1}:= \pred_{\Pcal}^{(b-1)}(I_{t_1})$. By \Cref{cor:mustpass}, we have $L_{t_1}< J_{t_1}<R_{t_1}$. 
Then similarly we consider the first time $t_2> t_1$ that we reach a state $(I_{t_2},J_{t_2})$ satisfying one of the following three conditions: $I_{t_2}-J_{t_2} = I_{t_1} - R_{t_1}$, or $I_{t_2}-J_{t_2} = I_{t_1} - L_{t_1}$, or $I_{t_2}\ge J_{t_2}>L$, which takes at most
\begin{equation}\label{eq:steps}
2 \cdot(R_{t_1}-J_{t_1})(J_{t_1}-L_{t_1})  
\end{equation}
progress steps in expectation. Now we give an upper bound for \Cref{eq:steps}.
\begin{itemize}
    \item If $I_{t_1}-J_{t_1} = I_{t_0} - R_{t_0}$, then $R_{t_1}-J_{t_1} = (I_{t_0}-R_{t_0}) - (I_{t_1}-R_{t_1})$. Thus
    $$
    \Cref{eq:steps}\le 2\cdot(R_{t_1}-L_{t_1})(R_{t_1}-J_{t_1})
    =2\cdot(R_{t_1}-L_{t_1})\pbra{(I_{t_0}-R_{t_0}) - (I_{t_1}-R_{t_1})}.
    $$
    \item If $I_{t_1}-J_{t_1} = I_{t_0} - L_{t_0}$, then $J_{t_1}-L_{t_1} = (I_{t_1}-L_{t_1}) - (I_{t_0}-L_{t_0})$. Thus
    $$
    \Cref{eq:steps}\le 2\cdot(R_{t_1}-L_{t_1})(J_{t_1}-L_{t_1})
    =2\cdot(R_{t_1}-L_{t_1})\pbra{(I_{t_1}-L_{t_1}) - (I_{t_0}-L_{t_0})}.
    $$
\end{itemize}
In either case, we have
$$
\Cref{eq:steps}\le 2\cdot(R_{t_1}-L_{t_1})\cdot  \max \cbra{(I_{t_0}-R_{t_0}) - (I_{t_1}-R_{t_1}), (I_{t_1}-L_{t_1}) - (I_{t_0}-L_{t_0})}.
$$

Similarly we repeatedly generate $(I_{t_i},J_{t_i})$ as well as $L_{t_i}=\pred_\Pcal^{(b)}(I_{t_i}),R_{t_i}=\pred_\Pcal^{(b-1)}(I_{t_i})$, where we have never reached a stable state, or a state with distance greater than $D$ before and including $(I_{t_m},J_{t_m})$; while starting from $(I_{t_m},J_{t_m})$ we reach $(I_T,J_T)$ which is a stable state, or a state with distance at least $D$. The expected total number of progress steps is
\begin{align}\label{eq:totalsum}
&\phantom{\le}2\cdot(R_{t_0}-J_{t_0})(J_{t_0}-L_{t_0})+2\sum_{i=1}^m(R_{t_i}-J_{t_i}) (J_{t_i}-L_{t_i})\\
&\le2\cdot(R_{t_0}-J_{t_0}) (J_{t_0}-L_{t_0})\nonumber\\
&\phantom{\le}+ 2\sum_{i=1}^m (R_{t_i}-L_{t_i})\cdot  \max \cbra{(I_{{t_{i-1}}}-R_{{t_{i-1}}}) - (I_{t_i}-R_{t_i}), (I_{t_i}-L_{t_i}) - (I_{{t_{i-1}}}-L_{{t_{i-1}}})}.\nonumber
\end{align}

\begin{claim}
\label{clm:claim}
For any possible outcome of $\cbra{(I_{t_i},J_{t_i})}_{i=0}^m$ and $(I_T,J_T)$, let $S'=\sing[L_{t_0},I_{t_m})$ and $S=\sing[L_{t_0},I_T)$, then we have
$$
\Cref{eq:totalsum}
\le 50\cdot\pbra{(R_{t_0}-J_{t_0}) (J_{t_0}-L_{t_0}) + {S'}^2 +  S'\cdot D }
\le 50\cdot\pbra{(R_0-J_0) (J_0-L_0) + S^2 +  S\cdot D }.
$$
\end{claim}

We now show \Cref{clm:claim} implies the lemma. 
Consider a tree with the root node representing the initial state $(I_{t_0},J_{t_0})$. For each node $p$ in the tree, and every possible walk $\lambda$ from state $(I_{t_i},J_{t_i})$ represented by $p$ to a subsequent state $(I_{t_{i+1}},J_{t_{i+1}})$, we draw a directed edge, labeled with $\lambda$, from $p$ to a child node $p'$ representing $(I_{t_{i+1}},J_{t_{i+1}})$. 
We continue this process and include leaf nodes representing stable states or states with distance greater than $D$.
Hence the random process that we are studying is equivalent to a root-to-leaf path on the tree. 

Let $\pi(p,p')$ be the probability of arriving at child node $p'$ if we start from node $p$, and let $\delta(p,p')$ denote the number of progress steps along the walk specified by the edge from $p$ to $p'$. \begin{itemize}
\item For every non-leaf node $p$, let $\sigma(p) = \sum_{\text{$p'$ is a child of $p$}} \pi(p,p')\cdot \delta(p,p')$. We have seen $\sigma(p)\le 2\cdot (J_{t_i}-L_{t_i})(R_{t_i}-J_{t_i})$, where $(I_{t_i},J_{t_i})$ is the state represented by node $p$, and $L_{t_i},R_{t_i}$ are the boundaries of the stable segment containing $J_{t_i}$. 
\item For every leaf node $p$, let $\sigma(p)=-50\cdot\pbra{(R_{0}-J_{0}) (J_{0}-L_{0}) + S^2 + S\cdot D}$, where $S$ is defined for the path from root to $p$.
\end{itemize}

Using this formulation, \Cref{clm:claim} states that for every root-to-leaf path $(p_0,p_1,p_2,\dots,p_{m},p_{m+1})$ where $p_0$ is the root and $p_{m+1}$ is a leaf node, we have $\sum_{i=0}^{m+1}\sigma(p_i) \le 0$. Then,
\begin{align*}
&\phantom{=}\E\sbra{P-50\cdot\pbra{(R_0-J_0)(J_0-L_0)+S\cdot D+S^2}}\\
&=\E_{(p_0,p_1,\dots,p_{m},p_{m+1})}\sbra{ \delta(p_0,p_1)+\delta(p_1,p_2)+\dots + \delta(p_{m},p_{m+1}) +\sigma(p_{m+1})} \\
&=\sum_{(p_0,p_1,\dots,p_{m},p_{m+1})} \pi(p_0,p_1)\pi(p_1,p_2)\cdots \pi(p_{m},p_{m+1})\cdot \pbra{\delta(p_0,p_1)+\delta(p_1,p_2)+\dots + \delta(p_{m},p_{m+1}) +\sigma(p_{m+1})}\\
&=\sigma(p_0)+\sum_{(p_0,p_1)}\pi(p_0,p_1)\sigma(p_1)+\cdots+\sum_{(p_0,p_1,\ldots,p_m)}\pi(p_0,p_1)\pi(p_1,p_2)\cdots\pi(p_{m-1},p_m)\sigma(p_m)\\
&\phantom{=\sigma(p_0)}+\sum_{(p_0,p_1,\ldots,p_m,p_{m+1})}\pi(p_0,p_1)\pi(p_1,p_2)\cdots\pi(p_m,p_{m+1})\sigma(p_{m+1})\\
&=\E_{(p_0,p_1,\dots,p_{m},p_{m+1})} \sbra{\sigma(p_0)+\sigma(p_1)+\dots+\sigma(p_m)+\sigma(p_{m+1})}\\
&\le0.\tag*{\qedhere}
\end{align*}
\end{proof} 

It remains to prove \Cref{clm:claim}, which we divide into two cases.
\begin{proof}[Proof of \Cref{clm:claim}: Case $b=1$]
Note that when $b=1$, we can only have $I_{t_i}-J_{t_i} = I_{t_{i-1}}-L_{t_{i-1}}$, since otherwise $I_{t_i}-J_{t_i}=0$ and hence $(I_{t_i},J_{t_i})$ is a 0-stable state. 
Thus
\begin{align*}
&\phantom{\le}\sum_{i=1}^m (R_{t_i}-J_{t_i}) (J_{t_i}-L_{t_i})\nonumber\\
&\le\sum_{i=1}^m (R_{t_i}-L_{t_i}) (J_{t_i}-L_{t_i})\nonumber\\
&=\sum_{i=1}^m (R_{t_i}-L_{t_i})\cdot \pbra{(I_{t_i}-L_{t_i}) - (I_{t_{i-1}}-L_{t_{i-1}})} \\
&\le\max_{1\le i\le m} \cbra{R_{t_i}-L_{t_i}} \cdot \sum_{i=1}^m\pbra{(I_{t_i}-L_{t_i}) - (I_{t_{i-1}}-L_{t_{i-1}})}\\
&\le\max_{1\le i\le m} \cbra{R_{t_i}-L_{t_i}} \cdot \sing[L_{t_0},I_{t_m}).
\tag{due to \Cref{cor:b}}
\end{align*}
For any $i\in[m]$, by \Cref{lem:compare} we have $(R_{t_i}-L_{t_i}) - (R_{t_0}-L_{t_0}) \le \sing[L_{t_0},R_{t_i})=\sing[L_{t_0},I_{t_i})$.
In addition, since $b=1$ we have $R_{t_i}=I_{t_i}$, which implies 
$$
\max_{1\le i\le m}\cbra{R_{t_i}-L_{t_i}} \le (R_{t_0}-L_{t_0})+\sing[L_{t_0},R_{t_m})  =(I_{t_1}-J_{t_1})+\sing[L_{t_0},I_{t_m})\le D+\sing[L_{t_0},I_{t_m}).
$$
Hence by writing $S'=\sing[I_{t_0},I_{t_m})$, we have
\begin{equation*}
\Cref{eq:totalsum}\le 2\cdot(R_{t_0}-J_{t_0})(J_{t_0}-L_{t_0}) + 2\cdot D\cdot S' +2\cdot{S'}^2. 
\tag*{\qedhere}
\end{equation*}
\end{proof}

\begin{proof}[Proof of \Cref{clm:claim}: Case $b\ge 2$]
We partition $[m]$ into groups where each group contains a contiguous interval of indices. The first group starts from 1. The second group starts from $i=\min\cbra{i\mid L_{t_i}\ge I_{t_1}}$. Similarly, the third group starts from $i'=\min\cbra{i'\mid L_{t_{i'}} \ge I_{t_i}}$, and so on (see \Cref{fig:grouping}).

\begin{figure}[ht]
\centering
\scalebox{0.8}{  \begin{tikzpicture}
    \draw[ultra thick] (0, 0) -- (0, 10pt);
    \node at (0, 20pt) {$L_{t_0}$};

    \draw[thick,red] (20pt, 0) -- (20pt, 10pt);
    \node at (20pt, 20pt) {$J_{t_0}$};

    \draw[ultra thick] (40pt, 0) -- (40pt, 10pt);
    \node at (40pt, 20pt) {$R_{t_0}$};

    \draw[ultra thick] (90pt, 0) -- (90pt, 10pt);
    
    \draw[ultra thick, red] (150pt, 0) -- (150pt, 10pt);
    \node at (150pt, 20pt) {$I_{t_0}$};  
    
    \draw[ultra thick] (0, 0) -- (150pt, 0);

    \draw[ultra thick] (35pt, -35pt) -- (35pt, -25pt);
    \node at (35pt, -15pt) {$L_{t_1}$};

    \draw[thick, red] (75pt, -35pt) -- (75pt, -25pt);
    \node at (72pt, -15pt) {$J_{t_1}$};

    \draw[ultra thick] (85pt, -35pt) -- (85pt, -25pt);
    \node at (88pt, -15pt) {$R_{t_1}$};

    \draw[ultra thick] (135pt, -35pt) -- (135pt, -25pt);

    \draw[ultra thick, red] (185pt, -35pt) -- (185pt, -25pt);
    \node at (175pt, -15pt) {$I_{t_1}$};
    
    \draw[ultra thick] (35pt, -35pt) -- (185pt, -35pt);

    \draw[ultra thick] (105pt, -70pt) -- (105pt, -60pt);
    \node at (105pt, -50pt) {$L_{t_2}$};

    \draw[thick, red] (120pt, -70pt) -- (120pt, -60pt);
    \node at (120pt, -50pt) {$J_{t_2}$};

    \draw[ultra thick] (160pt, -70pt) -- (160pt, -60pt);
    \node at (160pt, -50pt) {$R_{t_2}$};

    \draw[ultra thick] (207.5pt, -70pt) -- (207.5pt, -60pt);

    \draw[ultra thick, red] (270pt, -70pt) -- (270pt, -60pt);
    \node at (270pt, -50pt) {$I_{t_2}$}; 
    
    \draw[ultra thick] (105pt, -70pt) -- (270pt, -70pt);

    
    \draw[ultra thick] (195pt, -105pt) -- (195pt, -95pt);
    \node at (195pt, -85pt) {$L_{t_3}$};

    \draw[thick, red] (212.5pt, -105pt) -- (212.5pt, -95pt);
    \node at (212.5pt, -85pt) {$J_{t_3}$};

    \draw[ultra thick] (232.5pt, -105pt) -- (232.5pt, -95pt);
    \node at (232.5pt, -85pt) {$R_{t_3}$};

    \draw[ultra thick] (282pt, -105pt) -- (282pt, -95pt);
    
    \draw[ultra thick, red] (322.5pt, -105pt) -- (322.5pt, -95pt);
    \node at (335pt, -85pt) {$I_{t_3}$};
    
    \draw[ultra thick] (195pt, -105pt) -- (322.5pt, -105pt);

    \node[scale=1.5] at (282.5pt, -160pt) { $\mathbf{\cdots\cdots} $};
    \draw[ultra thick] (217.5pt, -140pt) -- (217.5pt, -130pt);
    \node at (212pt, -120pt) {$L_{t_4}$};
    \draw[thick, red] (230pt, -140pt) -- (230pt, -130pt);
    \node at (230pt, -120pt) {$J_{t_4}$};
    \draw[ultra thick,red] (357.5pt, -140pt) -- (357.5pt, -130pt);
    \node at (357.5pt, -120pt) {$I_{t_4}$};
    \draw[ultra thick] (277pt, -140pt) -- (277pt, -130pt);
    \node at (277pt, -120pt) {$R_{t_4}$};
    \draw[ultra thick] (317pt, -140pt) -- (317pt, -130pt);
    \draw[ultra thick] (217.5pt, -140pt) -- (357.5pt, -140pt);

    \draw[blue,decorate,decoration={brace,mirror,amplitude=5pt},very thick] (17pt, -22pt) -- (17pt, -75pt) node[midway,xshift=-1.5cm,scale=1.5] {Group $1$};
    \draw[blue,decorate,decoration={brace,mirror,amplitude=5pt},very thick] (177pt, -92pt) -- (177pt, -160pt) node[midway,xshift=-1.5cm,scale=1.5] {Group $2$};

    \draw[blue,dashed,very thick] (185pt, -170pt) -- (185pt, 30pt);
    \draw[blue,dashed,very thick] (322.5pt, -170pt) -- (322.5pt, -40pt);
    
  \end{tikzpicture}}
\caption{Visualization of the grouping in the proof of \Cref{clm:claim}. Here $b=3$.}\label{fig:grouping}
\end{figure}
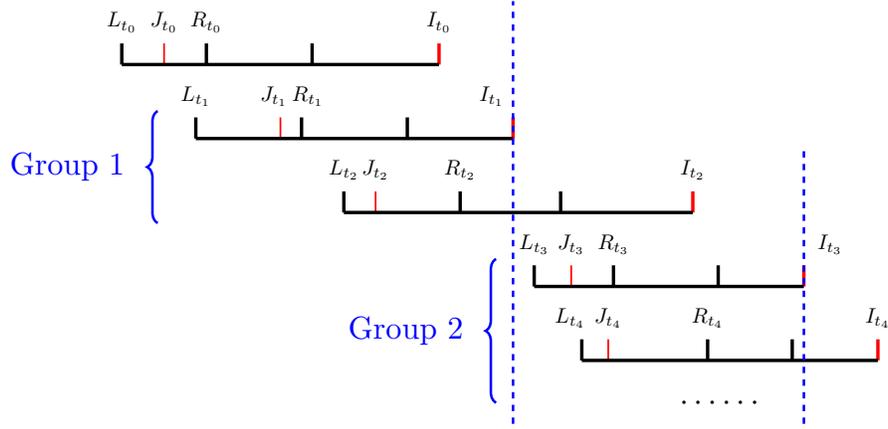

For a particular group $[p..q]$, we have
\begin{align}\label{eq:tobound}
&\phantom{\le}\sum_{i=p}^{q} (R_{t_i}-L_{t_i})\cdot  \max \cbra{(I_{t_{i-1}}-R_{t_{i-1}}) - (I_{t_i}-R_{t_i}), (I_{t_i}-L_{t_i}) - (I_{t_{i-1}}-L_{t_{i-1}})} \nonumber\\
&\le\pbra{\max_{p\le i\le q}\cbra{R_{t_i}-L_{t_i}} }\cdot \sum_{i=p}^{q} \max \cbra{(I_{t_{i-1}}-R_{t_{i-1}}) - (I_{t_i}-R_{t_i}), (I_{t_i}-L_{t_i}) - (I_{t_{i-1}}-L_{t_{i-1}})}.
\end{align}

We now bound the first factor of \Cref{eq:tobound}.
For every $0\le j\le b$, let $I_{t_i}^{(j)} = \pred_{\Pcal}^{(j)}(I_{t_i})$. Hence we have $L_{t_i}=I_{t_i}^{(b)},R_{t_i}=I_{t_i}^{(b-1)}$.
For any $1\le j\le b-1$, by \Cref{lem:compare} we have 
$$
(R_{t_i}-L_{t_i}) - (I^{(j-1)}_{t_i}-I^{(j)}_{t_i})  \le \sing[L_{t_i},I^{(j-1)}_{t_i}).
$$
Taking average over $j$, we obtain
$$
R_{t_i}-L_{t_i} \le \frac{1}{b-1} (I_{t_i}-R_{t_i}) + \sing[L_{t_i},I_{t_i})
\le \frac{1}{b-1} (I_{t_i}-J_{t_i}) + \sing[L_{t_i},I_{t_i})
\le \frac{D}{b-1} + \sing[L_{t_i},I_{t_i}),
$$
which immediately implies
\begin{equation}\label{eq:firstfactor}
    \max_{p\le i\le q}\cbra{R_{t_i}-L_{t_i}} \le \frac{D}{b-1} +  \sing[L_{t_p},I_{t_q}). 
\end{equation}
Hence by \Cref{eq:tobound}, we have
\begin{align}
&\phantom{\le}\sum_{i=1}^{m} (R_{t_i}-L_{t_i})\cdot  \max \cbra{(I_{t_{i-1}}-R_{t_{i-1}}) - (I_{t_i}-R_{t_i}), (I_{t_i}-L_{t_i}) - (I_{t_{i-1}}-L_{t_{i-1}})} \nonumber\\
&\le\frac{D}{b-1}\cdot  \sum_{i=1}^{m} \max \cbra{(I_{t_{i-1}}-R_{t_{i-1}}) - (I_{t_i}-R_{t_i}), (I_{t_i}-L_{t_i}) - (I_{t_{i-1}}-L_{t_{i-1}})}\label{eq:to-bound1} \\
&\phantom{\le}+\sum_{\text{group $g= [p..q]$}} \sing[L_{t_p},I_{t_q})\cdot \sum_{i=p}^{q} \max \cbra{(I_{t_{i-1}}-R_{t_{i-1}}) - (I_{t_i}-R_{t_i}), (I_{t_i}-L_{t_i}) - (I_{t_{i-1}}-L_{t_{i-1}})\label{eq:to-bound2}}.
\end{align}

\paragraph*{\fbox{Bound \Cref{eq:to-bound1}.}}
Applying \Cref{cor:b} and observing $S'=\sing[I_{t_0},I_{t_m})$, we obtain
\begin{align}
&\phantom{\le}\sum_{i=1}^m\max \cbra{(I_{t_{i-1}}-R_{t_{i-1}}) - (I_{t_i}-R_{t_i}), (I_{t_i}-L_{t_i}) - (I_{t_{i-1}}-L_{t_{i-1}})}\nonumber\\
&\le\sum_{i=1}^m \pbra{\abs{ (I_{t_{i-1}}-R_{t_{i-1}}) - (I_{t_i}-R_{t_i})}+\abs{(I_{t_i}-L_{t_i}) - (I_{t_{i-1}}-L_{t_{i-1}})}}\nonumber\\
&\le(b-1)\cdot \sing[R_{t_0},I_{t_m}) + b\cdot \sing[L_{t_0}, I_{t_m})\nonumber \\
&\le(2b-1)\cdot \sing[L_{t_0},I_{t_m})\nonumber\\
&=(2b-1)\cdot S'. \label{eq:first-term-bound}
\end{align}

\paragraph*{\fbox{Bound \Cref{eq:to-bound2}.}}
Note that
\begin{align*}
&\phantom{=}\sum_{i=p+1}^{q} \max   \cbra{(I_{t_{i-1}}-R_{t_{i-1}}) - (I_{t_i}-R_{t_i}), (I_{t_i}-L_{t_i}) - (I_{t_{i-1}}-L_{t_{i-1}})} \\
&=\sum_{i=p+1}^{q} \max\cbra{(R_{t_i} - R_{t_{i-1}}) - (I_{t_i}-I_{t_{i-1}}), (I_{t_i}-I_{t_{i-1}}) - (L_{t_i} -L_{t_{i-1}})} \\
&\le\sum_{i=p+1}^{q} \pbra{ (R_{t_i} - R_{t_{i-1}})  + (I_{t_i}-I_{t_{i-1}})} \\
&=(R_{t_q}-R_{t_p})+(I_{t_q}-I_{t_p})\\
&\le(R_{t_q}-L_{t_q})+(I_{t_p}-R_{t_p}) + (I_{t_q}-R_{t_q})+(R_{t_q}-L_{t_q}).
\tag{due to $I_{t_p}\ge L_{t_q}$}
\end{align*}
We also have
\begin{align*}
&\phantom{=}\max \cbra{(I_{t_{p-1}}-R_{t_{p-1}}) - (I_{t_p}-R_{t_p}), (I_{t_p}-L_{t_p}) - (I_{t_{p-1}}-L_{t_{p-1}})} \\
&\le(I_{t_{p-1}}-R_{t_{p-1}}) + (I_{t_p}-L_{t_p})\\
&=(I_{t_{p-1}}-R_{t_{p-1}}) + (I_{t_p}-R_{t_p}) + (R_{t_p}-L_{t_p}).
\end{align*}
Summing up the above two inequalities and using $I_{t_i}-R_{t_i} \le I_{t_i}-J_{t_i}\le D$, we obtain
\begin{align*}
&\phantom{\le}\sum_{i=p}^{q} \max \cbra{(I_{t_{i-1}}-R_{t_{i-1}}) - (I_{t_i}-R_{t_i}), (I_{t_i}-L_{t_i}) - (I_{t_{i-1}}-L_{t_{i-1}})}\\
&\le4\cdot D + 3\cdot \max\cbra{R_{t_q}-L_{t_q},R_{t_p}-L_{t_p}}\\
&\le4\cdot D + 3\cdot \pbra{\frac{D}{b-1} + \sing[L_{t_p},I_{t_q})}
\tag{due to \Cref{eq:firstfactor}}\\
&\le7\cdot D + 3\cdot \sing[L_{t_p},I_{t_q}).
\end{align*}
Hence \Cref{eq:to-bound2} can be bounded by
\begin{align}
&\phantom{\le}\sum_{\text{group $g= [p..q]$}} \sing[L_{t_p},I_{t_q})\cdot  \pbra{7\cdot D + 3\cdot \sing[L_{t_p},I_{t_q}) }\nonumber\\
&\le7D \cdot \pbra{\sum_{\text{group $g= [p..q]$}} \sing[L_{t_p},I_{t_q})} +3\cdot \pbra{\sum_{\text{group $g= [p..q]$}} \sing[L_{t_p},I_{t_q})}^2\nonumber\\
&\le14D \cdot S' + 12\cdot{S'}^2, \label{second-term-bound}
\end{align}
where the last inequality follows from $\sum_{\text{group $g= [p..q]$}} \sing[L_{t_p},I_{t_q}) \le 2\cdot \sing[L_{t_1},I_{t_m})\le 2\cdot S'$, as our grouping rule ensures that each singleton appears in at most two groups.

\paragraph*{\fbox{Final bounds.}}
Combining \Cref{eq:to-bound1}, \Cref{eq:to-bound2}, \Cref{eq:first-term-bound}, and \Cref{second-term-bound}, we have 
\begin{align*}
\Cref{eq:totalsum}&\le2\cdot(R_{t_0}-J_{t_0}) (J_{t_0}-L_{t_0})  + \frac{2D}{b-1} \cdot (2b-1)\cdot S'+\pbra{28D\cdot S' + 24\cdot{S'}^2}\\
&\le2\cdot (R_{t_0}-J_{t_0}) (J_{t_0}-L_{t_0}) + 34D \cdot S' \cdot  + 24\cdot{S'}^2.\tag*{\qedhere}
\end{align*}
\end{proof}
\section{Discussion}\label{sec:discussions}

Building upon \cite{BelazzouguiZ16}, we present an improved sketching algorithm for edit distance with sketch size $\tilde O(k^3)$. Although the algorithm itself is essentially the same as in \cite{BelazzouguiZ16}, the analysis is more involved. We conclude the paper with a few remarks on further problems.
\begin{itemize}
\item \textbf{Lower bounds.} We conjecture the lower bound for this problem (i.e., $\Qscr_{n,k,\delta}$) is $\tilde\Omega(k^2)$, since $\Theta(k^2)$ is the distortion of the CGK random walk embedding \cite{ChakrabortyGK16}. However, to the best of our knowledge, there is no lower bound beyond $\tilde\Omega(k)$. (Since we do not find any paper formally stating the lower bounds, we present them in \Cref{app:lower_bounds}.)
\item \textbf{Edit distance.} It is natural to wonder if current framework can be pushed further. For example, is it possible that we only run $\tau=O(1)$ rounds of CGK random walks and there will be an optimal matching going through all edges that are common to these walks? Unfortunately this is not true, and we can show $\tau=\Omega(\sqrt k)$ with the following example:  
\begin{align*}
x=Ac_1c_2\cdots c_{k-1} Bc_1c_2\cdots c_{k-1} \underbrace{d\cdots d}_{2k} Ac_1c_2\cdots c_{k-1},\\
y=Bc_1c_2\cdots c_{k-1} \underbrace{d\cdots d}_{2k} Ac_1c_2\cdots c_{k-1} Bc_1c_2\cdots c_{k-1}.
\end{align*}
Then with probability $1-\Theta(1/\sqrt k)$, a CGK random walk walks through $(k,k)$. Note that $\ed(x,y)\le 2\cdot k$ by deleting $x[1..k]$ and inserting $y[4k+1..5k]$. However any edit sequence leaving $(k,k)$ matched will have at least $(2\cdot k+1)$ edits, where the one more edit comes from substituting $x[1]$ with $y[1]$. 
Moreover, this example may generalize to the binary alphabet by replacing each symbol with a short random binary string.
\item \textbf{Ulam distance.} The Ulam distance is the edit distance on two permutations, i.e., $x\in[n]^n$ (resp., $y\in[n]^n$) and $x_i\neq x_j$ (resp., $y_i\neq y_j$) for distinct $i,j$. Our algorithm (as well as the algorithm in \cite{BelazzouguiZ16}) works for Ulam distance with an improved bound $\tilde O(k^{2.5})$. This comes from the following observation: there is no matched edge in the stable zone, hence the length of stable zone is at most $k$, which means we can set $\rho=O(\sqrt L)$ in \Cref{prop:bounds_on_rho}. It would be interesting to improve the algorithm for Ulam distance.
\item \textbf{Only the distance.} Though our algorithm computes edit distance as well as an optimal edit sequence, it is reasonable to relax the problem by simply asking for the distance or even a constant approximation of the distance. However, we are not aware of any result achieving better sketch size in this setting.
\end{itemize}

\section*{Acknowledgements}
We thank Qin Zhang for answering several questions about \cite{BelazzouguiZ16}. C.\ J.\ thanks Virginia Vassilevska Williams for several helpful discussions. We thank anonymous reviewers for their helpful comments.

\bibliographystyle{alphaurl} 
\bibliography{main}

\newcommand{\etalchar}[1]{$^{#1}$}
\begin{thebibliography}{CDG{\etalchar{+}}18}

\bibitem[AK12]{AndoniK12}
Alexandr Andoni and Robert Krauthgamer.
\newblock The smoothed complexity of edit distance.
\newblock {\em {ACM} Trans. Algorithms}, 8(4):44:1--44:25, 2012.
\newblock \href {https://doi.org/10.1145/2344422.2344434}
  {\path{doi:10.1145/2344422.2344434}}.

\bibitem[AKO10]{AndoniKO10}
Alexandr Andoni, Robert Krauthgamer, and Krzysztof Onak.
\newblock Polylogarithmic approximation for edit distance and the asymmetric
  query complexity.
\newblock In {\em Proceedings of the 51st Annual {IEEE} Symposium on
  Foundations of Computer Science (FOCS)}, pages 377--386, 2010.
\newblock \href {https://doi.org/10.1109/FOCS.2010.43}
  {\path{doi:10.1109/FOCS.2010.43}}.

\bibitem[AN20]{AndoniN20}
Alexandr Andoni and Negev~Shekel Nosatzki.
\newblock Edit distance in near-linear time: it's a constant factor.
\newblock {\em CoRR}, abs/2005.07678, 2020.
\newblock To appear in FOCS 2020.
\newblock URL: \url{https://arxiv.org/abs/2005.07678}, \href
  {http://arxiv.org/abs/2005.07678} {\path{arXiv:2005.07678}}.

\bibitem[AO12]{AndoniO12}
Alexandr Andoni and Krzysztof Onak.
\newblock Approximating edit distance in near-linear time.
\newblock {\em {SIAM} J. Comput.}, 41(6):1635--1648, 2012.
\newblock \href {https://doi.org/10.1137/090767182}
  {\path{doi:10.1137/090767182}}.

\bibitem[BCR20]{BrakensiekCR20}
Joshua Brakensiek, Moses Charikar, and Aviad Rubinstein.
\newblock A simple sublinear algorithm for gap edit distance.
\newblock {\em CoRR}, abs/2007.14368, 2020.
\newblock URL: \url{https://arxiv.org/abs/2007.14368}, \href
  {http://arxiv.org/abs/2007.14368} {\path{arXiv:2007.14368}}.

\bibitem[BEG{\etalchar{+}}18]{BoroujeniEGHS18}
Mahdi Boroujeni, Soheil Ehsani, Mohammad Ghodsi, Mohammad~Taghi Hajiaghayi, and
  Saeed Seddighin.
\newblock Approximating edit distance in truly subquadratic time: Quantum and
  {MapReduce}.
\newblock In {\em Proceedings of the 29th Annual {ACM-SIAM} Symposium on
  Discrete Algorithms (SODA)}, pages 1170--1189, 2018.
\newblock \href {https://doi.org/10.1137/1.9781611975031.76}
  {\path{doi:10.1137/1.9781611975031.76}}.

\bibitem[BI18]{BackursI18}
Arturs Backurs and Piotr Indyk.
\newblock Edit distance cannot be computed in strongly subquadratic time
  (unless {SETH} is false).
\newblock {\em {SIAM} J. Comput.}, 47(3):1087--1097, 2018.
\newblock \href {https://doi.org/10.1137/15M1053128}
  {\path{doi:10.1137/15M1053128}}.

\bibitem[BJKK04]{Bar-YossefJKK04}
Ziv Bar{-}Yossef, T.~S. Jayram, Robert Krauthgamer, and Ravi Kumar.
\newblock Approximating edit distance efficiently.
\newblock In {\em Proceedings of the 45th Annual {IEEE} Symposium on
  Foundations of Computer Science (FOCS)}, pages 550--559, 2004.
\newblock \href {https://doi.org/10.1109/FOCS.2004.14}
  {\path{doi:10.1109/FOCS.2004.14}}.

\bibitem[BR20]{BrakensiekR20}
Joshua Brakensiek and Aviad Rubinstein.
\newblock Constant-factor approximation of near-linear edit distance in
  near-linear time.
\newblock In {\em Proccedings of the 52nd Annual {ACM} {SIGACT} Symposium on
  Theory of Computing (STOC)}, pages 685--698, 2020.
\newblock \href {https://doi.org/10.1145/3357713.3384282}
  {\path{doi:10.1145/3357713.3384282}}.

\bibitem[BSS20]{BoroujeniSS20}
Mahdi Boroujeni, Masoud Seddighin, and Saeed Seddighin.
\newblock Improved algorithms for edit distance and {LCS:} beyond worst case.
\newblock In {\em Proceedings of the 31st {ACM-SIAM} Symposium on Discrete
  Algorithms, {(SODA)}}, pages 1601--1620. {SIAM}, 2020.
\newblock \href {https://doi.org/10.1137/1.9781611975994.99}
  {\path{doi:10.1137/1.9781611975994.99}}.

\bibitem[BZ16]{BelazzouguiZ16}
Djamal Belazzougui and Qin Zhang.
\newblock Edit distance: Sketching, streaming, and document exchange.
\newblock In {\em Proceedings of the 57th {IEEE} Annual Symposium on
  Foundations of Computer Science (FOCS)}, pages 51--60. {IEEE} Computer
  Society, 2016.
\newblock \href {https://doi.org/10.1109/FOCS.2016.15}
  {\path{doi:10.1109/FOCS.2016.15}}.

\bibitem[CDG{\etalchar{+}}18]{ChakrabortyDGKS18}
Diptarka Chakraborty, Debarati Das, Elazar Goldenberg, Michal Kouck{\'{y}}, and
  Michael~E. Saks.
\newblock Approximating edit distance within constant factor in truly
  sub-quadratic time.
\newblock In {\em Proceedings of the 59th {IEEE} Annual Symposium on
  Foundations of Computer Science ({FOCS})}, pages 979--990. {IEEE} Computer
  Society, 2018.
\newblock \href {https://doi.org/10.1109/FOCS.2018.00096}
  {\path{doi:10.1109/FOCS.2018.00096}}.

\bibitem[CGK16]{ChakrabortyGK16}
Diptarka Chakraborty, Elazar Goldenberg, and Michal Kouck{\'{y}}.
\newblock Streaming algorithms for embedding and computing edit distance in the
  low distance regime.
\newblock In {\em Proceedings of the 48th Annual {ACM} {SIGACT} Symposium on
  Theory of Computing (STOC)}, pages 712--725. {ACM}, 2016.
\newblock \href {https://doi.org/10.1145/2897518.2897577}
  {\path{doi:10.1145/2897518.2897577}}.

\bibitem[CJLW18]{ChengJLW18}
Kuan Cheng, Zhengzhong Jin, Xin Li, and Ke~Wu.
\newblock Deterministic document exchange protocols, and almost optimal binary
  codes for edit errors.
\newblock In {\em Proceedings of the 59th {IEEE} Annual Symposium on
  Foundations of Computer Science (FOCS)}, pages 200--211, 2018.
\newblock \href {https://doi.org/10.1109/FOCS.2018.00028}
  {\path{doi:10.1109/FOCS.2018.00028}}.

\bibitem[CK06]{CharikarK06}
Moses Charikar and Robert Krauthgamer.
\newblock Embedding the ulam metric into $\ell_1$.
\newblock {\em Theory Comput.}, 2(11):207--224, 2006.
\newblock \href {https://doi.org/10.4086/toc.2006.v002a011}
  {\path{doi:10.4086/toc.2006.v002a011}}.

\bibitem[CL20]{ChengL20}
Kuan Cheng and Xin Li.
\newblock Efficient document exchange and error correcting codes with
  asymmetric information.
\newblock {\em CoRR}, abs/2007.00870, 2020.
\newblock To appear in SODA 2021.
\newblock URL: \url{https://arxiv.org/abs/2007.00870}, \href
  {http://arxiv.org/abs/2007.00870} {\path{arXiv:2007.00870}}.

\bibitem[GKS19]{GoldenbergKS19}
Elazar Goldenberg, Robert Krauthgamer, and Barna Saha.
\newblock Sublinear algorithms for gap edit distance.
\newblock In {\em Proceedings of the 60th {IEEE} Annual Symposium on
  Foundations of Computer Science (FOCS)}, pages 1101--1120, 2019.
\newblock \href {https://doi.org/10.1109/FOCS.2019.00070}
  {\path{doi:10.1109/FOCS.2019.00070}}.

\bibitem[GS01]{GrimmettS20}
Geoffrey~R. Grimmett and David~R. Stirzaker.
\newblock {\em Probability and random processes}.
\newblock Oxford University Press, third edition, 2001.

\bibitem[Hae19]{Haeupler19}
Bernhard Haeupler.
\newblock Optimal document exchange and new codes for insertions and deletions.
\newblock In {\em Proceedings of the 60th {IEEE} Annual Symposium on
  Foundations of Computer Science (FOCS)}, pages 334--347, 2019.
\newblock \href {https://doi.org/10.1109/FOCS.2019.00029}
  {\path{doi:10.1109/FOCS.2019.00029}}.

\bibitem[HSZZ06]{HuangSZZ06}
Wei Huang, Yaoyun Shi, Shengyu Zhang, and Yufan Zhu.
\newblock The communication complexity of the hamming distance problem.
\newblock {\em Inf. Process. Lett.}, 99(4):149--153, 2006.
\newblock \href {https://doi.org/10.1016/j.ipl.2006.01.014}
  {\path{doi:10.1016/j.ipl.2006.01.014}}.

\bibitem[KN06]{KhotN06}
Subhash Khot and Assaf Naor.
\newblock Nonembeddability theorems via {Fourier} analysis.
\newblock {\em Mathematische Annalen}, 334:821--852, 2006.

\bibitem[KNP{\etalchar{+}}17]{KapralovNPWWY17}
Michael Kapralov, Jelani Nelson, Jakub Pachocki, Zhengyu Wang, David~P.
  Woodruff, and Mobin Yahyazadeh.
\newblock Optimal lower bounds for universal relation, and for samplers and
  finding duplicates in streams.
\newblock In {\em Proceedings of the 58th {IEEE} Annual Symposium on
  Foundations of Computer Science (FOCS)}, pages 475--486. {IEEE} Computer
  Society, 2017.
\newblock Full version at \url{https://arxiv.org/abs/1704.00633}.
\newblock \href {https://doi.org/10.1109/FOCS.2017.50}
  {\path{doi:10.1109/FOCS.2017.50}}.

\bibitem[KR87]{KarpR87}
Richard~M. Karp and Michael~O. Rabin.
\newblock Efficient randomized pattern-matching algorithms.
\newblock {\em {IBM} J. Res. Dev.}, 31(2):249--260, 1987.
\newblock \href {https://doi.org/10.1147/rd.312.0249}
  {\path{doi:10.1147/rd.312.0249}}.

\bibitem[KR09]{KrauthgamerR09}
Robert Krauthgamer and Yuval Rabani.
\newblock Improved lower bounds for embeddings into $l_1$.
\newblock {\em {SIAM} J. Comput.}, 38(6):2487--2498, 2009.
\newblock \href {https://doi.org/10.1137/060660126}
  {\path{doi:10.1137/060660126}}.

\bibitem[KS20a]{KociumakaS20}
Tomasz Kociumaka and Barna Saha.
\newblock Sublinear-time algorithms for computing {\&} embedding gap edit
  distance.
\newblock {\em CoRR}, abs/2007.12762, 2020.
\newblock To appear in FOCS 2020.
\newblock URL: \url{https://arxiv.org/abs/2007.12762}, \href
  {http://arxiv.org/abs/2007.12762} {\path{arXiv:2007.12762}}.

\bibitem[KS20b]{KouckyS20}
Michal Kouck{\'{y}} and Michael~E. Saks.
\newblock Constant factor approximations to edit distance on far input pairs in
  nearly linear time.
\newblock In {\em Proccedings of the 52nd Annual {ACM} {SIGACT} Symposium on
  Theory of Computing {(STOC)}}, pages 699--712. {ACM}, 2020.
\newblock \href {https://doi.org/10.1145/3357713.3384307}
  {\path{doi:10.1145/3357713.3384307}}.

\bibitem[LPW09]{levinmarkov}
David~A Levin, Yuval Peres, and Elizabeth~L. Wilmer.
\newblock {\em Markov Chains and Mixing Times}.
\newblock American Mathematical Society, first edition, 2009.

\bibitem[MP80]{MasekP80}
William~J. Masek and Mike Paterson.
\newblock A faster algorithm computing string edit distances.
\newblock {\em J. Comput. Syst. Sci.}, 20(1):18--31, 1980.
\newblock \href {https://doi.org/10.1016/0022-0000(80)90002-1}
  {\path{doi:10.1016/0022-0000(80)90002-1}}.

\bibitem[Nis92]{Nisan92}
Noam Nisan.
\newblock Pseudorandom generators for space-bounded computation.
\newblock {\em Comb.}, 12(4):449--461, 1992.
\newblock \href {https://doi.org/10.1007/BF01305237}
  {\path{doi:10.1007/BF01305237}}.

\bibitem[OR07]{OstrovskyR07}
Rafail Ostrovsky and Yuval Rabani.
\newblock Low distortion embeddings for edit distance.
\newblock {\em J. {ACM}}, 54(5):23, 2007.
\newblock \href {https://doi.org/10.1145/1284320.1284322}
  {\path{doi:10.1145/1284320.1284322}}.

\bibitem[Orl91]{Orlitsky91}
Alon Orlitsky.
\newblock Interactive communication: Balanced distributions, correlated files,
  and average-case complexity.
\newblock In {\em Proceedings of the 32nd Annual Symposium on Foundations of
  Computer Science (FOCS)}, pages 228--238, 1991.
\newblock \href {https://doi.org/10.1109/SFCS.1991.185373}
  {\path{doi:10.1109/SFCS.1991.185373}}.

\bibitem[PL07]{PoratL07}
Ely Porat and Ohad Lipsky.
\newblock Improved sketching of hamming distance with error correcting.
\newblock In Bin Ma and Kaizhong Zhang, editors, {\em Combinatorial Pattern
  Matching, 18th Annual Symposium, {CPM} 2007, London, Canada, July 9-11, 2007,
  Proceedings}, volume 4580 of {\em Lecture Notes in Computer Science}, pages
  173--182. Springer, 2007.
\newblock \href {https://doi.org/10.1007/978-3-540-73437-6\_19}
  {\path{doi:10.1007/978-3-540-73437-6\_19}}.

\end{thebibliography}

\appendix
\section{Simpler Analysis of CGK}
\label{app:simpler_analysis_of_CGK}

We first restate Item (3) in \Cref{thm:CGK} here and set the number of random walk steps to infinity.

\begin{theorem}[{\cite[Theorem 4.1]{ChakrabortyGK16}}]\label{thm:CGK_app}
Let $\lambda$ be an $\infty$-step random walk on $x,y$, where $p,q$ are the pointers on $x,y$. Then 
$$
\Pr\sbra{\#\text{progress steps in }\lambda\ge\pbra{T\cdot\ed(x,y)}^2}\le O\pbra{\frac1T}.
$$
\end{theorem}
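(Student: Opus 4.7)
My plan is to prove $\E[N] = O(k^2)$, where $N$ is the total number of progress steps, and then invoke Markov's inequality to obtain
$$\Pr\sbra{N \ge (Tk)^2} \le \frac{\E[N]}{(Tk)^2} = O\pbra{1/T^2} \le O(1/T),$$
matching the stated bound (in fact with some slack). The central identity comes from optional stopping. Let $D_t := p_t - q_t$ and let $N_t$ count progress steps in $[1,t]$. During each progress step, $D$ receives an iid increment $\xi \in \cbra{-1, 0, +1}$ with probabilities $(1/4, 1/2, 1/4)$ (mean $0$, variance $1/2$); during non-progress steps $D$ is unchanged. Hence $D_t$ and $D_t^2 - N_t/2$ are both martingales. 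Let $T^* := \inf\cbra{t : p_t \ge n+1,\, q_t \ge n+1}$; once both pointers pass $n$ we have $x[p] = y[q] = 0$ (padded zeros), so no further progress steps occur and $N = N_{T^*}$. Applying optional stopping (with truncation at $T^* \wedge m$ and letting $m \to \infty$, using $T^* < \infty$ a.s.) yields
$$\E\sbra{D_{T^*}^2} = \E[N]/2,$$
reducing the problem to showing $\E\sbra{D_{T^*}^2} = O(k^2)$.

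For the second moment, I use a matched-edge decomposition combined with \Cref{lem:CGK_random_walk}. Fix the greedy optimal matching $\Mcal = \cbra{(i_\tau, j_\tau)}_{\tau=1}^m$, extend with $(i_0, j_0) = (0, 0)$, and set $T_\tau$ to be the first time $p \ge i_\tau$ (with $T_0 = 0$ and $T_{m+1} := T^*$). The increments $\Delta_\tau := D_{T_\tau} - D_{T_{\tau-1}}$ are mean-zero and orthogonal (by optional stopping applied to the martingale $D$), so $\E\sbra{D_{T^*}^2} = \sum_{\tau=1}^{m+1} \E\sbra{\Delta_\tau^2}$. By the strong Markov property, the sub-walk in phase $\tau$ is a fresh CGK walk on suffix strings $(x[i_{\tau-1}..], y[q_{T_{\tau-1}}..])$, to which an $L^2$ version of \Cref{lem:CGK_random_walk} Item~(1) yields $\E\sbra{\Delta_\tau^2} \le O(s_\tau^2)$, where $s_\tau := (i_\tau - i_{\tau-1}) + (j_\tau - j_{\tau-1}) - 2$ is the number of singletons in phase $\tau$. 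Since $\sum_\tau s_\tau \le 2k$ and $\sum_\tau s_\tau^2 \le \pbra{\sum_\tau s_\tau}^2 \le 4k^2$, we conclude $\E\sbra{D_{T^*}^2} = O(k^2)$.

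The main obstacle is the per-phase $L^2$-strengthening of \Cref{lem:CGK_random_walk} Item~(1), since the stated $L^1$ bound does not imply an $L^2$ bound ($\E[X^2] \ge (\E\abs{X})^2$ goes the wrong way). I plan to obtain it by re-running the optional-stopping proof of Item~(1) on the second-moment martingale $D_t^2 - N_t/2$ restricted to phase $\tau$; this yields $\E\sbra{\Delta_\tau^2}$ in terms of the local ``edit distance squared'' of the sub-walk, which is controlled by $s_\tau^2$. A secondary technicality is that $q_{T_{\tau-1}}$ is random, so the local sub-strings in each phase are random; however, this dependence contributes at most a lower-order term that can be absorbed into the $O(s_\tau^2)$ bound using the $L^1$ estimate $\sum_\tau \E\sbra{\abs{D_{T_{\tau-1}} - (i_{\tau-1} - j_{\tau-1})}} = O(k)$. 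With both technicalities resolved, the decomposition gives $\E[N] = O(k^2)$ and Markov's inequality concludes.
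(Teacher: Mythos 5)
Your headline claim $\E[N]=O(k^2)$ is false, and the failure is structural rather than a matter of tightening constants. Here is a concrete family of counterexamples. Fix $L = L(n)\to\infty$ with $L \le n^{1/4}$, let $x\in\bin^n$ have $x[i]=1$ iff $(L+1)\mid i$ (a single $1$ per period of length $L+1$), and let $y=0\circ x[1..n-1]$ be $x$ shifted right by one. Both strings have length $n$, so $\ed(x,y)\ge 2$; two edits (insert a leading $0$, delete the last character) suffice, hence $k=\ed(x,y)=2$. The CGK walk stays matched from $(1,1)$ to $(L+1,L+1)$, where $x[L+1]=1\ne 0=y[L+1]$ is the first progress step. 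From then on $D=p-q$ performs a lazy random walk at every progress step, and progress steps cease precisely when $D\equiv -1\pmod{L+1}$, since then the period-$(L+1)$ structures of $x$ and $y$ realign. Starting at $D=0$ with nearest aligned levels $-1$ and $L$, \Cref{thm:steps_in_random_walk} gives expected absorption time $2\cdot 1\cdot L=\Theta(L)$ lazy-walk transitions (and with $L\le n^{1/4}$ the walk realigns long before either pointer passes $n$), hence $\E[N]=\Theta(L)$, which grows unboundedly while $k^2=4$. So $\E[N]$ is not $O(k^2)$, and first-moment Markov cannot yield the theorem. In fact this is exactly why the CGK tail is $O(1/T)$ and not $O(1/T^2)$: the bound you would derive from $\E[N]=O(k^2)$ is strictly stronger than the theorem, and in the example one can check $\Pr[N\ge (Tk)^2]=\Theta(1/T)$ for $1\ll T\ll L$, so $O(1/T)$ is tight.

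The step in your outline that breaks is the per-phase $L^2$ bound $\E[\Delta_\tau^2]\le O(s_\tau^2)$. In the example $s_\tau=0$ for every phase except two (the single shift near position $L+1$, and the final phase), so $\sum_\tau s_\tau^2=O(1)$; yet your own martingale identity forces $\sum_\tau\E[\Delta_\tau^2]=\E[D_{T^*}^2]=\E[N]/2=\Theta(L)$, so the per-phase inequality is off by a factor $\Theta(L)$. The culprit is exactly what you dismiss as a secondary technicality: the edit distance of the sub-strings seen at the start of phase $\tau$ is not $O(s_\tau)$ but $O(s_\tau)+\abs{D_{T_{\tau-1}}-(i_{\tau-1}-j_{\tau-1})}$, and the accumulated-misalignment term dominates once the walk has drifted. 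An $L^1$ control of that term cannot rescue a squared quantity (and the asserted $L^1$ estimate is itself doubtful in this example). This is the crux, and it is fatal: the passage from the $L^1$ bound of \Cref{lem:CGK_random_walk} to an $L^2$ bound is precisely what the heavy-tailed CGK walk forbids.

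The paper's proof in Appendix~A deliberately stays at the level of tail probabilities and never touches a second moment. It reduces the two-sided walk to two one-sided walks against $z=\LCS(x,y)$, bounds each by the first hitting time of level $\ed$ for a lazy one-dimensional walk, and then invokes \Cref{thm:random_walk_app}, whose corresponding first moment is infinite and thus correctly reflects the heavy tail. To recover the theorem you must argue directly about the tail or the hitting time, not the mean.
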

\begin{proof}
Let $z$ be the longest common subsequence of $x,y$. Hence $x$ (and $y$) can be obtained from $z$ by at most $\ed(x,y)$ insertions. We perform a CGK random walk on $z$ with pointer $w$ using the same randomness. Let $\lambda_{x,z}$ (resp., $\lambda_{y,z}$) be view of $\lambda$ on $x,z$ (resp., $y,z$).
Since $\lambda_{x,z}$ and $\lambda_{y,z}$ are projections of $\lambda$, by triangle inequality it suffices to prove
\begin{gather}
\Pr\sbra{\#\text{progress steps in }\lambda_{x,z}\ge\pbra{T\cdot\ed(x,y)}^2}\le O\pbra{\frac1T},\label{eq:CGK_app_1}\\
\Pr\sbra{\#\text{progress steps in }\lambda_{y,z}\ge\pbra{T\cdot\ed(x,y)}^2}\le O\pbra{\frac1T}.\label{eq:CGK_app_2}
\end{gather}

Now we prove \Cref{eq:CGK_app_1}, and the proof of \Cref{eq:CGK_app_2} is analogous. We will find the following classical result useful.

\begin{fact}[e.g.\ {\cite[Theorem 2.17]{levinmarkov}}]\label{thm:random_walk_app}
Let $k$ be some non-negative integer.
Let $\pi$ be a one-dimensional unbiased and self-looped random walk (See \Cref{def:1D_random_walk}) starting from $0$. 
Let $T_0$ be the first time $\pi$ hits $k$. Then
$$
\Pr\sbra{T_0>\pbra{M\cdot k}^2}\le O\pbra{\frac1M}.
$$
\end{fact}

Let $k_{x,z}$ be the number of insertions required to get $x$ from $z$.
Observe that if $p$ is between the $i$-th insertion and the $(i+1)$-th insertion from $z$, we have $p-w\le i$; and when the equality holds the random walk will not have progress steps before $p$ arrives at the $(i+1)$-th insertion. 
Hence, we can safely truncate $\lambda_{x,z}$ at time $t$ for which we have $p_t=w_t+k_{x,z}$. Though $t$ is not necessarily bounded since no progress step occurs outside $x,z$, we can conceptually keep adding progress steps after $p,w$ are outside $x,z$ until $p_t=w_t+k_{x,z}$, which will only increase the count.
Considering the correspondence between progress steps and transitions in a one-dimensional unbiased and self-looped random walk (see \Cref{rmk:progress_step_and_random_walk}), by \Cref{thm:random_walk_app} we have
\begin{align*}
&\Pr\sbra{\#\text{progress steps in }\lambda_{x,z}\ge\pbra{T\cdot\ed(x,y)}^2}\\
\le&\Pr\sbra{\#\text{progress steps in }\lambda_{x,z}\ge\pbra{T\cdot k_{x,z}}^2}\\
\le&\Pr\sbra{\#\text{transitions before }p_t-w_t=k_{x,z}\text{ is at least}\pbra{T\cdot k_{x,z}}^2}\\
\le&~O\pbra{\frac1T}.
\tag*{\qedhere}
\end{align*}
\end{proof}

\section{Lower Bounds}
\label{app:lower_bounds}

In this section we will prove the following lower bounds. Though they are just simple counting arguments and reductions, yet we can't find any paper explicitly stating them. Hence we decide to include the proof here.

\begin{theorem}\label{thm:lower_bound_1}
The sketch size is $\Omega(k\log(n|\Sigma|/k))$ bits if we want to compute edit distance and an optimal edit sequence with probability at least $2/3$.
\end{theorem}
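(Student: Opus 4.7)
The plan is a direct counting/encoding argument that fixes Alice's input to a canonical string and shows that Bob's sketch alone must uniquely identify his input on a large set. First, I would fix $x = 0^n$ as Alice's input and let $\mathcal{Y} \subseteq \Sigma^n$ be the set of strings obtained from $x$ by substituting exactly $k$ coordinates, each with a nonzero symbol from $\Sigma$. Then every $y \in \mathcal{Y}$ satisfies $\ed(x,y) = k$, and $|\mathcal{Y}| = \binom{n}{k}(|\Sigma|-1)^k$. Using $\binom{n}{k} \ge (n/k)^k$ (valid for $k \le n/2$; boundary cases are trivial) and $(|\Sigma|-1)^k \ge (|\Sigma|/2)^k$ for $|\Sigma|\ge 2$, we get $\log_2|\mathcal{Y}| = \Omega(k\log(n|\Sigma|/k))$.

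The crux is that, when $x$ is fixed, the referee's required output uniquely identifies $y$. Indeed, any optimal edit sequence from $x$ to $y$, applied to $x$, reproduces $y$; hence the map from $(\text{edit sequence output}) \mapsto y$ is well-defined given $x=0^n$. Let $f(x,r)$ and $g(y,r)$ denote Alice's and Bob's encoders and $h(sx,sy,r)$ the referee's decoder, where $r$ is the shared randomness. For every $y \in \mathcal{Y}$, correctness requires $\Pr_r[h(f(x,r),g(y,r),r)\text{ is correct}] \ge 2/3$. Averaging over $\mathcal{Y}$ and swapping expectations, there exists a fixed $r^*$ such that the protocol succeeds for at least a $2/3$ fraction of $y \in \mathcal{Y}$; let $\mathcal{Y}^* \subseteq \mathcal{Y}$ denote this ``good'' subset, so $|\mathcal{Y}^*| \ge \tfrac{2}{3}|\mathcal{Y}|$.

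Now fix $r^*$ and the resulting Alice sketch $sx^* = f(0^n,r^*)$. For any two distinct $y,y' \in \mathcal{Y}^*$, correctness forces $h(sx^*, g(y,r^*), r^*)$ and $h(sx^*, g(y',r^*), r^*)$ to be distinct outputs (since each uniquely recovers $y$ and $y'$ respectively as described above). Consequently $g(y,r^*) \neq g(y', r^*)$, meaning $g(\cdot,r^*)$ is injective on $\mathcal{Y}^*$. Therefore Bob's sketch must take at least $|\mathcal{Y}^*| \ge \tfrac{2}{3}|\mathcal{Y}|$ distinct values, giving a sketch length of at least $\log_2(\tfrac{2}{3}|\mathcal{Y}|) = \Omega(k\log(n|\Sigma|/k))$ bits. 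Since the ``size'' in \Cref{def:problem} is the maximum of $|sx|$ and $|sy|$, this yields the claimed bound.

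The argument has no real technical obstacle: the only subtlety worth double-checking is that the referee's output deterministically identifies $y$ from the fixed $x$, which holds because applying \emph{any} optimal edit sequence to $x$ produces $y$. A symmetric argument applied to Alice (fixing $y = 0^n$ instead) yields the same bound on $|sx|$, so the bound applies uniformly to both parties' sketches.
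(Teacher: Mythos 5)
Your proof is correct and reaches the paper's bound, packaged a little differently. The paper's argument counts triples $(r,x,y)$ over \emph{all} inputs $x,y\in\Sigma^n$ with $\ed(x,y)\le k$ and all random strings $r\in\bin^R$: it lower-bounds the number of successful triples by $\tfrac23\cdot 2^R\cdot|\Sigma|^n\cdot\binom{n}{k}(|\Sigma|-1)^k$ and upper-bounds it by $2^R\cdot|\Sigma|^n\cdot 2^{|\sketchy|}$, using the observation that a successful triple is reconstructible from $(r,x,\sketchy)$; it then rearranges, never fixing $x$ or $r$. You instead fix $x=0^n$, restrict to the $\binom{n}{k}(|\Sigma|-1)^k$ strings at Hamming distance $k$, average over them to find a single deterministic $r^*$ on which a $2/3$-fraction succeed, and then observe that Bob's encoder $g(\cdot,r^*)$ must be injective on that good subset because the referee's output edit sequence, applied to the fixed $x$, uniquely recovers $y$. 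The two arguments hinge on the same facts (the recovered edit sequence plus $x$ pins down $y$; the family of $y$'s is large), so they are essentially equivalent; your averaging-to-fix-randomness step is the more conventional phrasing in communication complexity, while the paper absorbs the averaging directly into the triple count. One minor note: you assert $\ed(0^n,y)=k$ for $y$ with exactly $k$ nonzero symbols --- this is true (the LCS of $0^n$ and $y$ has length at most $n-k$, giving $\ed\ge k$ via \Cref{prop:edit_dist}-style reasoning), but the argument only needs $\ed(x,y)\le k$ so that the promise holds, together with the injectivity observation.
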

\begin{proof}
Let $\sketchy$ be the sketch of $y$ and $R$ be the maximum number of random bits used.
Let $S$ be the set of triple $(r,x,y)$ where $r\in\bin^R$ is the randomness, $x,y$ are the inputs of length $n$ with $\ed(x,y)\le k$, and we succeed in computing an optimal edit sequence. 
By the assumption, for any fixed $x,y$ there are at least $2/3$ fraction of $r$ such that $(r,x,y)\in S$. Hence 
$$
|S|\ge\underbrace{\frac23\cdot2^R}_\text{number of $r$}\cdot\underbrace{|\Sigma|^n}_\text{number of $x$}\cdot\underbrace{\binom nk\cdot\pbra{|\Sigma|-1}^k}_\text{number of $y$}.
$$

On the other hand, when $(r,x,y)\in S$, we can recover $y$ using $x$ and the edit sequence. Hence we have an injection from $S$ to $\bin^R\times\Sigma^n\times\bin^{|\sketchy|}$, which means
$$
|S|\le2^R\cdot|\Sigma|^n\cdot2^{|\sketchy|}.
$$
By rearranging terms, we have $|\sketchy|=\Omega(k\log(n|\Sigma|/k))$.
\end{proof}

\begin{theorem}\label{thm:lower_bound_2}
Assume $k\le O(\sqrt n),|\Sigma|\ge2$ or $k\le O(n),|\Sigma|\ge 2n$.
The sketch size is $\Omega(k)$ bits if we only want to compute edit distance with probability at least $2/3$.
\end{theorem}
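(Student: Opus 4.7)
The plan is to derive this $\Omega(k)$ lower bound by reducing from exact Hamming distance computation on $k$-bit binary strings in the public-coin simultaneous-message-passing (SMP) model, which I first argue has Alice-message complexity $\Omega(k)$. The starting point is the classical \emph{indexing} problem --- Alice holds $u \in \{0,1\}^k$, Bob holds $i \in [k]$, and the referee outputs $u_i$ --- which has public-coin one-way complexity $\Omega(k)$. The same bound transfers to SMP because any SMP protocol induces a one-way protocol of the same Alice-message size (Bob locally computes his own SMP message and then simulates the referee). To lift this to exact Hamming distance, I use the identity $\Ham(u,e_i) = 1 + \mathrm{wt}(u) - 2u_i$, where $e_i \in \{0,1\}^k$ is the $i$-th standard basis vector: given any SMP protocol for exact Hamming distance, Alice additionally transmits $\mathrm{wt}(u)$ using $O(\log k)$ bits and Bob plugs in $y := e_i$; the referee then recovers $u_i = (1 + \mathrm{wt}(u) - \Ham(u,e_i))/2$. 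Consequently Alice's Hamming-distance message must be at least $\Omega(k) - O(\log k) = \Omega(k)$ bits.

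The next step is to embed a Hamming instance $(u,v) \in \{0,1\}^k \times \{0,1\}^k$ into an edit-distance instance $(x,y) \in \Sigma^n \times \Sigma^n$ with $\ed(x,y) = \Ham(u,v)$, so that the promise $\ed \le k$ holds automatically. For the large-alphabet regime $|\Sigma|\ge 2n$, identify $\Sigma \supseteq [n]\times\{0,1\}$ and set $\phi(u) := (1,u_1)(2,u_2)\cdots(k,u_k)(k{+}1,0)(k{+}2,0)\cdots(n,0)$. Each ``site tag'' $(j,\cdot)$ appears in exactly one position of $\phi(u)$ and one position of $\phi(v)$, so any matched pair in the LCS must align position $j$ with position $j$, and on positions $[1..k]$ such a match requires $u_j = v_j$. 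A quick symbol-counting argument --- each edit changes the multiset of $\Sigma$-symbols by at most $2$ in $\ell_1$, while the multisets of $\phi(u),\phi(v)$ differ by $2\Ham(u,v)$ --- yields $\ed(\phi(u),\phi(v)) \ge \Ham(u,v)$, and the reverse inequality is immediate by position-wise substitution. For the binary regime with $k \le O(\sqrt n)$, set $L := \lfloor n/k\rfloor$, which satisfies $L > k$, and define $\phi(u) := u_1 0^{L-1} u_2 0^{L-1}\cdots u_k 0^{L-1}$ padded to length $n$ by trailing zeros. The nontrivial $1$-bits sit at positions $(i-1)L+1$, pairwise separated by $L$. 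The key claim is again $\ed(\phi(u),\phi(v)) = \Ham(u,v)$: since $\ed \le \Ham \le k < L$ by the easy direction, \Cref{lem:optimal_matching}(1) tells us any matched pair $(p,q)$ in the greedy optimal matching has $|p-q| \le k < L$, so a $1$ at position $(i-1)L+1$ of $\phi(u)$ can only be matched to the identical position of $\phi(v)$; whenever $u_i \ne v_i$ the corresponding $1$-bit is forced to be substituted, deleted, or inserted, contributing a distinct edit per $i$, and so $\ed \ge \Ham$.

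Combining the two steps, any public-coin SMP sketching protocol for $\Qscr_{n,k,\delta}$ with $\delta \le 1/3$ that outputs $\ed(x,y)$ yields, at essentially zero overhead, an SMP protocol of the same Alice-message size computing $\Ham(u,v)$; by the first paragraph this forces a sketch size of $\Omega(k)$ bits.

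The main delicate step is the lower bound $\ed(\phi(u),\phi(v)) \ge \Ham(u,v)$ for the binary gadget: one must rule out ``creative'' alignments that shift $1$-bits across frames. The cleanest route is the displacement bound of \Cref{lem:optimal_matching}(1), which caps each matched pair's index displacement at $\ed < L$ and hence locks each surviving $1$-bit into its own frame; the edits attributed to distinct mismatched indices $i$ then act on disjoint positions, so they cannot be shared. A minor bookkeeping point is that the auxiliary $\mathrm{wt}(u)$ message in the indexing-to-Hamming reduction costs $O(\log k)$ bits, which is harmlessly absorbed into the $\Omega(k)$ bound whenever $k = \omega(\log k)$.
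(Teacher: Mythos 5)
The proposal is correct and reaches the same $\Omega(k)$ bound, but it takes a genuinely different route from the paper. The paper reduces from the gap Hamming distance lower bound of \cite{HuangSZZ06}, which gives an $\Omega(d)$ two-party communication lower bound directly, and then embeds Hamming into edit distance via block gadgets ($\underbrace{0..0}_N 1 \underbrace{0..0}_{2N-1}\underbrace{1..1}_{3N}$ versus $\underbrace{0..0}_{2N-1}1\underbrace{0..0}_N\underbrace{1..1}_{3N}$ in the binary case, and per-position tags $2i-X_i$ in the large-alphabet case). You instead reduce from indexing, a more elementary starting point, lifting it to exact Hamming via $\Ham(u,e_i)=1+\mathrm{wt}(u)-2u_i$ at an $O(\log k)$ overhead, and then use simpler gadgets: $(j,u_j)$ tags for large alphabet, and widely separated $1$-bits $u_i 0^{L-1}$ for binary. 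Both routes hinge on the same idea at the embedding stage --- keeping the per-index information isolated so that cross-index alignments are ruled out by the displacement bound of Lemma~\ref{lem:optimal_matching}(1) --- but your write-up is more careful about the hard direction $\ed(\phi(u),\phi(v))\ge\Ham(u,v)$, correctly observing that a single substitution cannot simultaneously absorb an unmatched $1$-bit of $\phi(u)$ and an unmatched $1$-bit of $\phi(v)$ (they would be equal, hence a match, not a substitution), so the edits attributed to distinct mismatched indices really are distinct; the paper's proof asserts the corresponding $\ed = 2\Ham(X,Y)$ identity without spelling this out. One trade-off: the gap-Hamming route bounds total communication, while your indexing route bounds only Alice's message, but this is immaterial for the $\max(|\sketchx|,|\sketchy|)$ sketch-size measure. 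Your approach is arguably simpler in both its dependencies and its gadget analysis.
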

\begin{proof}
We show two reductions from the following theorem.

\begin{theorem}[\cite{HuangSZZ06}]\label{thm:Hamming_lower_bound}
Let $1\le d\le 3N/8$ be a parameter and $K=2d/3$.
Assume Alice gets $X\in\bin^N$ and Bob gets $Y\in\bin^N$, and both $X$ and $Y$ have exactly $K$ ones.
Their goal is to distinguish with probability $2/3$ whether $\Ham(X,Y)\le d$, where $\Ham(\cdot,\cdot)$ is the Hamming distance. Then the number of communication bits is $\Omega(d)$.
\end{theorem}
Now we present the reduction.

\paragraph*{\fbox{Case $k\le O(\sqrt n),|\Sigma|\ge2$.}}

Construct $x$ from $X$ by replacing $0$ with $\underbrace{0..0}_n1\underbrace{0..0}_{2n-1}\underbrace{1..1}_{3n}$ and replacing $1$ with $\underbrace{0..0}_{2n-1}1\underbrace{0..0}_n\underbrace{1..1}_{3n}$. The construction for $y$ is the same. Let $n=6N^2, k=4K$.

Then $\ed(x,y)=2\cdot\Ham(X,Y)\le 4K$. Therefore we can tell if $\Ham(X,Y)\le d$ when we successfully compute $\ed(x,y)$.
Hence by \Cref{thm:Hamming_lower_bound}, the sketch size is $\Omega(d)=\Omega(k)$.
Note that $d\le O(N)$, we have $k\le O(\sqrt n)$. 

\paragraph*{\fbox{Case $k\le O(n),|\Sigma|\ge 2n$.}}

Construct $x$ from $X$ by replacing the $i$-th bit of $X$ with $2\cdot i-X_i\in[2N]$.
The construction for $y$ is the same. Let $n=N,k=2K$.

Then $\ed(x,y)=\Ham(X,Y)\le 2K$. Therefore we can tell if $\Ham(X,Y)\le d$ when we successfully computes $\ed(x,y)$.
Hence by \Cref{thm:Hamming_lower_bound}, the sketch size is $\Omega(d)=\Omega(k)$. Note that $d\le O(N)$, we have $k\le O(n)$.
\end{proof}

We conjecture the conditions in \Cref{thm:lower_bound_2} can be relaxed to $k\le O(n),|\Sigma|\ge 2$ with a better reduction.

\section[Proof of Lemma 2.5]{Proof of \Cref{lem:optimal_matching}}
\label{app:optimal_matching}

\begin{lemma*}[\Cref{lem:optimal_matching} restated]
Let $x,y$ be two strings.
Let $S$ be an optimal edit sequence and $\Mcal(S)$ be its corresponding optimal matching.
\begin{enumerate}[label=(\arabic*)]
\item If $(i,j)\in\Mcal(S)$, then $|i-j|\le\ed(x,y)$.
\item If $u'\le u,v'\le v$ and $u-u'+1=v-v'+1=:L$, then the number of matched edges in $x[u'..u],y[v'..v]$ is at least $L-3\cdot\ed(x,y)-|u-v|$, i.e.,
$$
\abs{\Mcal(S)\cap\big([u'..u]\times[v'..v]\big)}\ge L-3\cdot\ed(x,y)-|u-v|.
$$
\end{enumerate}
\end{lemma*}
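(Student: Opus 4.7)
The plan is to prove each item in turn. For Item (1), I would fix an optimal edit sequence $S$ of length $k := \ed(x,y)$ and track character positions before a matched edge $(i,j) \in \Mcal(S)$. Since the match pairs $x[i]$ with $y[j]$, the restriction of $S$ to the prefix $x[1..i-1]$ must transform it into $y[1..j-1]$; if this restricted sub-sequence uses $a$ insertions, $b$ deletions, and $c$ substitutions, counting lengths gives $(i-1)+a-b = j-1$, so $|i-j| = |a-b| \le a+b \le k$. This is essentially a one-line length-bookkeeping argument.

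For Item (2), I would first count the matched edges $(i,j) \in \Mcal(S)$ whose $x$-coordinate lies in $[u'..u]$. Since at most $k$ positions of $x$ are unmatched in $\Mcal(S)$ (each such position corresponds to a deletion or substitution in $S$), at most $k$ positions of $x[u'..u]$ are unmatched, so at least $L - k$ matched edges satisfy $i \in [u'..u]$. Then I would bound how many of these edges are \emph{bad}, i.e.\ have $j \notin [v'..v]$, using Item (1) together with the shift $\delta := u-v = u'-v'$.

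Explicitly, a bad edge with $j < v'$ forces $i \le j + k < v' + k$ by Item (1), so $i$ lies in a range of size at most $\max(0, v'+k-u') = \max(0, k-\delta)$; a bad edge with $j > v$ forces $i \ge j - k > v - k$, so $i$ lies in a range of size at most $\max(0, u-v+k) = \max(0, k+\delta)$. Summing these two estimates and treating separately the regimes $|\delta|\le k$ and $|\delta|>k$, the total count of bad edges is at most $2k + |\delta| = 2k + |u-v|$. Subtracting from $L-k$ yields the claimed lower bound $L - 3k - |u-v|$ on $|\Mcal(S) \cap ([u'..u]\times[v'..v])|$ (which is trivial when this bound is negative).

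The argument is elementary bookkeeping; there is no substantive obstacle. The only point that requires a moment of care is the uniform inequality $\max(0,k-\delta)+\max(0,k+\delta) \le 2k+|\delta|$, handled by the sign case-check just described.
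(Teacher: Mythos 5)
Your proof is correct, and it is essentially the same elementary bookkeeping argument as the paper's: both items rest on the fact that an optimal edit sequence has at most $k$ edits, each moving a matched pair's index gap by at most one, and Item~(2) applies Item~(1) to bound how far a matched edge can stray. The only cosmetic differences are that the paper proves Item~(1) by tracking the gap $i_t - j_t$ across the $k$ edits rather than counting insertions/deletions in the prefix, and for Item~(2) the paper fixes WLOG $u \ge v$ and counts the $y$-positions guaranteed to have in-window partners, whereas you count $x$-side matches and subtract the ``bad'' ones with a sign case-check on $\delta$; both routes give the identical bound.
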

\begin{proof}
Let $k:=\ed(x,y)$.
We first prove Item (1). Imagine we start with $x,y$ and perform $S$ to make them equal. Let $(i_t,j_t)$ be the edge $(i,j)$ after performing $t$ edits, then $i_k=j_k$. Note that each edit can change the difference of $i_t,j_t$ by at most $1$, hence $|i-j|=|i_0-j_0|\le k$.

Now we turn to Item (2). Let $U$ be the set of matched edges in $x[u'..u],y[v'..v]$ projected on $y$, then 
$$
U=\cbra{j\in[v'..v]\mid (j,i_j)\text{ is a matched edge for some $i_j$ and }i_j\in[u'..u]}.
$$
Assume without loss of generality $u\ge v$ (or equivalently $u'\ge v'$).
Since $\Mcal(S)$ is non-intersecting, it suffices to prove $|U|\ge L-3\cdot k-(u-v)$.
By Item (1), for any $u'+k\le j\le v-k$, if $(j,i_j)$ is a matched edge, then $i_j\in[u'..u]$. 
On the other hand, there are at most $k$ characters on $y$ that are not covered by $\Mcal(S)$. Hence
\begin{equation*}
|U|\ge\pbra{(v-k)-(u'+k)+1}-k=(u-u'+1)-3\cdot k-(u-v)=L-3\cdot k-(u-v).\tag*{\qedhere}
\end{equation*}
\end{proof}

\section[Proof of Lemma 2.13]{Proof of \Cref{lem:CGK_random_walk}}
\label{app:CGK_random_walk}

We will use the following versions of optional stopping theorem and Borel-Cantelli theorem.

\begin{theorem}[Optional stopping theorem, e.g.\ {\cite[Section 12.5]{GrimmettS20}}]\label{thm:OST}
Let $X=(X_t)_{t\in\Nbb}$ be a discrete-time martingale and $T$ is a stopping time with values in $\Nbb$. If $T$ is almost surely bounded, then $\E\sbra{X_T}=\E\sbra{X_0}$.
\end{theorem}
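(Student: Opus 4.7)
The plan is to prove the Optional Stopping Theorem by reducing it, via almost-sure boundedness, to a finite telescoping sum of martingale increments and then using the defining property of a martingale on each term. First I would use the hypothesis that $T$ is almost surely bounded to fix a deterministic integer $N \in \Nbb$ with $\Pr[T \le N] = 1$; this truncation turns every expectation in sight into a finite sum and removes the need for any convergence/uniform-integrability argument.

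Next I would write the telescoping identity
\begin{equation*}
X_T - X_0 \;=\; \sum_{t=1}^{T}(X_t - X_{t-1}) \;=\; \sum_{t=1}^{N}(X_t - X_{t-1})\cdot \mathbbm{1}_{\cbra{T\ge t}},
\end{equation*}
which is valid pointwise on the full-probability event $\cbra{T\le N}$. The crucial observation is that $\cbra{T\ge t} = \cbra{T\le t-1}^{c}$ is $\Fcal_{t-1}$-measurable, since $T$ is a stopping time with respect to the filtration $\pbra{\Fcal_t}_{t\in\Nbb}$ to which $X$ is adapted. Taking expectations and applying the tower property,
\begin{equation*}
\E\sbra{(X_t - X_{t-1})\cdot \mathbbm{1}_{\cbra{T\ge t}}} \;=\; \E\sbra{\mathbbm{1}_{\cbra{T\ge t}}\cdot \E\sbra{X_t - X_{t-1}\mid \Fcal_{t-1}}} \;=\; 0
\end{equation*}
by the martingale property $\E[X_t \mid \Fcal_{t-1}] = X_{t-1}$. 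Summing over $t = 1,\ldots,N$ yields $\E[X_T - X_0] = 0$, which is the desired equality.

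The integrability of each term is not an obstacle because, by the martingale assumption, each $X_t$ is integrable and the sum has only $N$ terms. The main potential subtlety — and the only place where care is needed — is that the statement in the excerpt does not explicitly mention a filtration. To handle this, I would either (i) read ``martingale'' as implicitly coming equipped with a filtration $\Fcal_t$ with respect to which $T$ is a stopping time, or (ii) default to the natural filtration $\Fcal_t = \sigma(X_0,\ldots,X_t)$ and verify that the standard definition of stopping time ($\cbra{T=t}\in \Fcal_t$) holds in the intended interpretation. Either way, the argument above goes through verbatim, so no genuinely hard step is anticipated.
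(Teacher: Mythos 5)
Your proof is correct: the truncation at a deterministic $N$, the telescoping identity $X_T-X_0=\sum_{t=1}^{N}(X_t-X_{t-1})\mathbbm{1}_{\cbra{T\ge t}}$, the observation that $\cbra{T\ge t}\in\Fcal_{t-1}$, and the tower property together give exactly the standard argument for the bounded-stopping-time case. The paper does not prove this statement at all — it imports it as a black box from Grimmett--Stirzaker — and your argument is precisely the textbook proof being cited, so there is nothing to reconcile.
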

\begin{theorem}[Borel-Cantelli theorem, e.g.\ {\cite[Section 7.3]{GrimmettS20}}]\label{thm:Borel-Cantelli}
Let $T$ be a non-negative random variable. If
$$
\sum_{i=0}^{+\infty}\Pr\sbra{T>i}<+\infty,
$$
then $T$ is almost surely bounded.
\end{theorem}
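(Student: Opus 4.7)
The plan is to show that the hypothesis $\sum_{i=0}^{\infty}\Pr[T>i]<\infty$ forces $\Pr[T=\infty]=0$, which is the content of ``$T$ is almost surely bounded'' (i.e.\ finite with probability one; as a random example like $\Pr[T=i]\propto 1/i^3$ shows, strict deterministic boundedness cannot be concluded). The proof breaks into two elementary steps, neither of which presents a real obstacle.

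First I would observe that the events $A_i:=\{T>i\}$ are nested decreasing in $i$, since $\{T>i+1\}\subseteq\{T>i\}$, and that their intersection is exactly $\{T=\infty\}=\bigcap_{i=0}^{\infty}A_i$. By continuity of probability measures from above,
$$
\Pr[T=\infty]\;=\;\Pr\!\left[\bigcap_{i=0}^{\infty}A_i\right]\;=\;\lim_{i\to\infty}\Pr[A_i].
$$
Second, convergence of the series $\sum_{i=0}^{\infty}\Pr[T>i]$ forces the general term to vanish, so $\lim_{i\to\infty}\Pr[T>i]=0$. Combining the two displays yields $\Pr[T=\infty]=0$, i.e.\ $T<\infty$ almost surely, as required.

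As a sanity check (and an alternative route), one can use the ``layer-cake'' identity: for a non-negative integer random variable one has $\E[T]=\sum_{i=0}^{\infty}\Pr[T>i]$ by Tonelli, which is finite by hypothesis, so $T<\infty$ a.s.\ trivially; for a general non-negative real $T$ the same argument applied to $\lfloor T\rfloor$ suffices since $T<\infty$ iff $\lfloor T\rfloor<\infty$. Equivalently, defining $N:=\sum_{i=0}^{\infty}\mathbb{1}_{\{T>i\}}$ gives $\E[N]<\infty$, hence $N<\infty$ a.s., and $N\ge T-1$ pointwise yields the conclusion. The main ``obstacle'' is really just semantic: interpreting ``bounded'' as ``finite-valued with probability one'', which is the reading consistent with the paper's use of this theorem to discharge the almost-sure finiteness hypothesis of the optional stopping theorem.
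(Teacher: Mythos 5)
Your proposal is correct. Note, however, that the paper does not prove this statement at all: it is imported as a textbook fact (cited to Grimmett--Stirzaker), so there is no internal proof to compare against. Your direct argument is a complete and elementary derivation of what the citation is meant to supply: the events $\{T>i\}$ decrease to $\{T=\infty\}$, continuity from above gives $\Pr[T=\infty]=\lim_i\Pr[T>i]$, and the terms of a convergent series tend to $0$; this is exactly what the first Borel--Cantelli lemma yields when specialized to a decreasing sequence of events, and your layer-cake variant ($\E[\lfloor T\rfloor]\le\sum_i\Pr[T>i]<\infty$, hence $T<\infty$ a.s.) even gives the stronger conclusion $\E[\lfloor T\rfloor]<\infty$, which is what the paper actually exploits downstream. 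Your semantic point is also well taken: under the stated hypothesis one can only conclude that $T$ is almost surely \emph{finite}, not bounded by a deterministic constant (your example with $\Pr[T=i]\propto i^{-3}$, or a geometric $T$, shows this), and that is the reading consistent with how the paper invokes \Cref{thm:Borel-Cantelli} in the proof of \Cref{lem:CGK_random_walk}; any residual tension with the literal hypothesis of the optional stopping theorem as stated there is an imprecision of the paper, not a gap in your argument.
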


Now we prove \Cref{lem:CGK_random_walk}.

\begin{lemma*}[\Cref{lem:CGK_random_walk} restated]
Consider an $\infty$-step CGK random walk $\lambda$ on $x,y$, where $p,q$ are the pointers on $x,y$ respectively. Let $u$ be an index and let $U,V\ge u-1$ be any integers. Then the following hold.
\begin{enumerate}[label=(\arabic*)]
\item Let $T_0$ be the first time that $p_{T_0}\ge u$. Then $\E\sbra{\abs{p_{T_0}-q_{T_0}}}\le 4\cdot\ed(x[1..U],y[1..V])$.
\item Let $T_1$ be the first time that $(p_{T_1}\ge u)\land(q_{T_1}\ge u)$. Then $\E\sbra{\abs{p_{T_1}-q_{T_1}}}\le 4\cdot\ed(x[1..U],y[1..V])$.
\end{enumerate}
\end{lemma*}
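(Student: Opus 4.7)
The plan is to apply the optional stopping theorem (OST) to the martingale $\Delta_t := p_t - q_t$, together with its quadratic variant $\Delta_t^2 - N_t/2$, where $N_t$ counts the progress steps in $[0,t)$. Both are martingales: by \Cref{rmk:progress_step_and_random_walk}, each progress step changes $\Delta_t$ by $\pm 1$ with probability $1/4$ each (and $0$ with probability $1/2$), yielding zero conditional mean and conditional second moment $1/2$; non-progress steps leave $\Delta_t$ unchanged. The first step of the proof is to show $T_0$ and $T_1$ are almost surely finite with finite expectations: since $p$ advances with probability exactly $1/2$ per step (marginally, because $r(t, x[p_t])$ is a fresh uniform bit), $T_0$ is stochastically dominated by twice a negative binomial random variable, so $\Pr[T_0 > i]$ decays exponentially once $i \gg u$ and $\sum_i \Pr[T_0 > i] < \infty$. \Cref{thm:Borel-Cantelli} then gives $T_0 < \infty$ almost surely with $\E[T_0] < \infty$; the same argument handles $T_1$.

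With OST (\Cref{thm:OST}) applied to the truncated stopping times $\min(T_0, N)$ and passing to the limit via dominated convergence — which is legal since $\E[T_0] < \infty$ and $|\Delta_{t+1} - \Delta_t| \le 1$ — I obtain $\E[\Delta_{T_0}] = 0$ and $\E[\Delta_{T_0}^2] = \E[N_{T_0}]/2$. By Cauchy-Schwarz, $\E[|\Delta_{T_0}|] \le \sqrt{\E[N_{T_0}]/2}$, so it suffices to bound $\E[N_{T_0}] \le 32 \cdot \ed(x[1..U], y[1..V])^2$.

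To bound $\E[N_{T_0}]$, I couple $\lambda$ with an auxiliary CGK walk on $z := \LCS(x[1..U], y[1..V])$ using the same randomness, yielding a pointer $w$ on $z$. Any progress step of $\lambda$ forces $x[p_t] \ne z[w_t]$ or $y[q_t] \ne z[w_t]$, since otherwise $x[p_t] = y[q_t]$, contradicting the fact that this is a progress step; hence $N_{T_0} \le N^{x,z}_{T_0} + N^{y,z}_{T_0}$, the counts of progress steps in the component walks $\lambda_{x,z}$ and $\lambda_{y,z}$. Setting $a := U - |z|$ and $b := V - |z|$, we have $a + b \le \ed'(x[1..U], y[1..V]) \le 2 \cdot \ed(x[1..U], y[1..V])$. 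Adapting the analysis of \Cref{app:simpler_analysis_of_CGK}, the quantity $p_t - w_t$ executes a one-dimensional unbiased self-looped random walk (transitioning exactly at progress steps of $\lambda_{x,z}$) confined by the insertion structure between $z$ and $x[1..U]$, and \Cref{thm:steps_in_random_walk} yields $\E[N^{x,z}_{T_0}] = O(a^2)$ and symmetrically $\E[N^{y,z}_{T_0}] = O(b^2)$. Summing gives $\E[N_{T_0}] \le O((a+b)^2) \le O(\ed^2)$ with constants tight enough for the Cauchy-Schwarz chain to conclude $\E[|p_{T_0} - q_{T_0}|] \le 4 \cdot \ed(x[1..U], y[1..V])$. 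Item~(2) follows by applying the identical OST and coupling argument at the a.s.-finite stopping time $T_1$.

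The main obstacle will be making the $O(a^2)$ bound on $\E[N^{x,z}_{T_0}]$ rigorous: the underlying 1D random walk $p_t - w_t$ is recurrent and its first-passage expectation to a single boundary is infinite, so a hitting-time argument against only one boundary fails. The resolution is to exploit that we stop at the a.s.-finite external time $T_0$ rather than at a hitting time of the walk itself, together with the conceptual truncation from \Cref{app:simpler_analysis_of_CGK}: extending the walk beyond when the pointers leave the strings only increases the progress-step count, and we may continue until $p_t - w_t$ hits $a$, at which point the two-boundary bound of \Cref{thm:steps_in_random_walk} yields the required $O(a^2)$ expected number of transitions.
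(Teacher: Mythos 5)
Your plan diverges from the paper's proof in a way that introduces a genuine, unfixable gap. The paper never bounds the expected number of progress steps $\E[N_{T_0}]$. Instead, it couples the walk with a walk $w$ on $z=\LCS(x[1..U],y[1..V])$ and observes that $p_t-w_t$ is a martingale which, crucially, is \emph{bounded above} by the number of insertions passed, hence $p_T-w_T\le k$ (where $T$ is the first time either pointer reaches $u$, so that this bound holds on both sides). Combining $\E[p_T-w_T]=0$ with the one-sided bound gives $\E[|p_T-w_T|]=2\E[(p_T-w_T)^+]\le 2k$ directly from the first moment; then another application of OST from $T$ to $T_0$ (resp.\ $T_1$) finishes. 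Your route through the quadratic martingale $\Delta_t^2-N_t/2$ and Cauchy--Schwarz replaces this one-sided trick with a second-moment bound, and that is where it breaks.

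The needed estimate $\E[N_{T_0}]=O(\ed^2)$ is false. Take $x=c_1c_2\cdots c_n$ with distinct characters and $y=c_2\cdots c_n 0$, so $\ed(x,y)=2$, $z=c_2\cdots c_n$, and $a=U-|z|=1$. In the walk $\lambda_{x,z}$, the state $p-w$ makes a progress step iff $p-w\ne 1$; starting from $0$, the expected number of progress steps before $p-w$ first reaches $1$ is infinite for a free unbiased walk, and truncated at the a.s.-finite time $T_0\approx 2n$ one gets $\E[\min(\tau_1,T_0)]=\Theta(\sqrt{n})$, which dominates $O(a^2)=O(1)$. Your proposed fix---``continue until $p_t-w_t$ hits $a$ and invoke the two-boundary bound of \Cref{thm:steps_in_random_walk}''---does not close this: the walk $p-w$ has only an upper barrier (the insertion count), no lower barrier, and so \Cref{thm:steps_in_random_walk} with a single boundary gives infinite expectation, exactly the obstruction you yourself flagged. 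This is not a matter of sharpening constants; the heavy tail of the unbiased walk's one-sided first passage means the second moment of $\Delta_{T_0}$ genuinely scales with the string length, even though $\E[|\Delta_{T_0}|]$ stays $O(k)$. You would need the paper's observation that $p_T-w_T$ is bounded above (so $\E[|p_T-w_T|]=2\E[(p_T-w_T)^+]\le 2k$), at which point the quadratic martingale and the progress-step count become unnecessary.
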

\begin{proof}
Let $T$ be the first time that $(p_T\ge u)\lor(q_T\ge u)$.
We first verify $T,T_0,T_1$ are almost surely bounded. 
By Chernoff's bound, we know for any $i\ge6\cdot u$,
$$
\Pr\sbra{T>i},~\Pr\sbra{T_0>i},~\Pr\sbra{T_1>i}\le e^{-\Omega(i)}.
$$
Hence 
$$
\sum_{i=0}^{+\infty}\Pr\sbra{T>i}\le 6\cdot u+\sum_{i=6\cdot u}^{+\infty}e^{-\Omega(i)}<+\infty.
$$
The same calculation holds for $T_0,T_1$. Therefore, by \Cref{thm:Borel-Cantelli} they are almost surely bounded. 

Let $k=\ed(x[1..U],y[1..V])$ and let $z$ be the longest common subsequence of $x[1..U],y[1..V]$. Hence $x[1..U]$ (and $y[1..V]$) can be obtained from $z$ by at most $k$ insertions. In particular, the length of $z$ is at least $U-k\ge u-k-1$.

We perform a CGK random walk on $z$ with pointer $w$ using the same randomness.

\begin{claim}\label{clm:CGK_random_walk_app_1}
$\E\sbra{\abs{p_T-q_T}}\le4\cdot k$.
\end{claim}
\begin{proof}
By \Cref{thm:OST}, $\E\sbra{p_T-w_T}=0$.
Observe that $p_T\le u$ and $p_{T-1}\le u-1$, hence $p_T-w_T\le k$ and
$$
\E\sbra{\abs{p_T-w_T}}=\E\sbra{\abs{p_T-w_T}+(p_T-w_T)}=2\cdot\E\sbra{\max\cbra{p_T-w_T,0}}\le2\cdot k.
$$
Similarly $\E\sbra{\abs{q_T-w_T}}\le2\cdot k$. Hence 
\begin{equation*}
\E\sbra{\abs{p_T-q_T}}\le\E\sbra{\abs{p_T-w_T}}+\E\sbra{\abs{q_T-w_T}}\le4\cdot k.\tag*{\qedhere}
\end{equation*}
\end{proof}

We now prove Item (1). Observe that if $p_T\ge u$, then $T_0=T$.
Otherwise $p_T<u=q_T$ and $T_0>T$, which means $q_{T_0}\ge p_{T_0}$. By \Cref{thm:OST},
$$
\E\sbra{\abs{p_{T_0}-q_{T_0}}\big| p_T,q_T}=\E\sbra{q_{T_0}-p_{T_0}\big| p_T,q_T}=q_T-p_T=\abs{p_T-q_T}.
$$
Hence, $\E\sbra{\abs{p_{T_0}-q_{T_0}}}=\E\sbra{\abs{p_T-q_T}}\le 4\cdot k$.

Let $T_0'$ be the first time that $q_{T_0'}\ge u$. Then by symmetry $|p_{T_0'}-q_{T_0'}|$ shares the same bound as $|p_{T_0}-q_{T_0}|$. Hence Item (2) follows from $T_1=\max\cbra{T_0,T_0'}$.
\end{proof}

\end{document}